\newif\ifcomments
\definecolor{linkblue}{HTML}{001487}
  \newcommand{\jon}[1]{{\color{blue}\bf (Jonathan: #1)}}
  \newcommand{\era}[1]{\textit{\textcolor{cyan}{(#1 --ERA)}}}
  \newcommand{\jon}[1]{}
  \newcommand{\era}[1]{\textit{\textcolor{cyan}{}}}
\theoremstyle{plain}
\newtheorem{theorem}{Theorem}[section]
\newtheorem*{theorem*}{Theorem}
\newtheorem{lemma}[theorem]{Lemma}
\Crefname{claim}{Claim}{Claims}
\newtheorem*{lemma*}{Lemma}
\newtheorem{corollary}[theorem]{Corollary}
\newtheorem*{corollary*}{Corollary}
\newtheorem{proposition}[theorem]{Proposition}
\newtheorem{question}[theorem]{Question}
\theoremstyle{definition}
\newtheorem{definition}[theorem]{Definition}
\newtheorem{assumption}[theorem]{Assumption}
\theoremstyle{remark}
\newtheorem{remark}[theorem]{Remark}
\numberwithin{equation}{section}
\renewcommand{\bm}[1]{\boldsymbol{#1}}
\newcommand{\bounds}[2]{\bigg\rvert_{#1}^{#2}}
\newcommand{\boudns}[2]{\bounds} 
\newcommand{\e}{\epsilon}
\newcommand{\checkwt}{\text{wt}_{\text{c}}}
\newcommand{\bitwt}{\text{wt}_{\text{b}}}
\newcommand{\ce}{\mathrm{e}}
\newcommand{\ci}{\mathrm{i}}
\newcommand{\cpi}{\mathrm{\pi}}
\DeclareMathOperator{\Tr}{Tr}
\begin{abstract}
Quantum algorithms are believed to offer advantages in solving certain hard discrete optimization problems, yet identifying when such advantages persist in explicit distributions of problem instances remains a foundational challenge.
Recently, a new quantum algorithm known as Decoded Quantum Interferometry (DQI) has been proposed to solve optimization problems by  decoding a corresponding LDPC error-correcting code.
Although DQI exhibits quantum advantage on certain structured problem instances, the possibility for advantage on random, unstructured problem instances is less well-understood.
Here we prove that, assuming decoding threshold upper bounds satisfied by state-of-the-art decoders, DQI is asymptotically obstructed by a spin glass phase transition in random local combinatorial optimization problems. This phase transition is heralded by the onset of the overlap gap property (OGP), a topological fragmentation of the near-optimal solution space widely conjectured to \emph{exactly} characterize the asymptotic performance of optimal efficient classical algorithms.
Our results therefore indicate that DQI, applied on the best known efficient decoders, is unlikely to exhibit quantum advantage on unstructured problem instances.
We support this result by proving that approximate message passing, a classical optimization algorithm, outperforms DQI on certain problem distributions.
\end{abstract}
\begin{document}

\title{
Spin Glass Transitions Obstruct Decoded Quantum Interferometry
}

\author{Eric R.\ Anschuetz}
\affiliation{Institute for Quantum Information and Matter, Caltech, 1200 E. California Blvd., Pasadena, CA 91125, USA}
\affiliation{Walter Burke Institute for Theoretical Physics, Caltech, 1200 E. California Blvd., Pasadena, CA 91125, USA}
\altaffiliation{Authors listed alphabetically}
\email{eans@caltech.edu}
\author{David Gamarnik}
\affiliation{Sloan School of Management, MIT, 100 Main St., Cambridge, MA 02142, USA}
\author{Jonathan Z.\ Lu}
\affiliation{Department of Mathematics, MIT, 77 Massachusetts Ave., Cambridge, MA 02139, USA}
\email{lujz@mit.edu}

\maketitle

\section{Introduction} \label{sec:intro}

\begin{figure*}[t!]
    \centering
    \includegraphics[width=\linewidth]{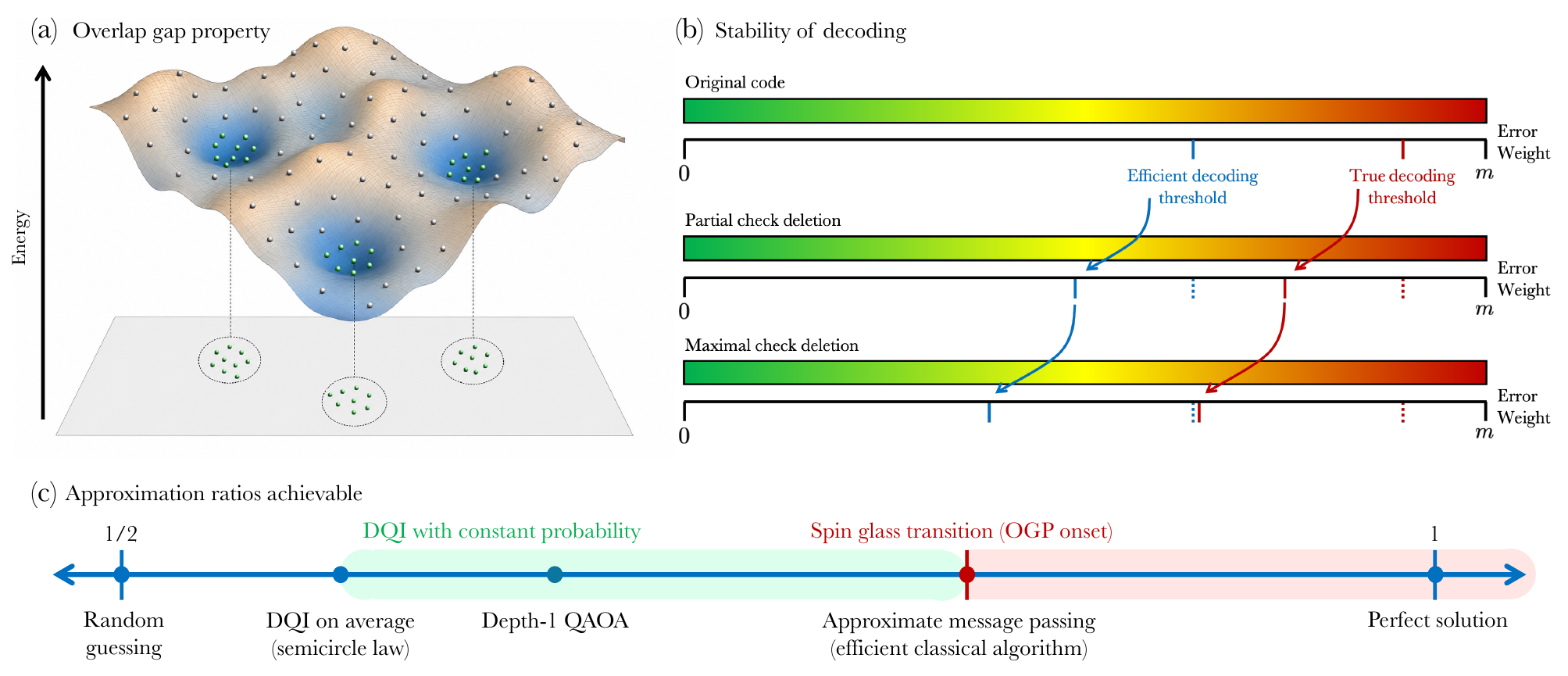}
    \caption{
    \textbf{(a)} The \emph{overlap gap property} characterizes spin glass phase transitions in combinatorial optimization problems, and can be viewed as the discrete analogue of many well-separated local optima in continuous optimization landscapes~\cite{gamarnik2023shatteringisingpurepspin,jones_et_al}. 
    We give the first proof that \textsc{MAX-$k$-XOR-SAT} exhibits an overlap gap property at constant $k$. 
    \textbf{(b)} We show that Decoded Quantum Interferometry (DQI) is unable to sample from low-energy configurations of \text{MAX-$k$-XOR-SAT} by connecting the performance of DQI to a property of LDPC codes that we call \emph{local restrictability}: good LDPC code ensembles which remain good when removing large fractions of checks. 
    If the best computationally efficient decoders have a threshold substantially below the true threshold, then we may remove a large fraction of checks without decreasing the threshold below that of computationally efficient decoders in the original code.
    This induces an effective locality which implies that DQI with an efficient decoder is unable to sample solutions in the spin glass phase with non-negligible probability, as the algorithm gets ``stuck'' in local optima. 
    \textbf{(c)} Comparison of the best approximation rates achievable on \textsc{MAX-$k$-XOR-SAT}; we show that DQI with efficient decoders lies in the green bubble, while a classical algorithm known as approximate message passing is known to exactly optimize to the spin glass phase transition on certain ensembles of problem instances.
    }
    \label{fig:ogp}
\end{figure*}

Quantum algorithms which are known or believed to give significant speedups, such as factoring integers or solving certain systems of equations, generally have significant and rigid algebraic structure~\cite{shor1999polynomial,gyurik2024quantum,berry2024analyzing,harrow2009quantum}.
At the same time, a tantalizing---but, as of yet not well understood---potential application for quantum algorithms is in \emph{discrete optimization}. 
In this setting, one is interested in computing
\begin{equation} \label{eq:approximate_optimization}
    \bm{z^\ast} = \underset{\bm{z}\in\left\{0,1\right\}^n}{\operatorname{argmax}} \; g\left(\bm{z}\right)
\end{equation}
given some objective function $g : \mathbb{F}_2^n \to \mathbb{R}$.

While complexity theory suggests that---at least in the worst case---optimizing a generic objective function $g$ to a sufficiently large constant approximation ratio is \textsc{NP}-hard, there is nevertheless hope of a \emph{quantum-classical approximation gap}, i.e., a setting in which the best approximation ratio $g_{\text{classical}}^*$ achievable by efficient classical algorithms is \emph{strictly smaller} than the best approximation ratio $g_{\text{quantum}}^*$ achievable by efficient quantum algorithms.
This possibility is particularly attractive in the ``average case,'' wherein the function $g$ is drawn from certain natural ensembles of problem instances. 
A priori, the existence of such a gap is not immediately ruled out by known results in complexity theory. It is a natural question, then, whether or not there are natural settings in which quantum algorithms yield average-case advantages in optimization over classical algorithms.

In recent years, the classical average-case hardness of this task has been studied using tools from a surprising place: statistical physics. This connection can be made more apparent by treating the function to be optimized, $g\left(\bm{z}\right)$, drawn from an ensemble of problem instances as a disordered classical spin system. The average-case classical hardness of optimizing $g\left(\bm{z}\right)$ beyond a value $g_c^\ast$ over this ensemble is widely conjectured to exactly coincide with the presence of a spin phase transition in the associated disordered system~\cite{doi:10.1073/pnas.2108492118,chen2019,gamarnik2022algorithmsbarrierssymmetricbinary,10.1214/23-AAP1953,10.1002/cpa.22222,cheairi2024algorithmicuniversalitylowdegreepolynomials}. Formally, this spin glass phase transition is defined by the onset of what is known as the \emph{overlap gap property} at a function value $g_{\mathrm{OGP}}^\ast$: a shattering of the space of near-optimal solutions into many well-separated clusters, an illustration of which is given in Fig.~\ref{fig:ogp}(a). The onset of this topological feature is rigorously known to inhibit classical algorithms which are Lipschitz functions of their inputs from preparing near-optimal solutions, and is also rigorously known to be a tight characterization of the value achieved by such algorithms in many settings~\cite{10.1002/cpa.22222,cheairi2024algorithmicuniversalitylowdegreepolynomials}. The onset of the OGP also coincides with the approximation ratios achieved by the best-known polynomial-time algorithms for random $k$-SAT instances, maximum independent set, and in the optimization of mean-field models, and for this reason is generally conjectured to be a tight characterization of the function value achieved by computationally efficient classical algorithms. We refer the interested reader to~\cite{doi:10.1073/pnas.2108492118} for more examples of problems for which the OGP is known to be tight with polynomial-time classical algorithms.

Conditioned on this well-established conjecture, then, the question of average-case quantum advantage for discrete optimization becomes:
\begin{question}
    For a given combinatorial optimization problem, is there a quantum algorithm that overcomes the spin glass (OGP) barrier?
\end{question}

Due to the ubiquity and widespread impact of combinatorial optimization problems across a plethora of fields and applications, much effort has been devoted to the development of quantum algorithms which can efficiently find solutions $\bm{z^\ast}$ whose achieved function value can exceed $g_{\mathrm{OGP}}^\ast$~\cite{farhi2014quantumapproximateoptimizationalgorithm,cerezo2021variational}.
Unfortunately, it is now known that many of these algorithms are unable to overcome the spin glass barrier~\cite{farhi2020quantumapproximateoptimizationalgorithm,anschuetz2022critical,anschuetzkiani2022,9996946,Anshu2023concentrationbounds,chen2023localalgorithmsfailurelogdepth,goh2025overlapgappropertylimits,anschuetz2025unified,anschuetz2025efficientlearningimpliesquantum}; for this reason, it seems unlikely that these will exhibit any quantum advantage for typical, unstructured optimization problems.

However, a new quantum algorithm for optimization has recently been developed that is fundamentally different from the others: \emph{Decoded Quantum Interferometry} (DQI), a quantum algorithm for \textsc{MAX-$k$-XOR-SAT}~\cite{jordan2025optimizationdecodedquantuminterferometry}. In \textsc{MAX-$k$-XOR-SAT}, one is given $m$ linear equations in $\mathbb{F}_2$ over $n<m$ variables, and is asked to find a variable assignment $\bm{z}\in\mathbb{F}_2^n$ satisfying as many clauses as possible. More concretely, a problem instance is encoded as an $m\times n$, $k$-row sparse constraint matrix $\bm{B}\in\mathbb{F}_2^{m\times n}$ as well as a vector of parities $\bm{v}\in\mathbb{F}_2^m$, and one hopes to maximize over $\bm{z}\in\mathbb{F}_2^n$:
\begin{equation}
    g\left(\bm{z}\right):=m-\left\lVert\bm{B}\bm{z}\oplus\bm{v}\right\rVert_1.
\end{equation}

DQI acts as a quantum reduction from instances of the \textsc{MAX-$k$-XOR-SAT} optimization problem to the bounded distance decoding of a classical linear code with parity check matrix $\bm{B}^\intercal\in\mathbb{F}_2^{n\times m}$. Unlike previous quantum heuristics, the performance of DQI can actually be \emph{guaranteed} given a classical decoder known to efficiently decode up to $\ell$ errors in an $m$-bit code. The expected fraction of satisfied clauses is given by a semicircle law~\cite{jordan2025optimizationdecodedquantuminterferometry}:
\begin{equation}\label{eq:dqi_semicircle_law}
    \frac{\left\langle g\right\rangle_{\mathrm{DQI}}}{m}=\left(\sqrt{\frac{\ell}{2m}}+\sqrt{\frac{1}{2}\left(1-\frac{\ell}{m}\right)}\right)^2.
\end{equation}
Armed with this performance guarantee, one can search for structured distributions over \textsc{MAX-$k$-XOR-SAT} for which the corresponding decoding problem is easy in order to find problems for which $\mu_{\mathrm{DQI}}$ is large. For instance, one can consider a special case of \textsc{MAX-$k$-XOR-SAT} called optimal polynomial intersection (OPI), where the decoding problem reduces to decoding a Reed--Solomon code~\cite{reed1960polynomial}. It is conjectured that DQI exhibits a quantum advantage on OPI~\cite{jordan2025optimizationdecodedquantuminterferometry}. If true, this conjecture remains consistent with the status quo intuition that quantum advantage necessitates algebraic structure.

However, if the \textsc{MAX-$k$-XOR-SAT} instance is instead chosen randomly from an unstructured ensemble, we cannot rely on these same techniques for decoding.
A crucial question then is whether one should expect a quantum advantage in this average case. In particular,
\begin{question}\label{q:dqi_ogp}
    Can DQI overcome the spin glass (OGP) threshold for unstructured \textsc{MAX-$k$-XOR-SAT} instances?
\end{question}

\subsection{Contributions}

Here, we answer Question~\ref{q:dqi_ogp} in the negative by showing that DQI \emph{cannot} surpass the OGP barrier. Specifically, we consider a definition of Lipschitz quantum algorithms that is due to \cite{anschuetz2025efficientlearningimpliesquantum}. It is defined with respect to the \emph{quantum Wasserstein distance} $d_{W_2}$~\cite{9420734}; informally, the quantum Wasserstein distance is an ``earth-mover's'' metric in that states which differ by a $p$-qubit quantum channel differ by at most $\operatorname{O}\left(p\right)$ in quantum Wasserstein distance.
\begin{definition}[Stable quantum algorithms, informal]\label{def:stab_qas_inf}
    Let $\bm{\mathcal{A}}$ be a quantum algorithm, i.e., a map from problem instances $\bm{X}=\left(\bm{B},\bm{v}\right)$ to quantum states $\bm{\mathcal{A}}\left(\bm{X}\right)$. We say $\bm{\mathcal{A}}$ is $\left(f,L,p_{\mathrm{st}}\right)$-\emph{stable} if there exist $L$ and a subset $S_{\bm{B}}$ of cardinality $f$ such that, with probability $1-p_{\mathrm{st}}$ over $\bm{X}$ and $\bm{X'}=\left(\bm{B'},\bm{v'}\right)$,
    \begin{equation}
        d_{W_2}\left(\Tr_{S_{\bm{B}}}\left(\bm{\mathcal{A}}\left(\bm{X}\right)\right),\Tr_{S_{\bm{B}}}\left(\bm{\mathcal{A}}\left(\bm{X'}\right)\right)\right)\leq L\left\lVert\bm{v}-\bm{v'}\right\rVert_1.
    \end{equation}
\end{definition}
We show that the OGP inhibits stable quantum algorithms for \textsc{MAX-XOR-SAT}. A similar statement was shown in \cite{anschuetz2025efficientlearningimpliesquantum} for quantum spin glass problems exhibiting a certain quantum modification of the OGP; we show that stable quantum algorithms are also obstructed by the classical OGP in classical combinatorial optimization problems.
\begin{theorem}[OGP inhibits stable quantum algorithms, informal]\label{thm:ogp_inhibits_stab_qas}
    Let $\mu_{\mathrm{OGP}}$ be the satisfied fraction at which the OGP occurs for \textsc{MAX-XOR-SAT}. Then, stable quantum algorithms can achieve no better than
    $g^\ast\leq\mu_{\mathrm{OGP}}m$ satisfied clauses with any constant probability.
\end{theorem}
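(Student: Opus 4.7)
The plan is to adapt the classical interpolation argument that rules out Lipschitz algorithms under the OGP (in the style of \cite{10.1002/cpa.22222}) to the quantum setting, exploiting the operational coupling interpretation of the quantum Wasserstein distance $d_W$ to translate the stability hypothesis into a statement about measured bit strings.

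First I would construct a short discrete path between two independent instances. Fix two i.i.d.\ samples $\bm{X}^{(0)} = (\bm{B},\bm{v})$ and $\bm{X}^{(m)} = (\bm{B}',\bm{v}')$ of \textsc{MAX-$k$-XOR-SAT}, and define $\bm{X}^{(t)}$ by replacing the first $t$ clauses of $\bm{X}^{(0)}$ with those of $\bm{X}^{(m)}$. Consecutive instances satisfy $\lVert \bm{X}^{(t)} - \bm{X}^{(t-1)} \rVert_1 = O(k)$, while the marginal distribution of each $\bm{X}^{(t)}$ agrees with the original ensemble, so both the OGP/chaos property and the stability hypothesis apply at every $t$ (after a union bound).

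Next I would translate Wasserstein stability into a statement about sampled bit strings. Applying Definition~\ref{def:stab_qas_inf} at each step yields $d_W(\bm{\mathcal{A}}(\bm{X}^{(t-1)}), \bm{\mathcal{A}}(\bm{X}^{(t)})) \leq f + O(Lk)$ with high probability. Since $d_W$ on $n$ qubits upper-bounds the minimum expected Hamming distance under a classical coupling of computational-basis measurement outcomes (via monotonicity under POVMs together with the fact that $d_W$ reduces to classical $W_1$ on diagonal states), one can stitch the per-step couplings together via the triangle inequality to produce a joint law over bit strings $\bm{z}^{(0)}, \ldots, \bm{z}^{(m)}$ in which $n^{-1}\lVert \bm{z}^{(0)} \oplus \bm{z}^{(t)} \rVert_1$ is continuous in $t$, changing by at most $o(1)$ per step provided $m(f + Lk) = o(n)$.

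Finally I would close by contradiction. Suppose $\bm{\mathcal{A}}$ achieves satisfied fraction exceeding $\mu_{\mathrm{OGP}}$ with high probability on a typical instance; then by a further union bound each $\bm{z}^{(t)}$ is a near-optimum of $\bm{X}^{(t)}$. The chaos property established earlier in the paper (the coupled form of the OGP across the interpolation between $\bm{X}^{(0)}$ and $\bm{X}^{(t)}$) then implies that $n^{-1}\lVert \bm{z}^{(0)} \oplus \bm{z}^{(t)} \rVert_1$ avoids a forbidden interval $[\nu_1, \nu_2]$ with $0 < \nu_1 < \nu_2 < \tfrac12$ uniformly in $t$. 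But this quantity equals $0$ at $t=0$, concentrates near $\tfrac{1}{2}$ at $t=m$ by the independence of $\bm{X}^{(0)}$ and $\bm{X}^{(m)}$, and varies by at most $o(1)$ per step by the previous paragraph, so a discrete intermediate-value argument forces it into the forbidden region for some intermediate $t$---contradicting the chaos property.

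The main obstacle is the interplay in the second and third steps between the \emph{probabilistic} slack $f$ in Definition~\ref{def:stab_qas_inf}, the union bound over the $m$ interpolation steps, and the quantitative form of the chaos property needed for the final contradiction: the tolerance $f$ must decay fast enough in $n$ that $mf = o(1)$, and the per-step Hamming increment $f + O(Lk)$ must be strictly smaller than the chaos gap $n(\nu_2 - \nu_1)$. Pinning down this trade-off sharply---and verifying that the chaos thresholds proven in the paper have gap large enough to accommodate the $O(Lk)$ slack inherent to the Wasserstein coupling chain---is the quantitatively delicate heart of the argument; the remaining measure-theoretic bookkeeping in stitching the per-step couplings into a coherent joint law on the whole path is standard but tedious.
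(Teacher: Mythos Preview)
Your high-level plan---interpolate between instances, translate Wasserstein stability into Hamming-close sampled bit strings via optimal couplings, then run a discrete intermediate-value argument against the OGP-forbidden region---is exactly the paper's strategy, and the coupling/stitching step is handled there just as you sketch (Proposition~\ref{prop:exp_wass_dist} and Lemma~\ref{lem:stab_loc_shad_red_along_path}). But two concrete choices in your interpolation would break the quantitative conclusion.

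First, you take $m=\Theta(n)$ single-clause steps. Stability and near-optimality each hold only with probability $1-p_{\text{st}}$ (resp.\ $1-p_{\text{f}}$) per step, so union-bounding over $\Theta(n)$ steps forces $p_{\text{st}},p_{\text{f}}=o(1/n)$; the paper's formal theorem (Theorem~\ref{thm:stab_algs_fail}) instead allows these to be small \emph{constants}. The fix is to take a \emph{constant} number $Q$ of coarse steps, each resampling $m/Q$ parities at once: the Lipschitz bound gives per-step Wasserstein change $\leq L\cdot m/Q$ after tracing out the $f$ bad qubits, and the union bound over $Q=O(1)$ events is harmless. Your ``main obstacle'' paragraph flags a related worry, but the condition $m(f+Lk)=o(n)$ you write is both stronger than what the intermediate-value step actually needs and unachievable in the DQI regime $f=\epsilon n=\Theta(n)$ (Corollary~\ref{cor:DQI_stable_Gallager}). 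Second, your path varies both $\bm{B}$ and $\bm{v}$, but the paper's stability (Definition~\ref{def:stable_qas}), its OGP and chaos statements (Definitions~\ref{def:multi_ogp}--\ref{def:chaos_prop}), and the $k$-minimum Hamming semimetric $d_k$ in which the forbidden region is expressed are all formulated for instances sharing a common $\bm{B}$; the interpolation must therefore fix $\bm{B}$ and walk only in $\bm{v}$ (which also guarantees every intermediate instance has the correct marginal law, something a clause-splice of two independent Gallager matrices does not). Finally, for the $R$-OGP with $R>2$ the paper replaces the two-endpoint path by $T$ replicas and extracts a monochromatic $R$-clique via a Ramsey-type argument (Proposition~\ref{prop:m_clique}); your two-endpoint setup addresses only $R=2$.
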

We note that Theorem~\ref{thm:ogp_inhibits_stab_qas} also applies to quantum algorithms beyond DQI which were previously shown to be stable, including standard quantum algorithms such as log-depth QAOA and phase estimation~\cite{anschuetz2025efficientlearningimpliesquantum}.

We then focus on the setting where the LDPC parity check matrix $\bm{B}^\intercal$ is drawn from the standard Gallager ensemble of LDPC codes on $m$ bits with $k$-local checks and rate $r=1-n/m$. We consider the following assumption on the decoder DQI utilizes, which holds for all best-known efficient decoders for LDPC codes of which we are aware.
\begin{assumption}[DQI decoder conditions]\label{ass:dqi_dec_conds}
    The decoder satisfies one of the following two conditions:
    \begin{enumerate}
        \item The decoder is arbitrary, and is capable of correcting errors of relative weight at most:
        \begin{equation}\label{eq:rel_error_bound}
            \frac{\ell^\ast}{m}=\frac{1}{k\lambda},
        \end{equation}
        This is the best-known provably achievable decodable error weight achieved by efficient decoders on random LDPC codes~\cite{zyablov1975estimation,richardson2008modern,feldman2003using,koetter2005characterizations,ghazi2017lp}.
        We show explicitly that this bound holds for state-of-the-art exact decoders based on linear programming relaxations.
        \item The decoder satisfies a certain Lipschitz condition (Definition~\ref{def:l_inv_lip_dec}).
        We show that this class includes belief propagation (i.e., message passing) decoders in the standard density evolution model.
    \end{enumerate}
\end{assumption}
We prove that, under the Gallager ensemble of parity check matrices---a standard model for generic LDPC codes---and either of these two assumptions, DQI is stable according to Definition~\ref{def:stab_qas_inf}. In the case of the second assumption, our argument follows directly from the Lipschitzness of the utilized decoder and bypasses the choice of $\ell^*$ above. In the case of the first assumption, our argument for the stability of DQI follows from the observation that Gallager codes are what we call \emph{locally restrictable}: these are ensembles of codes which are still good codes when ignoring all but an $\epsilon$-fraction of syndromes. The stability of the DQI algorithm then follows as for any computationally efficient, nonlocal decoder utilized by the algorithm, there exists a (potentially computationally inefficient) which acts identically yet only utilizes a small fraction of the checks; see Fig.~\ref{fig:ogp}(b) for a summary. As stability depends only on the state prepared by the quantum algorithm this property is independent of the decoder actually used in the implementation of DQI, and in particular depends only on whether Assumption~\ref{ass:dqi_dec_conds} is satisfied.
\begin{theorem}[Stability of DQI, informal version of Theorems~\ref{thm:dqi_is_stable} and~\ref{thm:dqi_is_lipschitz}]
    Let $\bm{B}^\intercal$ be a parity check matrix drawn from an $\epsilon$-locally restrictable code family such as the Gallager ensemble. Assume the DQI decoder satisfies Assumption~\ref{ass:dqi_dec_conds}. Then, DQI is stable for this ensemble of \textsc{MAX-$k$-XOR-SAT} instances.
\end{theorem}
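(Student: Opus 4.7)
The plan is to show that the ideal DQI state, viewed as a function of the parity vector $\bm v$ (with $\bm B$ a sample from the Gallager ensemble), is Wasserstein-stable in the sense of Definition~\ref{def:stab_qas_inf}. I would start by decomposing the DQI preparation as $W D_{\bm v}$ acting on a $\bm v$-independent initial filter state, where $D_{\bm v}=\prod_{i=1}^m Z_i^{v_i}$ is a product of single-qubit $Z$ phases implementing the instance-specific parities on the $m$-qubit syndrome register, and $W$ is a $\bm v$-independent unitary implementing the CNOT network for $\bm B^\intercal$ followed by the inverse of the classical decoder. Because every other step of DQI is \emph{independent} of $\bm v$, for two parity vectors $\bm v,\bm v'$ differing on a set $S\subseteq[m]$ of size $d$ we immediately obtain the joint-state identity $|\Psi(\bm v')\rangle = W D_{\bm v\oplus\bm v'} W^\dagger|\Psi(\bm v)\rangle = \left(\prod_{i\in S}\tilde Z_i\right)|\Psi(\bm v)\rangle$ with $\tilde Z_i = W Z_i W^\dagger$. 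Stability is thereby reduced to showing each Heisenberg-picture operator $\tilde Z_i$ is approximately supported on a small number of qubits.

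The key technical step is to leverage $\epsilon$-local restrictability to build a punctured DQI circuit $W_T$ whose decoder uses only the parity checks indexed by some set $T\supseteq S$ of size $O(|S|/\epsilon)$. Local restrictability guarantees that the punctured code still corrects up to $\ell^\ast$ errors, so both $W$ and $W_T$ produce the same ideal DQI target state $\sum_z P(g(z;\bm v))|z\rangle$ on the message register up to trace-distance error vanishing in $m$. Crucially, $W_T$ acts as the identity on the syndrome qubits in $[m]\setminus T$, so $W_T Z_i W_T^\dagger$ is supported within $T$ for every $i\in S$. Using the elementary Wasserstein bound $d_W(\rho,U\rho U^\dagger)\le 2\,|\mathrm{supp}(U)|$ together with contraction of $d_W$ under the partial trace over the syndrome register, I would conclude $d_W(|\Psi(\bm v)\rangle,|\Psi(\bm v')\rangle)\lesssim d/\epsilon + f_{\mathrm{punct}}$, yielding Lipschitz constant $L=O(1/\epsilon)$ and slack $f=f_{\mathrm{punct}}$ as required by Definition~\ref{def:stab_qas_inf}.

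The main obstacle will be quantitatively comparing $W$ to the punctured circuit $W_T$: one must show that decoding the full code and decoding its puncturing produce the same ideal DQI state on the message register up to error vanishing in $m$. This requires careful bookkeeping of the unitary extension of the decoder and crucially uses the hypothesis $\ell\le\ell^\ast$, since for error weights below the correctable threshold the decoder is unambiguously defined and insensitive to the removal of syndromes outside its support. A secondary subtlety is verifying that local restrictability holds with $\epsilon=\Theta(1)$ with high probability over the Gallager ensemble---a combinatorial statement about the expansion of random sparse bipartite graphs, which I expect to follow from standard arguments on the minimum distance of random LDPC codes.
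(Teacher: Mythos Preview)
Your core intuition—that local restrictability lets one replace the full decoder by a restricted decoder reading only a small subset of syndromes, thereby localizing all $\bm v$-dependence—is precisely the mechanism the paper exploits. But the quantitative execution has two genuine gaps.

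First, there is a register/index confusion. The phases $\prod_i Z_i^{v_i}$ act on the $m$-qubit register holding $\ket{\bm y}$ (the \emph{error} register), while the parity checks of the code $\bm B^\intercal$ are indexed by $[n]$ (the \emph{syndrome} register holding $\ket{\bm B^\intercal\bm y}$). Your set $S\subseteq[m]$ of flipped parities and a set $T$ of retained checks therefore live in disjoint index sets; ``$T\supseteq S$'' is a type error, and the claim that ``$W_T$ acts as the identity on the syndrome qubits in $[m]\setminus T$'' conflates the two registers.

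Second, and more importantly, restrictability (Definition~\ref{def:res}) furnishes only a \emph{single fixed} subset $S_{\bm B}\subseteq[n]$ of size at most $\epsilon n$ on which the code retains distance $\geq 2\ell+1$. It does \emph{not} produce, for each perturbation of size $d=\lVert\bm v-\bm v'\rVert_1$, an adaptive set of size $O(d/\epsilon)$ that still decodes $\ell^\ast$ errors: for $d=O(1)$ that would require a linear-distance code with $O(1)$ checks, which is impossible. So your conclusion $L=O(1/\epsilon)$ with small slack $f=f_{\mathrm{punct}}$ does not follow from the stated hypothesis.

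The paper's argument is both simpler and sharper. One uses the fixed $S_{\bm B}$ for \emph{all} $\bm v$ simultaneously. The restricted decoder $\bm W_{\bm B}$ (reading only syndromes in $S_{\bm B}$) reproduces the full decoder's action \emph{exactly} on every $\ket{\bm y}\ket{\bm B^\intercal\bm y}$ with $|\bm y|\leq\ell$, so your $f_{\mathrm{punct}}$ is identically zero—there is no approximation to control. Conjugating the phase operator through this restricted decoder (Proposition~\ref{prop:commuting_steps}) shows that the entire $\bm v$-dependence of the final $n$-qubit state is carried by a unitary supported on $S_{\bm B}$: after tracing out those $\leq\epsilon n$ qubits the state is \emph{identically} independent of $\bm v$. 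This yields stability parameters $(f,L,p_{\mathrm{st}})=(\epsilon n,\,0,\,p_{\mathrm{res}})$, i.e., Lipschitz constant exactly zero rather than $O(1/\epsilon)$.
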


Next, we compute statistical properties of \textsc{MAX-$k$-XOR-SAT} at fixed $k$ and clause density $\lambda=m/n$, including the maximum achievable satisfied fraction as well as the overlap gap threshold. These are the first times these two quantities have been computed for \textsc{MAX-$k$-XOR-SAT} when $k$ is held fixed. Combining our results, we are able to show that DQI is topologically obstructed.
\begin{theorem}[DQI is topologically obstructed for \textsc{MAX-$k$-XOR-SAT}, informal version of Theorems~\ref{thm:dqi_fails_gallager} and~\ref{thm:bp_dqi_fails_ogp}]\label{thm:dqi_top_obs_intro}
    Assume the DQI decoder satisfies Assumption~\ref{ass:dqi_dec_conds}. DQI does not sample a bitstring $\bm{z}\in\mathbb{F}_2^n$ achieving a satisfied fraction more than $\mu_{\mathrm{DQI}}$, given by
    \begin{equation}
        \frac{1}{2}+\frac{1}{2\sqrt{\lambda}}\max\left(\left(1+\operatorname{o}_k\left(1\right)\right)\sqrt{\frac{2\ln\left(k\right)}{\ln\left(2\right)k}},\operatorname{\normalfont\textsc{P}}_k^{\mathrm{ALG}}\right)
    \end{equation}
    with any constant probability over both the randomness of the sampling and the distribution of problem instances. Here, $\operatorname{\normalfont\textsc{P}}_k^{\mathrm{ALG}}$ is the $k$th algorithmic Parisi constant~\cite{el2021optimization}, which we numerically fit:
    \begin{equation}
        \operatorname{\normalfont\textsc{P}}_k^{\mathrm{ALG}}\approx\sqrt{\frac{3.530\ln\left(k\right)}{k}}.
    \end{equation}
\end{theorem}
This topological obstruction limits the best performance DQI can achieve \emph{with substantial probability} and is thus a stronger upper bound than performance \emph{in expectation} given by the semicircle law.
We can compare the above bound with that of the semicircle law under the threshold from Eq.~\eqref{eq:rel_error_bound}, in the large $k$ limit:
\begin{align}
    \frac{\left\langle g\right\rangle_{\mathrm{DQI}}}{m} & = \left(\sqrt{\frac{\ell}{2m}}+\sqrt{\frac{1}{2}\left(1-\frac{\ell}{m}\right)}\right)^2 \\
    & \leq\frac{1}{2}+\left(1+\operatorname{o}_k\left(1\right)\right)\sqrt{\frac{1}{k\lambda}}.
\end{align}
Since the semicircle bound is a bound on the expected performance rather than the performance with non-negligible probability, it is smaller than $\mu_{\text{DQI}}$.
The asymptotic gap in $k$, however, suggests that DQI may not perform optimally over unstructured ensembles even among stable algorithms. Motivated by this observation, we show that a depth-1 quantum algorithm known as QAOA~\cite{farhi2014quantumapproximateoptimizationalgorithm} achieves the same $k$ scaling as $\mu_{\mathrm{DQI}}$ for \textsc{MAX-$k$-XOR-SAT}, beyond what DQI can achieve in expectation:
\begin{equation}
    \frac{\left\langle g\right\rangle_{\mathrm{QAOA}}}{m}=\frac{1}{2}+\left(1+\operatorname{o}_k\left(1\right)\right)\sqrt{\frac{\ln\left(k\right)}{4\ce k\lambda}}.
\end{equation}

Finally, we consider what is optimally achieved by classical algorithms. It is widely believed that the classical algorithm known as approximate message passing (AMP) achieves the OGP threshold for combinatorial optimization problems~\cite{doi:10.1073/pnas.2108492118,chen2019,gamarnik2022algorithmsbarrierssymmetricbinary,10.1214/23-AAP1953,10.1002/cpa.22222,cheairi2024algorithmicuniversalitylowdegreepolynomials}. Much like DQI, AMP has the nice property that its optimal performance can be computed without running the algorithm~\cite{alaoui2020algorithmicthresholdsmeanfield}. When applied to \textsc{MAX-$k$-XOR-SAT}, the SAT fraction achieved by AMP is~\cite{el2021optimization,marwaha2022boundsapproximating,cheairi2024algorithmicuniversalitylowdegreepolynomials}:
\begin{equation}
    \mu_{\mathrm{AMP}}=\frac{1}{2}+\frac{1}{2\sqrt{\lambda}}\operatorname{\normalfont\textsc{P}}_k^{\mathrm{ALG}}.
\end{equation}
Our Theorem~\ref{thm:dqi_top_obs_intro} then suggests that $\mu_{\mathrm{DQI}}\leq\mu_{\mathrm{AMP}}$,
and in particular DQI is unable to outperform AMP on typical \textsc{MAX-$k$-XOR-SAT} instances under Assumption~\ref{ass:dqi_dec_conds}. These results are summarized as Fig.~\ref{fig:ogp}(c).

\subsection{Prior Work}

The performance of DQI on combinatorial optimization problems has been studied extensively since the development of the algorithm. Simultaneously to our work, \cite{marwaha2025complexitydecodedquantuminterferometry} gave evidence that sampling from the measurement distribution of the DQI circuit is difficult classically. However, they leave open whether DQI is able to outperform classical algorithms on combinatorial optimization problems, which we address here. Soon after our results were released,~\cite{parekh2025quantumadvantagedecodedquantum} showed that DQI cannot outperform classical algorithms in optimizing \textsc{MAX-CUT}, using \textsc{MAX-CUT}-specific techniques which differ conceptually from the techniques we utilize here. Finally, \cite{anschuetz2025efficientlearningimpliesquantum} studied the average-case performance of quantum algorithms for the quantum ground state problem using topological arguments, all made with respect to a quantum earth mover's distance. The arguments made there do not apply to our setting, as we show \textsc{MAX-$k$-XOR-SAT} satisfies an OGP with respect to a new semimetric we construct, which is not directly comparable to the quantum earth mover's distance used in \cite{anschuetz2025efficientlearningimpliesquantum}. Furthermore, the techniques developed in \cite{anschuetz2025efficientlearningimpliesquantum} were not sufficient to demonstrate the stability of the DQI algorithm, which we demonstrate using new techniques and concepts---such as that of \emph{locally restrictable codes}---here.

\section{Preliminaries}\label{sec:prelims}

\subsection{\texorpdfstring{\textsc{MAX-$k$-XOR-SAT}}{MAX-k-XOR-SAT}}\label{sec:max_xor_sat_background}

We are interested in the hardness of \emph{\textsc{MAX-$k$-XOR-SAT}}. Here, the task is to satisfy as many of $m$ parity constraints over $\mathbb{F}_2$ as possible, where each parity constraint involves $k$ of $n$ variables. More concretely, a \textsc{MAX-$k$-XOR-SAT} problem is defined by a \emph{constraint matrix} $\bm{B}\in\mathbb{F}_2^{m\times n}$---assumed to be $k$-row sparse---and a list of \emph{parities} $\bm{v}\in\mathbb{F}_2^m$, and the goal is to find a bit string $\bm{z}\in\mathbb{F}_2^n$ which minimizes the number of violated clauses:
\begin{equation}
    \#\text{ violated clauses}=\left\lVert\bm{B}\bm{z}\oplus\bm{v}\right\rVert_1.
\end{equation}
We can equivalently write this as a maximization problem, where one is interested in maximizing the function:
\begin{equation}
    g\left(\bm{z}\right):=g_{\left(\bm{B},\bm{v}\right)}\left(\bm{z}\right):=m-\left\lVert\bm{B}\bm{z}\oplus\bm{v}\right\rVert_1.
\end{equation}
In what follows, we will use $\lambda:=m/n$ to denote the \emph{clause density}. Obviously, the problem is only interesting when the linear system of equations defining the problem is overdetermined, i.e., when $\lambda>1$. We also assume that $k>2$ and is even to simplify some of our later calculations.

Throughout the paper we will primarily be interested in the $n\to\infty$ limit. We will hold $\lambda$ and $k$ fixed as we take this limit. For this reason $\operatorname{O}\left(\cdot\right)$ will be used as big-O notation with respect to the variable $n$. In some situations, we wish to use big-O notation with respect to a function of variable $x$ that is distinct from $n$; in these situations, we will use the notation $\operatorname{O}_x\left(\cdot\right)$ to be maximally clear.

We will particularly be interested in the average-case hardness of \textsc{MAX-$k$-XOR-SAT} when the constraint matrix and the parities are drawn from some distribution. It is natural to consider when $\bm{v}$ is drawn uniformly at random from the hypercube (and independently from $\bm{B}$), i.e., that the entries of $\bm{v}$ are drawn from the product distribution $\bm{v}\sim\operatorname{Ber}\left(1/2\right)^{\otimes m}$, where $\operatorname{Ber}\left(1/2\right)$ is the Bernoulli distribution with probability of $1$ equal to $1/2$. We will therefore fix this distribution for $\bm{v}$, and write:
\begin{equation}\label{eq:par_dist}
    \bm{v}\sim\mathbb{P}_{\mathrm{par}}:=\operatorname{Ber}\left(1/2\right)^{\otimes m}.
\end{equation}

There are many natural choices of distribution for $\bm{B}$. In the paper in which DQI was introduced~\cite{jordan2025optimizationdecodedquantuminterferometry}, the authors studied \textsc{MAX-$k$-XOR-SAT} instances where $\bm{B}^\intercal$ was drawn from the \emph{Gallager ensemble} $\mathcal{G}\left(m,k,d\right)$~\cite{gallager2003low}, which we will describe in more detail in a moment (see Sec.~\ref{sec:ldpc_codes}). Here, $m$ denotes the number of columns of $\bm{B}^\intercal$, $k$ the column-sparsity, and $d=\lambda k$ the row-sparsity of matrices drawn from $\mathcal{G}\left(m,k,d\right)$. We will also assume the $\bm{B}^\intercal$ is drawn from $\mathcal{G}\left(m,k,d\right)$, and write:
\begin{equation}\label{eq:con_dist}
    \mathbb{P}_{\mathrm{con}}=\mathcal{G}\left(m,k,d\right)^\intercal\sim\bm{B}
\end{equation}
to denote the resulting distribution on $\bm{B}$. This together with $\mathbb{P}_{\mathrm{par}}$ define the distribution of \textsc{MAX-$k$-XOR-SAT} instances we are primary concerned with here.
\begin{definition}[Transposed Gallager ensemble of \textsc{MAX-$k$-XOR-SAT} instances]\label{def:trans_gall_ens}
    We call the product distribution:
    \begin{equation}
        \mathbb{P}_{\mathrm{G}}:=\mathbb{P}_{\mathrm{con}}\otimes\mathbb{P}_{\mathrm{par}}=\mathcal{G}\left(m,k,d\right)^\intercal\otimes\operatorname{Ber}\left(1/2\right)^{\otimes m}
    \end{equation}
    the \emph{transposed Gallager ensemble} of \textsc{MAX-$k$-XOR-SAT} instances.
\end{definition}

\subsection{The Gallager Ensemble of LDPC Codes}\label{sec:ldpc_codes}

\begin{figure}
    \centering
    \includegraphics[width=0.9\linewidth]{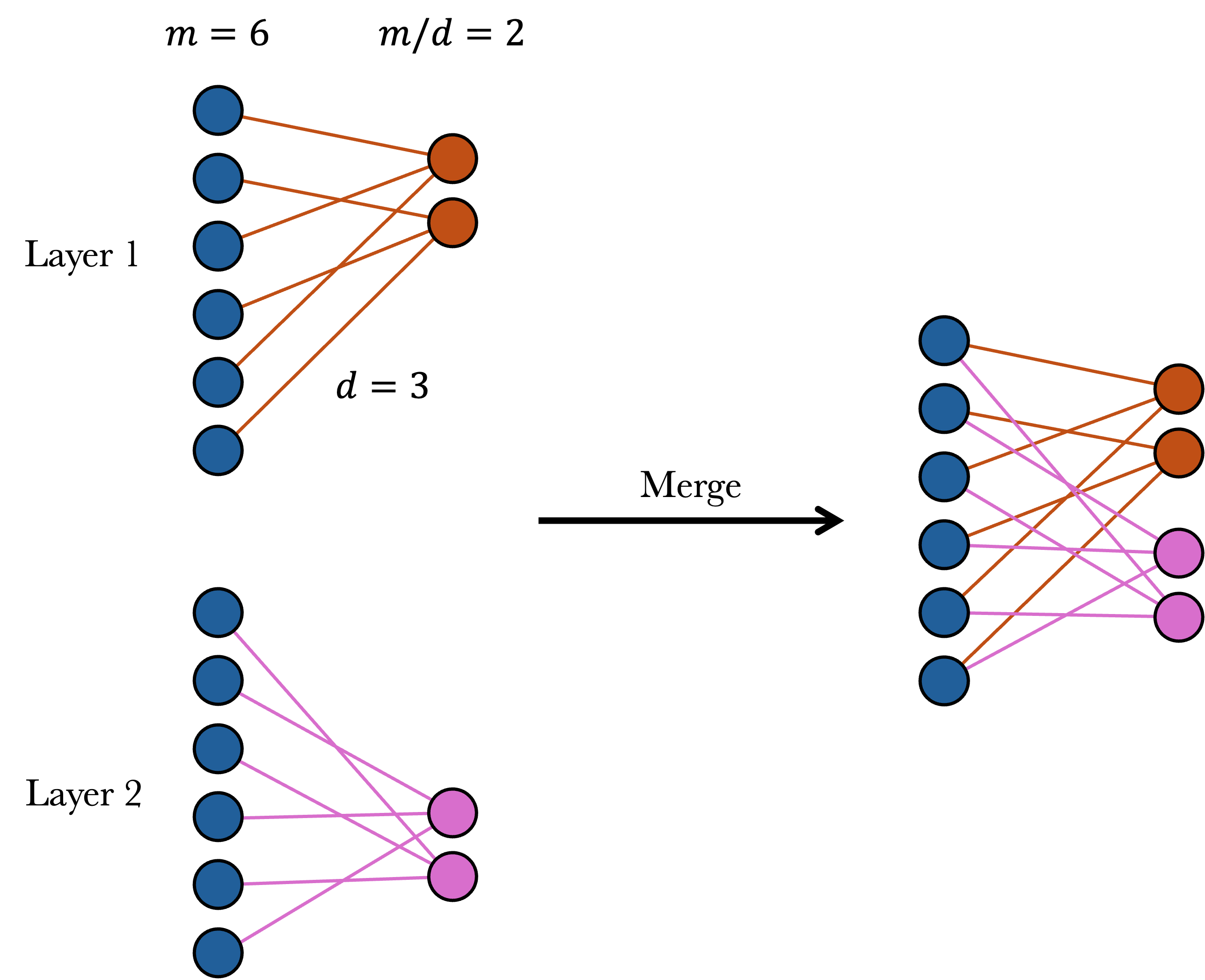}
    \caption{One step of the Gallager construction with $m = 6, d = 3$. This step is repeated for a total $k = \lambda^{-1} s$ layers, where $d$ and $\lambda$ are given parameters.}
    \label{fig:gallager}
\end{figure}

We now review both the Gallager code ensemble as well as general properties of LDPC codes. 
Define the binary matrix $\boldsymbol{H} \in \mathbb{F}_2^{n \times m}$ to be the \emph{parity check matrix} of a code $\mathcal{C}$, where \begin{align}
    \mathcal{C} = \{ \boldsymbol{v} \in \mathbb{F}_2^m \,:\, \boldsymbol{Hv} = 0 \}
\end{align}
where all algebra is done over $\mathbb{F}_2$ unless explicitly stated otherwise. 
In other words, $\mathcal{C}$ is the linear subspace of $\mathbb{F}_2^m$ given by the kernel of $\boldsymbol{H}$. 
Rows of $\boldsymbol{H}$ are known as \emph{checks}. 
We are particularly interested in codes whose parity check matrix is sparse in the following way.

\begin{definition}[Code sparsities] \label{def:code_sparsities}
    Let $\boldsymbol{H} \in \mathbb{F}_2^{n \times m}$ be a parity check matrix. Then the \emph{bit sparsity} of $\boldsymbol{H}$ and \emph{check sparsity} of $\boldsymbol{H}$ are given by \begin{align}
        \bitwt(H) := \max_{j \in [m]} \sum_{i=1}^{n} H_{ij} ,\; \checkwt(\boldsymbol{H}) := \max_{i \in [n]} \sum_{j=1}^m H_{ij} ,
    \end{align}
    where both sums are as integers (i.e., over $\mathbb{Z}$), not over $\mathbb{F}_2$.
\end{definition}
In the standard notation of coding theory, $\frac{n}{m} = \lambda^{-1} = 1 - r$, where $r$ is known as the \textit{design rate} of the family.
\begin{definition}[LDPC code] \label{def:LDPC_code}
    Let $\boldsymbol{H} \in \mathbb{F}_2^{\lambda^{-1} m \times m}$ be a parity check matrix. We say that $\boldsymbol{H}$ is a $(k, d)$-LDPC code code if \begin{align}
        \bitwt(\boldsymbol{H}) \leq k , \quad \checkwt(\boldsymbol{H}) \leq d.
    \end{align}
\end{definition}
Informally, a LDPC code satisfies the property that every check is supported on only a small number of bits, and every bit is in the support of only a small number of checks.
The Gallager ensemble with parameters $(k, d)$ is an ensemble of $(k, d)$-regular LDPC codes with particularly nice properties. 
For that reason, it is perhaps the most widely studied and used ensemble of random LDPC codes~\cite{gallager2003low}.

We now describe the Gallager ensemble $\mathcal{G}\left(m,k,d\right)$ in more detail. 
The most convenient way to define the Gallager ensemble is in the language of graphs. 
We may represent $\boldsymbol{H}$ as a bipartite graph $G = (\mathcal{L} \sqcup \mathcal{R}, \mathcal{E})$ with $m$ nodes in the left vertex set $\mathcal{L}$, $n = \lambda^{-1} m$ nodes in the right vertex set $\mathcal{R}$, and edge set $\mathcal{E} \subseteq \mathcal{L} \times \mathcal{R}$. 
The left nodes are referred to as the ``data'' nodes and the right nodes as ``syndrome'' or ``check'' nodes. 
We connect node $i \in \mathcal{L}$ to node $j \in \mathcal{R}$ if $H_{ji} = 1$. 
Such a graph is known as a \textit{Tanner graph} of $\boldsymbol{H}$ and is denoted $G(H)$. 

\begin{definition}[Regular bipartite graphs]
    A bipartite graph $G = (\mathcal{L} \sqcup \mathcal{R}, \mathcal{E})$ is (weakly) $(a, b)$-regular if $\forall i \in \mathcal{L}$, $\deg(i) = a$ ($\deg(i) \leq a$) and $\forall j \in \mathcal{R}$, $\deg(j) = b$ ($\deg(j) \leq b$).
\end{definition}

A code is $(k, d)$-LDPC if and only if its Tanner graph is weakly $(k, d)$-regular.

\begin{definition}[Bipartite graph merging] \label{def:merging_graphs}
    Let $G_1 = (\mathcal{L} \sqcup \mathcal{R}_1, \mathcal{E}_1)$ and $G_2 = (\mathcal{L} \sqcup \mathcal{R}_2, \mathcal{E}_2)$ be two bipartite graphs with the same left node set and disjoint right node sets.  The \textit{merge} of $G_1$ and $G_2$ is a graph $G = (\mathcal{L} \sqcup (\mathcal{R}_1 \cup \mathcal{R}_2), \mathcal{E}_1 \cup \mathcal{E}_2)$ given by combining the right node sets and the edges together into the same left node set.
\end{definition}
 See Fig.~\ref{fig:gallager} for a visualization of a merge. With these graph notions formulated, we now formally define the Gallager ensemble.
\begin{definition}[Gallager ensemble] \label{def:gallager_ensemble}
Let $k \geq 3$ and $d\geq k$ be constant integers.
    The Gallager ensemble $\mathcal{G}(m, k, d)$ is a distribution over parity check matrices $\boldsymbol{H} \in \mathbb{F}_2^{n \times m}$, where $n := m \frac{k}{d}$ and $m/d$ is assumed to be a positive integer, defined by a sampling process over Tanner graphs. 
    Sample $k$ uniformly random $(1, d)$-regular bipartite graphs (called \textit{layers}) $G_1, \ldots, G_k$ with $m$ left nodes and $m/d$ right nodes. The sampled code is defined to be the code whose parity check matrix corresponds to the Tanner graph obtained by merging the layers $G_1, \ldots, G_k$ in the manner of Definition~\ref{def:merging_graphs}. This graph is a $(k, d)$-regular bipartite graph with $m$ left nodes and $k\frac{m}{d} = n$ right nodes, so $\boldsymbol{H} \in \mathbb{F}_2^{n \times m}$ as needed.
\end{definition}
Here $\lambda = \frac{m}{n} = \frac{m}{m k/d} = \frac{d}{k}$.
Figure~\ref{fig:gallager} visualizes the merging of layers in a small case. The Gallager ensemble has been studied through many coding-theoretic lenses~\cite{gallager2003low,mosheiff2021low}. Note that one consequence of the requirement that $m/d\in\mathbb{N}$ is that $k$ divides $n$, which we assume throughout what follows.


\section{Stability of Decoded Quantum Interferometry}\label{sec:dqi_is_stab}

We now discuss the stability of DQI over certain code families. We emphasize that the stability of DQI holds beyond the transposed Gallager ensemble of \textsc{MAX-$k$-XOR-SAT} instances, and indeed is fairly generic; we discuss the stability of DQI over other code families in Appendix~\ref{sec:loc_rest_common_code_ens}.

Before we begin, we summarize the steps of DQI, keeping only the details that are relevant to us (for a detailed description of DQI, we refer the reader to Methods). The algorithm is parameterized by some $1\leq\ell\leq m$; the implementation of the algorithm with this choice of parameter we will denote $\operatorname{DQI}_\ell$, and takes as input a description of the problem $\left(\bm{B},\bm{v}\right)$. We refer to parts of Fig.~\ref{fig:DQI_flowchart} in the description of the algorithm which follows to aid the reader.
\begin{enumerate}
    \item A problem-independent initial state $\ket{\psi_2}$ is prepared (Fig.~\ref{fig:DQI_flowchart}, pink bubble).
    \item The gate $\bigotimes_{i=1}^m\bm{Z}_i^{v_i}$ is applied, yielding the state $\ket{\psi_3\left(\bm{v}\right)}$ (Fig.~\ref{fig:DQI_flowchart}, green bubble, first step).
    \item The isometry:
    \begin{equation}
        \bm{U}_{\bm{B}}=\sum_{\bm{y}\in\mathbb{F}_2^m}\left(\ket{\bm{y}}\otimes\ket{\bm{B}^\intercal\bm{y}}\right)\bra{\bm{y}}
    \end{equation}
    is applied, yielding the state $\ket{\psi_4\left(\bm{B},\bm{v}\right)}$ (Fig.~\ref{fig:DQI_flowchart}, green bubble, second step).
    \item A decoding operator:
    \begin{equation}
        \bm{O}_{\bm{B}}=\sum_{\bm{y}\in\mathbb{F}_2^m:\left\lvert\bm{y}\right\rvert\leq\ell}\ket{\bm{B}^\intercal\bm{y}}\left(\bra{\bm{y}}\otimes\bra{\bm{B}^\intercal\bm{y}}\right)
    \end{equation}
    is applied, yielding the state $\ket{\psi_5\left(\bm{B},\bm{v}\right)}$ (Fig.~\ref{fig:DQI_flowchart}, blue bubble, first step). Note this can only be performed efficiently if the code is efficiently decodable through $\ell$ errors.
    \item The Fourier transform $\bigotimes_{i=1}^n\bm{H}_i$ is applied, yielding the final state $\ket{\psi_6\left(\bm{B},\bm{v}\right)}$ (Fig.~\ref{fig:DQI_flowchart}, blue bubble, second step).
\end{enumerate}

Our main result is that DQI is a stable algorithm over \emph{locally restrictable} code families, which we now define.
\begin{definition}[$\left(\epsilon,\varDelta,p_{\mathrm{res}}\right)$-restrictability]\label{def:res}
    Fix $0<\epsilon\leq 1$, $1\leq\varDelta\leq m$, and $0\leq p_{\mathrm{res}}\leq 1$. If for $\bm{B}$ drawn from $\mathbb{P}_{\mathrm{con}}$ the event 
    \begin{equation}
        \{\exists S_{\bm{B}}\subseteq\left[n\right]:\left\lvert S_{\bm{B}}\right\rvert\leq\epsilon n\wedge\bm{\varPi}_{S_{\bm{B}}}\bm{B}^\intercal\in\mathcal{H}_{\geq\varDelta}\}
    \end{equation}
    has probability at least $1-p_{\mathrm{res}}$,
    we say that $\mathbb{P}_{\mathrm{con}}$ is \emph{$\left(\epsilon,\varDelta,p_{\mathrm{res}}\right)$-restrictable}. Here, $\bm{\varPi}_{S_{\bm{B}}}\in\left\{0,1\right\}^{\left\lvert S_{\bm{B}}\right\rvert\times n}$ is a projector onto the rows labeled by $S_{\bm{B}}$, and $\mathcal{H}_{\geq\varDelta}$ is the set of check matrices of codes of distance at least $\varDelta$.
\end{definition}
Informally, locally restrictable codes are those which still have lower-bounded distance ($\varDelta$) even when all but $\epsilon n$ of the syndrome bits are discarded. Our main result in this section is that $\operatorname{DQI}_\ell$ is stable (Definition~\ref{def:stab_qas_inf}, see Methods for formal definition) if $\mathbb{P}_{\mathrm{con}}$ is $\left(\epsilon,\varDelta,p_{\mathrm{res}}\right)$-restrictable and $2\ell+1\leq\varDelta$.
\begin{theorem}[DQI is stable]\label{thm:dqi_is_stable}
    Assume $\mathbb{P}_{\mathrm{con}}$ is $\left(\epsilon,\varDelta,p_{\mathrm{res}}\right)$-restrictable. Then $\operatorname{DQI}_\ell$ is $\left(\epsilon n,0,p_{\mathrm{res}}\right)$-stable if $2\ell+1\leq\varDelta$.
\end{theorem}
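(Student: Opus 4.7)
The plan is to work in the Hadamard-transformed basis, where the DQI state has the explicit Fourier form of Eq.~\eqref{eq:target-state_DQI}, and to combine restrictability with the condition $2\ell + 1 \leq \varDelta$ to argue that the partial trace over a small set of qubits eliminates all dependence on the parity vector $\bm{v}$. Since the quantum Wasserstein-$2$ distance is invariant under single-qubit unitaries such as $\bm{H}^{\otimes n}$ (Appendix~\ref{sec:quant_wass_dist}), it suffices to analyze $|\tilde{\psi}_{\bm{B},\bm{v}}\rangle := \bm{H}^{\otimes n}|\text{DQI}_\ell(f)\rangle$, whose support lives on basis states $|\bm{B}^\intercal \bm{y}\rangle$ indexed by $\bm{y} \in \mathbb{F}_2^m$ of Hamming weight at most $\ell$.

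First I would invoke $(\epsilon, \varDelta, p_{\mathrm{res}})$-restrictability to extract, with probability at least $1 - p_{\mathrm{res}}$ over $\bm{B}$, a subset $S_{\bm{B}} \subseteq [n]$ with $|S_{\bm{B}}| \leq \epsilon n$ such that $\bm{\varPi}_{S_{\bm{B}}} \bm{B}^\intercal$ is the parity check matrix of a code of distance at least $\varDelta \geq 2\ell + 1$. Because appending rows to a parity check matrix can only shrink the code, the same distance bound inherits to the full $\bm{B}^\intercal$, which forces the normalization $\mathcal{N}^2 = \sum_{k=0}^{\ell} c_k^2 \binom{m}{k}$ to be independent of $\bm{v}$.

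Next, decomposing each basis ket $|\bm{B}^\intercal \bm{y}\rangle$ across the bipartition $S_{\bm{B}} \sqcup S_{\bm{B}}^c$ and tracing out $S_{\bm{B}}$ yields
\[
\Tr_{S_{\bm{B}}}\bigl[|\tilde{\psi}_{\bm{B},\bm{v}}\rangle\langle \tilde{\psi}_{\bm{B},\bm{v}}|\bigr] = \frac{1}{\mathcal{N}^2} \sum_{\bm{y}, \bm{y}'} c_{|\bm{y}|} c_{|\bm{y}'|} (-1)^{\bm{v} \cdot (\bm{y} \oplus \bm{y}')}\, \delta_{\bm{\varPi}_{S_{\bm{B}}} \bm{B}^\intercal \bm{y},\, \bm{\varPi}_{S_{\bm{B}}} \bm{B}^\intercal \bm{y}'} |\bm{\varPi}_{S_{\bm{B}}^c} \bm{B}^\intercal \bm{y}\rangle\langle \bm{\varPi}_{S_{\bm{B}}^c} \bm{B}^\intercal \bm{y}'|,
\]
where the sum is implicitly restricted to $|\bm{y}|, |\bm{y}'| \leq \ell$. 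The Kronecker delta enforces $\bm{\varPi}_{S_{\bm{B}}} \bm{B}^\intercal (\bm{y} \oplus \bm{y}') = 0$, but $|\bm{y} \oplus \bm{y}'| \leq 2\ell < \varDelta$ together with the distance bound on the restricted check matrix forces $\bm{y} = \bm{y}'$. The $\bm{v}$-dependent phase therefore trivializes, and the reduced state collapses to a diagonal ensemble with weights $c_{|\bm{y}|}^2/\mathcal{N}^2$ that is manifestly independent of $\bm{v}$.

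Since the reduced density matrices agree exactly for any two parities $\bm{v}, \bm{v}'$ on the $1 - p_{\mathrm{res}}$ event (uniformly in the correlation parameter $\kappa$), their Wasserstein-$2$ distance vanishes, giving the claimed stability parameters $(\epsilon n, 0, p_{\mathrm{res}})$; if necessary, $S_{\bm{B}}$ may be padded with arbitrary extra indices to have size exactly $\epsilon n$ without affecting the conclusion. The main obstacle is really just conceptual rather than technical: one must recognize that restrictability is exactly the coding-theoretic ingredient that kills the off-diagonal cross-terms in the Hadamard-basis expansion. The only minor subtleties are (i) confirming Hadamard-invariance of the Wasserstein norm and (ii) checking that $\mathcal{N}$ inherits no residual $\bm{v}$-dependence from the full matrix $\bm{B}^\intercal$, both of which follow cleanly from the distance bound.
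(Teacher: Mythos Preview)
Your proof is correct and reaches the same conclusion as the paper---that $\Tr_{S_{\bm{B}}}$ of the DQI state is independent of $\bm{v}$ on the restrictability event---but via a genuinely different route. The paper argues \emph{operationally}: it reorders the algorithm so that the phasing step $\bigotimes_i \bm{Z}_i^{v_i}$ can be pushed past the decoding isometry, yielding a factorization $\bm{\rho}_3(\bm{B},\bm{v}) = \bm{V}_{\bm{B},\bm{v}}\,\tilde{\bm{\rho}}(\bm{B})\,\bm{V}_{\bm{B},\bm{v}}^\dagger$ with $\bm{V}_{\bm{B},\bm{v}}$ supported on $S_{\bm{B}}$ (Proposition~\ref{prop:commuting_steps}); tracing out $S_{\bm{B}}$ then kills $\bm{V}_{\bm{B},\bm{v}}$ by cyclicity. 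You instead compute the reduced density matrix explicitly in the Hadamard basis and observe that the Kronecker delta from the partial trace, combined with the distance bound $2\ell < \varDelta$ on the restricted code, forces $\bm{y} = \bm{y}'$ and annihilates the $\bm{v}$-dependent phase $(-1)^{\bm{v}\cdot(\bm{y}\oplus\bm{y}')}$.

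Your approach is more elementary and self-contained: it avoids the somewhat delicate argument about commuting phasing past decoding and constructing a restricted decoder $\bm{W}_{\bm{B}}$. The paper's approach, on the other hand, is more structural and makes transparent \emph{why} restrictability is the right hypothesis---the $\bm{v}$-dependence is literally a unitary localized to $S_{\bm{B}}$---which may generalize more cleanly to DQI variants over larger alphabets or with imperfect decoders. Both arguments rely on the same two facts you correctly identified as the minor subtleties: that tensor-product unitaries (here $\bm{H}^{\otimes n}$) preserve $W_2$, and that the full code inherits the distance bound so $\mathcal{N}$ is $\bm{v}$-independent.
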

In the special case of a decoder that has a certain Lipschitz continuity property---which we show in Appendix~\ref{sec:decode_weight} includes belief propagation as a special case---we can prove stability with a stronger choice of parameters.
\begin{definition}[$L$-inverse Lipschitz decoder]\label{def:l_inv_lip_dec}
    Consider a decoder map $\mathcal{D}:\mathbb{F}_2^n\to\mathbb{F}_2^m$ which is one-to-one on a co-domain $\mathcal{Y}_\varDelta:=\left\{\bm{y}\in\mathbb{F}_2^m:\left\lvert\bm{y}\right\rvert\leq\varDelta\right\}$. We say $\mathcal{D}$ is $L$-inverse Lipschitz with respect to $\mathcal{Y}_\varDelta$ if:
    \begin{equation}
        \left\lvert\mathcal{D}^{-1}\left(\bm{y}\right)-\mathcal{D}^{-1}\left(\bm{y'}\right)\right\rvert_1\leq L\left\lVert\bm{y}-\bm{y'}\right\rVert_1
    \end{equation}
    for all $\bm{y}\in\mathcal{Y}_\varDelta$.
\end{definition}
\begin{theorem}[DQI with an inverse Lipschitz decoder is Lipschitz]\label{thm:dqi_is_lipschitz}
    Assume $\operatorname{DQI}_\ell$ is implemented using an $L$-inverse Lipschitz decoder. Then $\operatorname{DQI}_\ell$ is $\left(0,L,0\right)$-stable.
\end{theorem}

The remainder of this section is structured in the following way. First, in Sec.~\ref{sec:dqi_is_stable_proof} we prove Theorems~\ref{thm:dqi_is_stable} and~\ref{thm:dqi_is_lipschitz}. Then, in Sec.~\ref{sec:res_and_sparse_codes} we prove that the Gallager ensemble is locally restrictable, and instantiate Theorem~\ref{thm:dqi_is_stable} with the local restrictability parameters of the Gallager ensemble. We also remark that in Appendix~\ref{sec:loc_rest_common_code_ens} we show that other common code families are locally restrictable, demonstrating that the stability of DQI is not unique to the Gallager ensemble.

\subsection{DQI is Stable}\label{sec:dqi_is_stable_proof}

\begin{figure*}[ht!]
    \centering
    \includegraphics[width=0.8\linewidth]{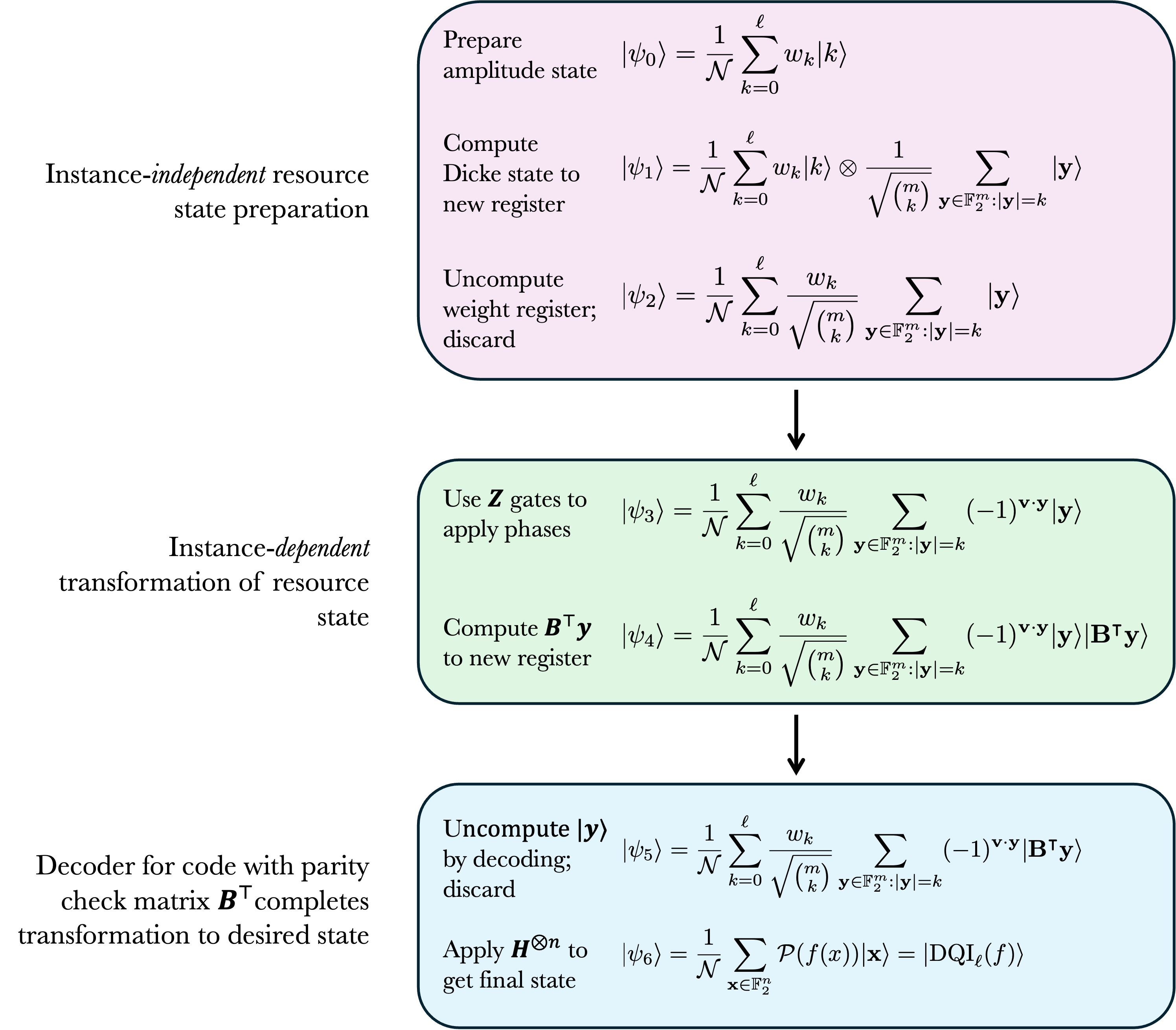}
    \caption{Workflow of the Decoded Quantum Interferometry (DQI) algorithm, which aim to produce the state $\propto \sum_{\bm{x} \in \mathbb{F}_2^n} \mathcal{P}(f(\bm{x})) \ket{\bm{x}}$, where $\mathcal{P}$ is a univariate polynomial and $f : \mathbb{F}_2^n \to \mathbb{Z}$ is an objective function.  Initially, in the pink bubble, an instance-independent resource state $\ket{\psi_2}$ is prepared. Here $w_k \in \mathbb{R}$ are constants and $1/\mathcal{N}$ is a constant that enforces normalization. In the green bubble, $\ket{\psi_2}$ is processed to encode information about the problem instance, given by $(\bm{B}, \bm{v}) \in \mathbb{F}_2^{m \times n} \times \mathbb{F}_2^{m}$. Both steps can be performed in quantum polynomial time. In the last step (blue bubble), the decoder for the code corresponding to parity check matrix $\bm{B}^\intercal$ is applied to uncompute a register. A Hadamard transform completes the transformation into the desired state $\ket{\text{DQI}_{\ell}(f)}$, where $f(\bm{x}) = m - 2\left\lVert\bm{Bx}\oplus\bm{v}\right\rVert_1$ is the difference between the number of satisfied and unsatisfied clauses.}
    \label{fig:DQI_flowchart}
\end{figure*}
We first prove Theorem~\ref{thm:dqi_is_stable}. Recall the steps of $\operatorname{DQI}_\ell$. As we are only interested in the stability of $\textsc{DQI}_\ell$ when varying $\bm{v}$, we might as well rewrite these steps in the following way.
\begin{enumerate}
    \item A phase-independent initial state $\bm{\rho}_1\left(\bm{B}\right)$ is prepared.
    \item The gate $\bigotimes_{i=1}^m\bm{Z}_i^{v_i}$ is applied to the register containing $\ket{\bm y}$, yielding the state $\bm{\rho}_2\left(\bm{B},\bm{v}\right)$.
    \item A decoding operator:
    \begin{equation}
        \bm{O}_{\bm{B}}=\sum_{\bm{y}\in\mathbb{F}_2^m:\left\lvert\bm{y}\right\rvert\leq\ell}\ket{\bm{B}^\intercal\bm{y}}\left(\bra{\bm{y}}\otimes\bra{\bm{B}^\intercal\bm{y}}\right)
    \end{equation}
    is applied, yielding the state $\bm{\rho}_3\left(\bm{B},\bm{v}\right)$.
    \item The Fourier transform $\bigotimes_{i=1}^n\bm{H}_i$ is applied, yielding the final state $\bm{\mathcal{A}}\left(\bm{B},\bm{v}\right)$.
\end{enumerate}

We claim that, conditioned on the event $\mathcal{E}_{\bm{B}}$ given by
\begin{equation}\label{eq:res_event}
    \left\{\exists S_{\bm{B}}\subseteq\left[n\right]:\left\lvert S_{\bm{B}}\right\rvert\leq\epsilon n\wedge\bm{\varPi}_{S_{\bm{B}}}\bm{B}^\intercal\in\mathcal{H}_{\geq\varDelta}\right\},
\end{equation}
where $\varPi_{S_{\bm{B}}}$ and $\mathcal{H}_{\geq\varDelta}$ are defined in Definition~\ref{def:res}, one can commute step 2 after step 3 via the application of a single operator with bounded locality.
\begin{proposition}[Commuting decoding and phasing]\label{prop:commuting_steps}
    Condition on the event $\mathcal{E}_{\bm{B}}$ (from Eq.~\eqref{eq:res_event}) occurring. 
    If $2\ell+1\leq\varDelta$, for all $\bm{v}$ there exists a state $\bm{\tilde{\rho}}\left(\bm{B}\right)$ and a unitary operator $\bm{V}_{\bm{B},\bm{v}}$ supported only on $S_{\bm{B}}$ (also from Eq.~\eqref{eq:res_event}), such that
    \begin{equation}
        \bm{\rho}_3\left(\bm{B},\bm{v}\right)=\bm{V}_{\bm{B},\bm{v}}\bm{\tilde{\rho}}\left(\bm{B}\right)\bm{V}_{\bm{B},\bm{v}}^\dagger.
    \end{equation}
\end{proposition}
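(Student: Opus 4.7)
The plan is to realize the Z-basis phase operator $\bm{P}_{\bm{v}} := \bigotimes_{i=1}^m \bm{Z}_i^{v_i}$, which is the sole $\bm{v}$-dependence in the step from $\bm{\rho}_1(\bm{B})$ to $\bm{\rho}_2(\bm{B},\bm{v})$, as a diagonal unitary on the $n$-qubit output register by pushing it through the decoder $\bm{O}_{\bm{B}}$. A direct computation from the definition of $\bm{O}_{\bm{B}}$ gives
\begin{equation}
    \bm{O}_{\bm{B}} \bm{P}_{\bm{v}} \left( \ket{\bm{y}} \otimes \ket{\bm{B}^\intercal \bm{y}} \right) = (-1)^{\bm{v} \cdot \bm{y}} \ket{\bm{B}^\intercal \bm{y}}
\end{equation}
for every $\bm{y}$ with $\left\lvert\bm{y}\right\rvert \leq \ell$. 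It therefore suffices to exhibit a diagonal unitary $\bm{V}_{\bm{B},\bm{v}} = \sum_{\bm{s}\in\mathbb{F}_2^n} (-1)^{\phi_{\bm{v}}(\bm{s})} \ket{\bm{s}}\bra{\bm{s}}$ whose phase function $\phi_{\bm{v}} : \mathbb{F}_2^n \to \mathbb{F}_2$ satisfies $\phi_{\bm{v}}(\bm{B}^\intercal \bm{y}) = \bm{v} \cdot \bm{y}$ for every $\left\lvert\bm{y}\right\rvert \leq \ell$ and depends only on the coordinates of $\bm{s}$ indexed by $S_{\bm{B}}$. Setting $\bm{\tilde{\rho}}(\bm{B}) := \bm{O}_{\bm{B}} \bm{\rho}_1(\bm{B}) \bm{O}_{\bm{B}}^\dagger$, which is manifestly $\bm{v}$-independent, then gives the claimed factorization $\bm{\rho}_3(\bm{B},\bm{v}) = \bm{V}_{\bm{B},\bm{v}} \bm{\tilde{\rho}}(\bm{B}) \bm{V}_{\bm{B},\bm{v}}^\dagger$.

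The key step, and the only place where the hypotheses enter, is the construction of $\phi_{\bm{v}}$. Under the event $\mathcal{E}_{\bm{B}}$, the punctured parity check matrix $\bm{\varPi}_{S_{\bm{B}}} \bm{B}^\intercal$ defines a code of minimum distance at least $\varDelta \geq 2\ell + 1$. Consequently, whenever $\bm{y},\bm{y}' \in \mathbb{F}_2^m$ both have weight at most $\ell$ and satisfy $\bm{\varPi}_{S_{\bm{B}}} \bm{B}^\intercal \bm{y} = \bm{\varPi}_{S_{\bm{B}}} \bm{B}^\intercal \bm{y}'$, their difference $\bm{y} - \bm{y}'$ is a codeword of this punctured code of weight at most $2\ell < \varDelta$, and must therefore be zero. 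In words: a weight-$\leq \ell$ preimage of a full syndrome is already uniquely determined by its restriction to $S_{\bm{B}}$.

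I therefore define $\phi_{\bm{v}}(\bm{s})$ as a function of $\bm{\varPi}_{S_{\bm{B}}} \bm{s}$ alone: for each $\bm{t}$ in the image of $\bm{y} \mapsto \bm{\varPi}_{S_{\bm{B}}} \bm{B}^\intercal \bm{y}$ restricted to $\left\lvert\bm{y}\right\rvert \leq \ell$, set $\phi_{\bm{v}}(\bm{s}) := \bm{v} \cdot \bm{y}(\bm{t})$ for the unique such preimage $\bm{y}(\bm{t})$, and set it to $0$ for all $\bm{t}$ outside this image. The resulting diagonal unitary $\bm{V}_{\bm{B},\bm{v}}$ factors as a diagonal unitary on the qubits in $S_{\bm{B}}$ tensored with the identity on $[n]\setminus S_{\bm{B}}$, so it is supported on $S_{\bm{B}}$ as required. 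For any $\bm{s} = \bm{B}^\intercal \bm{y}$ with $\left\lvert\bm{y}\right\rvert \leq \ell$, the restricted syndrome $\bm{\varPi}_{S_{\bm{B}}} \bm{s}$ has $\bm{y}$ as its unique preimage by the argument above, so $\phi_{\bm{v}}(\bm{s}) = \bm{v}\cdot\bm{y}$, which is exactly the commutation identity we needed. The only conceptual obstacle is the well-definedness of $\phi_{\bm{v}}$ as a function of $\bm{\varPi}_{S_{\bm{B}}} \bm{s}$, which is precisely what the restrictability hypothesis together with the bounded-distance condition $2\ell+1 \leq \varDelta$ is engineered to ensure; everything else is routine bookkeeping.
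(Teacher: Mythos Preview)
Your proof is correct and rests on the same core observation as the paper's: under $\mathcal{E}_{\bm{B}}$ and $2\ell+1\leq\varDelta$, the restricted syndrome $\bm{\varPi}_{S_{\bm{B}}}\bm{B}^\intercal\bm{y}$ uniquely determines any weight-$\leq\ell$ error $\bm{y}$, so the phase $(-1)^{\bm{v}\cdot\bm{y}}$ is a function of the $S_{\bm{B}}$ coordinates alone. Where you differ from the paper is in packaging. The paper describes $\bm{V}_{\bm{B},\bm{v}}$ operationally as ``undo the decoder $\bm{W}_{\bm{B}}$ (implemented using only $S_{\bm{B}}$), apply the phase on the recovered $\bm{y}$ register, then redo $\bm{W}_{\bm{B}}$,'' which implicitly treats the $\bm{y}$ register as an ancilla and argues unitarity from the isometric structure. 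You instead construct $\bm{V}_{\bm{B},\bm{v}}$ directly as a diagonal unitary on the $n$-qubit output register via the phase function $\phi_{\bm{v}}$, which makes both the support claim and the unitarity manifest without any ancilla bookkeeping. Your route is slightly more elementary and arguably cleaner for this purpose; the paper's operational description has the advantage of making contact with how one would actually implement $\bm{V}_{\bm{B},\bm{v}}$ as a circuit, though that is not needed for the proposition as stated.
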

\begin{proof}
    We begin with the state $\bm{\rho}_1(\bm{B})$ as above.
    Under the standard algorithmic procedure of DQI, we would next apply the phasing operation $\bigotimes_{i=1}^m \bm{Z}_i^{v_i}$ to produce $\bm{\rho}_2(\bm{B}, \bm{v})$, followed by a decoding isometry $\bm{O}_{\bm{B}}$ to yield the state $\bm{\rho}_3(\bm B, \bm v)$.
    We here will wish to instead \emph{first} apply a decoding operator directly to the state $\bm{\rho}_1(\bm B)$, producing a state \emph{independent} of $\bm v$, and then apply a different operation which restores the phases we neglected, so that the final state is exactly equal to the desired state $\bm{\rho}_3(\bm B, \bm v)$.
    We show that this latter operation can in fact be implemented with support contained entirely in $S_{\bm B}$.

    Assuming that event $\mathcal{E}_{\bm B}$ occurs, there is a set $S_{\bm B} \in [n]$ such that, restricted only those the checks in $S_{\bm B}$, the code specified by $\bm{B}^\intercal$ still has distance at least $\varDelta$.
    Further, we assume that the parameter $\ell$ of DQI is such that $2 \ell + 1 \leq \varDelta$, which means that all errors of weight $\leq \ell$ can be corrected using only the information from the syndromes computed by checks in $S_{\bm B}$.
    Therefore, there is a decoding isometry \begin{align}
        \bm{W}_{\bm B} := \sum_{\bm{y} \in \mathbb{F}_2^m : |\bm y| \leq \ell} \ket{\bm{B}^\intercal \bm y} (\bra{\bm y} \otimes \bra{\bm{B}^\intercal \bm y})
    \end{align}
    which can be implemented by acting solely on $S_{\bm B}$.
    We denote this decoder by $\bm{W}_{\bm B}$ even though it appears identical to $\bm{O}_{\bm B}$ because while its action is identical to that of $\bm{O}_{\bm B}$, it explicitly only acts on $S_{\bm B}$ and therefore may differ from $\bm{O}_{\bm B}$ in algorithmic aspects such as efficient implementability.
    Let $\tilde{\bm \rho}(\bm B) = \bm{W}_{\bm B} \bm{\rho}_1(\bm B) \bm{W}_{\bm B}^\dagger$.
    Then define $\bm{V}_{\bm B, \bm v}$ to (a) undo $\bm W_{\bm B}$, (b) compute the phases $\bigotimes_{i=1}^m \bm{Z}_i^{v_i}$ on the register holding $\ket{\bm y}$, and (c) re-do $\bm W_{\bm B}$. 
    This is a sequence of isometries which start and end on the same dimension, and thus is a unitary operation.
    Moreover, the support of this operator $\bm{V}_{\bm B, \bm v}$ is contained $S_{\bm B}$ since the only part of the input state which is acted on is $S_{\bm B}$.
    Finally, by construction, \begin{align}
        \bm{\rho}_3(\bm B, \bm v) = \bm{V}_{\bm B, \bm v} \tilde{\bm \rho}(\bm B) \bm{V}_{\bm B, \bm v}^\dagger
    \end{align}
    as claimed.
\end{proof}

Due to Proposition~\ref{prop:commuting_steps}, the result of $\operatorname{DQI}_\ell$ can, when the event $\mathcal{E}_{\bm{B}}$ (Eq.~\eqref{eq:res_event}) occurs, be written in the following way:
\begin{equation}
    \operatorname{DQI}_\ell\left(\bm{B},\bm{v}\right)=\bm{H}^{\otimes n}\bm{V}_{\bm{B},\bm{v}}\bm{\tilde{\rho}}\left(\bm{B}\right)\bm{V}_{\bm{B},\bm{v}}^\dagger\bm{H}^{\otimes n}.
\end{equation}
With this more convenient form of the DQI state, we are able to complete the proof of Theorem~\ref{thm:dqi_is_stable}.
\begin{proof}[Proof of Theorem~\ref{thm:dqi_is_stable}]
    Condition on the event $\mathcal{E}_{\bm{B}}$ (Eq.~\eqref{eq:res_event}) occurring. Then:
    \begin{equation}
        \operatorname{DQI}_\ell\left(\bm{B},\bm{v}\right)=\bm{H}^{\otimes n}\bm{V}_{\bm{B},\bm{v}}\bm{\tilde{\rho}}\left(\bm{B}\right)\bm{V}_{\bm{B},\bm{v}}^\dagger\bm{H}^{\otimes n}.
    \end{equation}
    Now consider $\Tr_{S_{\bm{B}}}\left(\operatorname{DQI}_\ell\right)$, given by
    \begin{equation}
        \left(\bigotimes_{i\not\in S_{\bm{B}}}\bm{H}_i\right)\Tr_{S_{\bm{B}}}\left(\bm{\tilde{\rho}}\left(\bm{B}\right)\right)\left(\bigotimes_{i\not\in S_{\bm{B}}}\bm{H}_i\right).
    \end{equation}
    This state is independent of $\bm{v}$, so for any $\bm{v},\bm{v'}\in\mathbb{F}_2^m$:
    \begin{equation}
        \left\lVert\Tr_{S_{\bm{B}}}\left(\operatorname{DQI}_\ell\left(\bm{B},\bm{v}\right)-\operatorname{DQI}_\ell\left(\bm{B},\bm{v'}\right)\right)\right\rVert_{W_2}=\bm{0}.
    \end{equation}
    The desired result then follows by the definition of stability (see Methods). 
\end{proof}

We now prove Theorem~\ref{thm:dqi_is_lipschitz}---stability of DQI when the decoder is inverse Lipschitz---which is much more straightforward.
\begin{proof}[Proof of Theorem~\ref{thm:dqi_is_lipschitz}]
    Fix $\bm{v}$ and $\bm{v'}$, and let $\bm{d}:=\bm{v}\oplus\bm{v'}$. We then have:
    \begin{equation}
        \bm{\rho}_2\left(\bm{B},\bm{v'}\right)=\bigotimes_{i=1}^m\bm{Z}_i^{d_i}\bm{\rho}_2\left(\bm{B},\bm{v}\right)\bigotimes_{i=1}^m\bm{Z}_i^{d_i}.
    \end{equation}
    Just as in Proposition~\ref{prop:commuting_steps}, we hope now to commute $\bigotimes_{i=1}^m\bm{Z}_i^{d_i}$ through the decoding step. Let $S_{\bm{d}}\subseteq\left[m\right]$ be the support of $\bm{d}$, and let $R_{\bm{d}}\subseteq\left[n\right]$ be the support of $\mathcal{D}^{-1}\left(\bm{d}\right)$, which by the $L$-inverse Lipschitz assumption of the decoder is of cardinality at most $\left\lvert R_{\bm{d}}\right\rvert\leq L\left\lVert\bm{d}\right\rVert_1$. Commuting $\bigotimes_{i=1}^m\bm{Z}_i^{d_i}$ through the decoding operator $\bm{O}_{\bm{B}}$ implementing $\mathcal{D}$, we therefore have:
    \begin{equation}
        \bm{O}_{\bm{B}}\bigotimes_{i=1}^m\bm{Z}_i^{d_i}=\bm{V}_{R_{\bm{d}}}\bm{O}_{\bm{B}},
    \end{equation}
    for some operator $\bm{V}_{R_{\bm{d}}}$ with support only on $R_{\bm{d}}$. By construction,
    \begin{align}
        \operatorname{DQI}_\ell\left(\bm{B},\bm{v}\right)&=\bm{H}^{\otimes n}\bm{\rho}_3\left(\bm{B},\bm{v}\right)\bm{H}^{\otimes n},\\
        \operatorname{DQI}_\ell\left(\bm{B},\bm{v'}\right)&=\bm{H}^{\otimes n}\bm{V}_{R_{\bm{d}}}\bm{\rho}_3\left(\bm{B},\bm{v}\right)\bm{V}_{R_{\bm{d}}}^\dagger\bm{H}^{\otimes n}.
    \end{align}
    Now, by the locality of $\bm{V}_{R_{\bm{d}}}$ and the invariance of the Wasserstein metric under $1$-local unitary transformations,
    \begin{align}
            & \left\lVert\operatorname{DQI}_\ell\left(\bm{B},\bm{v}\right)-\operatorname{DQI}_\ell\left(\bm{B},\bm{v'}\right)\right\rVert_{W_2}\\
            =& \left\lVert\bm{\rho}_3\left(\bm{B},\bm{v}\right)-\bm{V}_{R_{\bm{d}}}\bm{\rho}_3\left(\bm{B},\bm{v}\right)\bm{V}_{R_{\bm{d}}}^\dagger\right\rVert_{W_2}\\
            &\leq L\left\lVert\bm{d}\right\rVert_1.
    \end{align}
    Recalling $\bm{d}=\bm{v}\oplus\bm{v'}$ yields the final result.
\end{proof}

\subsection{Locally Restrictable Codes}\label{sec:res_and_sparse_codes}

The assumption of restrictability from Definition~\ref{def:res} captures a sufficient condition for $\operatorname{DQI}$ to be stable. 
We here prove that the Gallager ensemble is locally restrictable.
In particular, we show that if we remove all but the first few check nodes in the Tanner graph, then the remaining code still has very high distance.
To formalize the distance of an ensemble, we define an ensemble to be good if a draw from the ensemble has high distance with high probability.
\begin{definition}[Good ensemble] \label{def:good_ensemble}
    Let $\lambda^{-1} \in (0, 1)$ be a constant.
    An ensemble $\mathcal{D}_{m, \lambda^{-1}}$ is $(\varDelta(\lambda^{-1}), p_{\text{res}}(n, \lambda^{-1}))$-good if with probability at least $1 - p_{\text{res}}(n, \lambda^{-1})$, a parity check matrix $\boldsymbol{H} \sim \mathcal{D}_{m, \lambda^{-1}}$ has distance at least $\varDelta(\lambda^{-1})$. 
\end{definition}
\begin{figure}
    \centering
    \includegraphics[width=\linewidth]{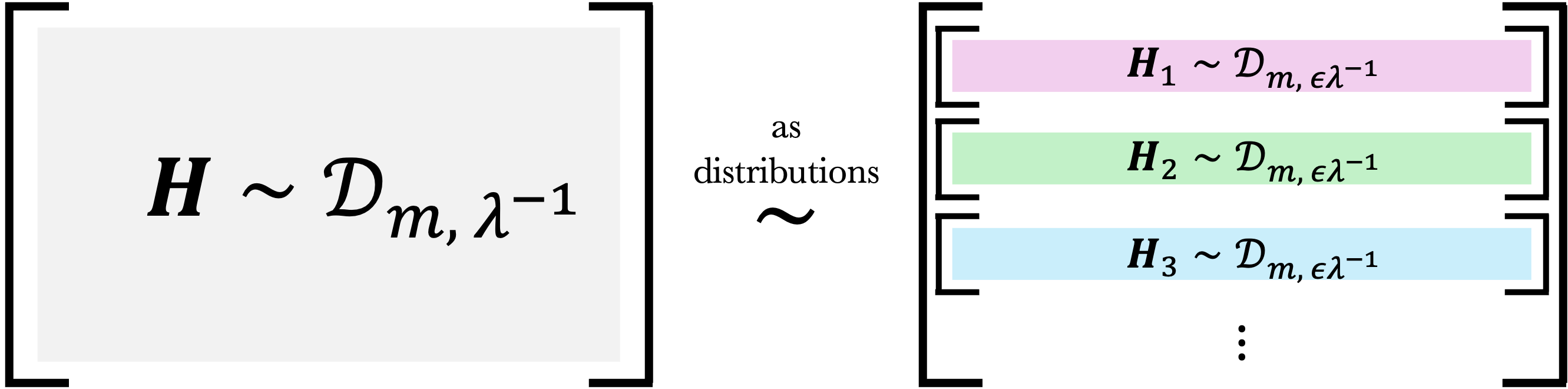}
    \caption{Informal visualization of the self-similarity property of a code. Sampling a larger code is equivalent to sampling smaller horizontal slices and then stacking them.}
    \label{fig:locally_generated}
\end{figure}
It is well known that the Gallager ensemble is good.
\begin{lemma}[Gallager ensemble is good] \label{lemma:gallager_good}
    Fix constants $\lambda^{-1}, \delta \in (0, 1/2)$. Let $\boldsymbol{H} \in \mathbb{F}_2^{\lambda^{-1} m \times m}$ be sampled from a Gallager ensemble $\mathcal{G}(m, k, d = k \lambda)$ with locality parameter $k$ and $n = \lambda^{-1} m$. Then if $\lambda^{-1} > \operatorname{H}_2(\delta)$, there exists a threshold constant $k_0(\delta) \geq 1$ (scaling as $1/\delta$) such that if $k \geq k_0$, the ensemble is $(\delta m, 1/n^{\Omega(1)})$-good.
\end{lemma}
\begin{proof}
    This claim was first shown by \cite{gallager2003low}. A simplified proof is given in \cite[Theorem~2.14]{mosheiff2021low}.
\end{proof}

\begin{theorem}[Restrictability of Gallager ensemble] \label{thm:gallager_is_restrictable}
    Let $n$ be a scaling parameter, $k, d$ be constants,$m = n d /k$, and $\lambda^{-1} = n/m \in (0, 1)$.
    Let $\epsilon$ be such that $\epsilon =  C\ln(k)/k$, where $C$ is any constant independent of $k, d, m$.
    The Gallager ensemble $\mathcal{G}(m, k, d)$ of rate parameter $\lambda^{-1}$ is $(\epsilon, \varDelta, p_{\text{res}})$-restrictable onto the first $\frac{\left\lfloor k\epsilon\right\rfloor}{k} n$ checks, where $\varDelta/m = \Theta_d(1/d)$ and $p_{\text{res}} = 1/n^{\Omega(1)}$.
\end{theorem}
\begin{proof}
The key idea behind the restrictability of the Gallager ensemble is the \emph{self-similarity} of the construction: since the checks are generated in independent blocks, removing some of the blocks is equivalent to generating a smaller instance of the code (see Fig.~\ref{fig:locally_generated}).
More precisely, the $\mathcal{G}(m, k, d)$ ensemble can be seen as, for $j$ from $1$ to $k$, partition the $m$ bits randomly into $m/d$ groups of $d$ bits, and define each group to be a new parity check.
Hence, if we restrict to the first $\epsilon$ fraction of checks, we are equivalently stopping the iterative parity-check generation process at the $(C \ln k)$th iteration.
Since the number of iterations is the $k$ parameter in the ensemble, the code generated is precisely a draw from the $\mathcal{G}(m, C \ln k, d)$ ensemble.
By Lemma~\ref{lemma:gallager_good}, for any $\delta$ such that \begin{align}
    \frac{C \ln k}{d} > \operatorname{H}_2(\delta) ,
\end{align}
this ensemble is $(\delta m, 1/n^{\Omega(1)})$-good.
In the limit of small $q$, $\operatorname{H}_2^{-1}(q) = (1 + \operatorname{o}_{1/q}(1)) \frac{q}{\log_2(1/q)}$.
Hence, if \begin{widetext}
    \begin{align}
    \delta < \operatorname{H}_2^{-1} \left( \frac{C\ln k}{d} \right) & = (1 + \operatorname{o}_k(1)) \frac{C \ln k}{d} \frac{1}{\log_2 d - \log_2 \ln k - \log_2 C} \\
    & = (1 + \operatorname{o}_k(1)) \frac{C\ln\left(2\right)}{d} = \Theta_d(1/d) ,
\end{align}
\end{widetext}
then the restricted code is $(\delta m, 1/n^{\Omega(1)})$-good.
Note that we have taken the limit $k \to \infty$ such that $\lambda^{-1} = \frac{k}{d}$ remains fixed.
\end{proof}

Taking the local restrictability of the Gallager ensemble with Theorem~\ref{thm:dqi_is_stable} then immediately implies the stability of DQI on the Gallager ensemble of codes.
\begin{corollary}[DQI is stable on Gallager ensemble]
\label{cor:DQI_stable_Gallager}
    Let $k, d$ be constants, and let $n$ be the scaling parameter.
    Define $m = n \frac{d}{k}$ and $\lambda^{-1} = \frac{n}{m} = \frac{k}{d} \in (0, 1)$.
    Let $\mathbb{P}_{\text{con}}$ be a distribution of matrices $\boldsymbol{B} \in \mathbb{F}_2^{n \times m}$ such that $\boldsymbol{B}^\intercal$ is drawn from the $\mathcal{G}(m, k, d)$ Gallager ensemble (Definition~\ref{def:gallager_ensemble}).
    Let $\epsilon = \frac{C \log(k)}{k}$, where $C$ is a constant independent of $k, d, m$.
    Then $\operatorname{DQI}_{\ell}$ is $(\epsilon n, 0, n^{-\Omega(1)})$-stable if $2\ell+1 \leq \delta m$, where $\delta = \Theta_k(1/d)$.
\end{corollary}
\begin{proof}
    Follows immediately from Theorems~\ref{thm:dqi_is_stable} and \ref{thm:gallager_is_restrictable}.
\end{proof}
\begin{remark}
    We note that local restrictability is not a property unique to the Gallager ensemble. Supporting this, in Appendix~\ref{sec:loc_rest_common_code_ens} we prove the local restrictability of other common code ensembles, the Bernoulli ensemble (Theorem~\ref{thm:Bernoulli_ensemble_good}) and the right-regular ensemble (Theorem~\ref{thm:right_regular_ensemble_good}). Analogous results to Corollary~\ref{cor:DQI_stable_Gallager} for these ensembles then also hold.
\end{remark}

\section{Algorithmic Implications and Comparison With a Classical Algorithm}\label{sec:numerics}

We here prove concrete bounds on the performance of DQI for tranposed Gallager \textsc{MAX-$k$-XOR-SAT}. We also here discuss the performance of approximate message passing (AMP), a classical algorithm widely believed to achieve the OGP threshold for combinatorial optimization problems.

To begin, we claim that the maximum attainable value of transposed Gallager \textsc{MAX-$k$-XOR-SAT} does not vanish in the $n\to\infty$ then $k\to\infty$ double limit. This was conjectured by Jordan et al.~\cite{jordan2025optimizationdecodedquantuminterferometry}; we prove this conjecture and give the exact form of the maximum attainable value in Appendix~\ref{sec:max_value_gallager_xor_sat}, and report the result here.
\begin{theorem}[Maximum value of transposed Gallager \textsc{MAX-$k$-XOR-SAT}, informal version of Theorem~\ref{thm:max_value}] \label{thm:max_value_gallager_INFORMAL}
    Define:
    \begin{equation}
        \theta^\ast:=1-\operatorname{H}_2^{-1}\left(1-\frac{1}{\lambda}\right).
    \end{equation}
    With probability exponentially close to $1$ in the $n\to\infty$ limit over the transposed Gallager ensemble of \textsc{MAX-$k$-XOR-SAT},
    \begin{equation}
        \theta^\ast-\exp\left(-\operatorname{\Omega}_k\left(k\right)\right)\leq\frac{1}{m}\max_{\bm{z}\in\mathbb{F}_2^n}\left(m-\left\lVert\bm{B}\bm{z}\oplus\bm{v}\right\rVert_1\right)\leq\theta^\ast.
    \end{equation}
\end{theorem}
In contrast, we show that the spin glass phase transition \emph{does} occur at an approximation ratio vanishing to the trivial value of $1/2$ in this double limit. In particular, we show in Methods that stable quantum algorithms are obstructed from optimizing beyond this approximation ratio.
\begin{theorem}[Stable quantum algorithms fail for \textsc{MAX-$k$-XOR-SAT}, informal version of Theorems~\ref{thm:chaos_prop} and~\ref{thm:stab_algs_fail}]\label{thm:stab_quant_algs_fail}
    Fix any $0<\nu<1/2$ and $0\leq\sigma\leq 1$, as well as any $f,L,p_{\text{st}}$ and $\gamma,p_{\text{f}}$ satisfying the following inequalities:
    \begin{align}
        p_{\text{st}}+p_{\text{f}}&\leq\frac{1}{12};\\
        \gamma>\frac{1}{2}+\left(1+\operatorname{o}_{\nu_2^{-1},\sigma^{-1}}\left(1\right)\right)&\sqrt{\frac{\nu_2\ln\left(1/\nu_2\right)+\sigma}{2\lambda}};\\
        f&\leq\left\lfloor\sigma n\right\rfloor;\\
        \lambda L\sqrt{\frac{2}{1-2\sqrt{3p_{\text{f}}}}}&<\frac{\nu}{2}.
    \end{align}
    There is no $\left(f,L,p_{\text{st}}\right)$-stable and $\left(\gamma,p_{\text{f}}\right)$-optimal quantum algorithm for tranposed Gallager \textsc{MAX-$k$-XOR-SAT}.
\end{theorem}
Combining this result with the stability of DQI over this ensemble (Corollary~\ref{cor:DQI_stable_Gallager}), we are able to show that DQI is topologically obstructed for \emph{any} choice of decoder decoding errors with relative weight at most $\frac{1}{k\lambda}$, far from the true optimum reported in Theorem~\ref{thm:max_value_gallager_INFORMAL}.
\begin{theorem}[DQI is topologically obstructed for transposed Gallager \textsc{MAX-$k$-XOR-SAT}]\label{thm:dqi_fails_gallager}
    Fix any:
    \begin{equation}\label{eq:ell_bound_dqi_fails}
        \frac{\ell}{m}\leq\frac{1}{k\lambda}.
    \end{equation}
    For sufficiently large $n$, $\operatorname{DQI}_\ell$ does not succeed in sampling a bit string $\bm{z}\in\mathbb{F}_2^n$ achieving a satisfied fraction $g(\bm{z})/m$ greater than
    \begin{align}
        \mu_{\mathrm{top}}& :=1-\operatorname{H}_2^{-1}\left(1-\left(1+\operatorname{o}_k\left(1\right)\right)\frac{\ln\left(k\right)}{k\lambda \ln 2}\right) \\
        & = \frac{1}{2}+\left(1+\operatorname{o}_k\left(1\right)\right)\sqrt{\frac{\ln\left(k\right)}{2\ln\left(2\right)k\lambda}}
    \end{align}
    with probability more than $12/13$ over both the randomness of the algorithm and $\mathbb{P}_{\mathrm{G}}$.
\end{theorem}
\begin{proof}
    Fix:
    \begin{align}\label{eq:epsilon_req}
        \epsilon&=\lambda\operatorname{H}_2\left(\frac{\ell}{m}\right)=\lambda\operatorname{H}_2\left(\frac{1}{k\lambda}\right) \\ &=\left(1+\operatorname{o}_k\left(1\right)\right)\frac{\log_2\left(k\right)}{k}.
    \end{align}
    Recall as well from Corollary~\ref{cor:DQI_stable_Gallager} that $\operatorname{DQI}_\ell$ is $\left(\epsilon n,0,n^{-\operatorname{\Omega}\left(1\right)}\right)$-stable over the Gallager ensemble at sufficiently large $k$ for this choice of $\epsilon$. The result then follows immediately from Theorem~\ref{thm:stab_quant_algs_fail} by fixing any $\sigma>\epsilon$, $p_{\text{f}}=1/13$, and $\nu$ sufficiently small.
\end{proof}
\begin{remark}
    Theorem~\ref{thm:dqi_fails_gallager} generalizes readily if one assumes the existence of a better decoder; this modifies the choice of $\epsilon$ due to Eq.~\eqref{eq:epsilon_req}. This yields another topological obstruction, though at a larger approximation ratio.
\end{remark}
Note that we prove the theorem for success probability $12/13$ for convenience, but the result holds for any constant probability.
This is because if the algorithm succeeded with any constant probability, we could run it in sequence a constant number of times, and the resultant sequentially repeated algorithm would likewise be stable and thus subject to the same topological obstruction.
Next, we show that DQI is topologically constructed when using any inverse Lipschitz decoder, which includes the belief propagation decoder as a special case. To obtain an explicit threshold we slightly modify the distribution of problem instances from the transposed Gallager ensemble of \textsc{MAX-$k$-XOR-SAT} to an ensemble known as the Poisson ensemble~\cite{jones_et_al_full} which is only local ``on average,'' though we supplement this result with a constant-$k$ bound in Methods where we show a barrier at some constant approximation ratio occurs but are unable to compute its value.

The bound we present here is expressed in terms of the \emph{$k$th algorithmic Parisi constant} $\operatorname{\normalfont\textsc{P}}_k^{\mathrm{ALG}}$, sometimes also called the \emph{algorithmic threshold of the $k$-spin model}. We refer the reader to \cite{el2021optimization} and \cite[Eq.~(2.7)]{alaoui2020algorithmicthresholdsmeanfield} for a definition. We also numerically compute and report the values of $\operatorname{\normalfont\textsc{P}}_k^{\mathrm{ALG}}$ for $k<20$ in Fig.~\ref{fig:parisi}, using code from \cite{marwaha2022boundsapproximating}. The result is proved in Appendix~\ref{sec:dqi_bp_b_ogp} as it follows almost identically to Theorem~\ref{thm:dqi_fails_gallager}.
\begin{restatable}[DQI with an inverse Lipschitz decoder is obstructed in Poisson \textsc{MAX-$k$-XOR-SAT}]{theorem}{dqibppoisson}\label{thm:bp_dqi_fails_ogp}
    Let $k$ be even and fix $\epsilon>0$. $\operatorname{DQI}_\ell$ with an inverse Lipschitz decoder is such that, for all sufficiently large $\lambda$, w.h.p.,
    \begin{align}
        \frac{g_{\bm{X}}\left(\operatorname{DQI}_\ell\left(\bm{X}\right)\right)}{m} & \leq\mu_{\mathrm{b-OGP}}+\frac{\epsilon}{\sqrt{\lambda}}\\
        & :=\frac{1}{2}+\frac{1}{2\sqrt{\lambda}}\operatorname{\normalfont\textsc{P}}_k^{\mathrm{ALG}}+\frac{\epsilon}{\sqrt{\lambda}}.
    \end{align}
\end{restatable}

\subsection{Comparison With Other Algorithms}

We first compare the topological obstruction demonstrated in Theorem~\ref{thm:dqi_fails_gallager} what DQI is known to optimally achieve in expectation with the same assumption on $\ell$~\cite[Theorem~4.1]{jordan2025optimizationdecodedquantuminterferometry}:
\begin{align}
    \frac{\left\langle g\right\rangle_{\mathrm{DQI}}}{m} & =\left(\sqrt{\frac{\ell}{2m}}+\sqrt{\frac{1}{2}\left(1-\frac{\ell}{m}\right)}\right)^2 \\
    & \leq\frac{1}{2}+\left(1+\operatorname{o}_k\left(1\right)\right)\sqrt{\frac{1}{k\lambda}}.
\end{align}
While this bound is of a slightly different nature than Theorem~\ref{thm:dqi_top_obs_intro}---which is a ``with high probability'' statement, rather than a statement only in expectation---it suggests that DQI does not perform optimally even among stable algorithms. This is because the bound shown in Theorem~\ref{thm:dqi_fails_gallager} is a function only of the stability of DQI, not any other details of the algorithm. Motivated by this observation, we show in Appendix~\ref{sec:qaoa_app_rat} that a depth-1 quantum algorithm known as QAOA~\cite{farhi2014quantumapproximateoptimizationalgorithm} achieves the same $k$ scaling as $\mu_{\mathrm{top}}$ for \textsc{MAX-$k$-XOR-SAT}, beyond what DQI can achieve in expectation:
\begin{align}
    \frac{\mathbb{E}_{\bm{v}\sim\mathbb{P}_{\mathrm{par}}}\left[\left\langle g\right\rangle_{\mathrm{QAOA}}\right]}{m} & \geq\frac{1}{2}+\left(1+\operatorname{o}_k\left(1\right)\right)\sqrt{\frac{\ln\left(k\right)}{4\ce k\lambda}} \\
    & >\frac{\left\langle g\right\rangle_{\mathrm{DQI}}}{m},
\end{align}
where the final inequality holds at sufficiently large $k$. Here, $\left\langle\cdot\right\rangle_{\mathrm{QAOA}}$ denotes an expectation over the randomness of the algorithm, and $\mathbb{E}_{\bm{v}}$ denotes an expectation over the randomness of the parity constraints $\bm{v}\sim\mathbb{P}_{\mathrm{par}}$.

We now consider what is optimally achieved by classical algorithms, and compare with both Theorems~\ref{thm:dqi_fails_gallager} and~\ref{thm:bp_dqi_fails_ogp}. It is widely believed that the classical algorithm known as approximate message passing (AMP) achieves the OGP threshold for combinatorial optimization problems~\cite{doi:10.1073/pnas.2108492118,chen2019,gamarnik2022algorithmsbarrierssymmetricbinary,10.1214/23-AAP1953,10.1002/cpa.22222,cheairi2024algorithmicuniversalitylowdegreepolynomials}. Much like DQI, AMP has the nice property that its optimal performance can be computed for Gaussian spin glass models without running the algorithm~\cite{alaoui2020algorithmicthresholdsmeanfield}. This analysis can be extended to \textsc{MAX-$k$-XOR-SAT} using the algorithmic universality result of~\cite{cheairi2024algorithmicuniversalitylowdegreepolynomials}, though unfortunately only over ensembles whose constraints are chosen independently (which is not the case for the Gallager ensemble). For instance, a constraint matrix with i.i.d.\ Bernoulli entries as considered in~\cite{cheairi2024algorithmicuniversalitylowdegreepolynomials} suffices, as well as the Poisson ensemble~\cite{jones_et_al_full} which we also considered in Theorem~\ref{thm:bp_dqi_fails_ogp}.
\begin{figure}
    \centering
    \includegraphics[width=0.9\linewidth]{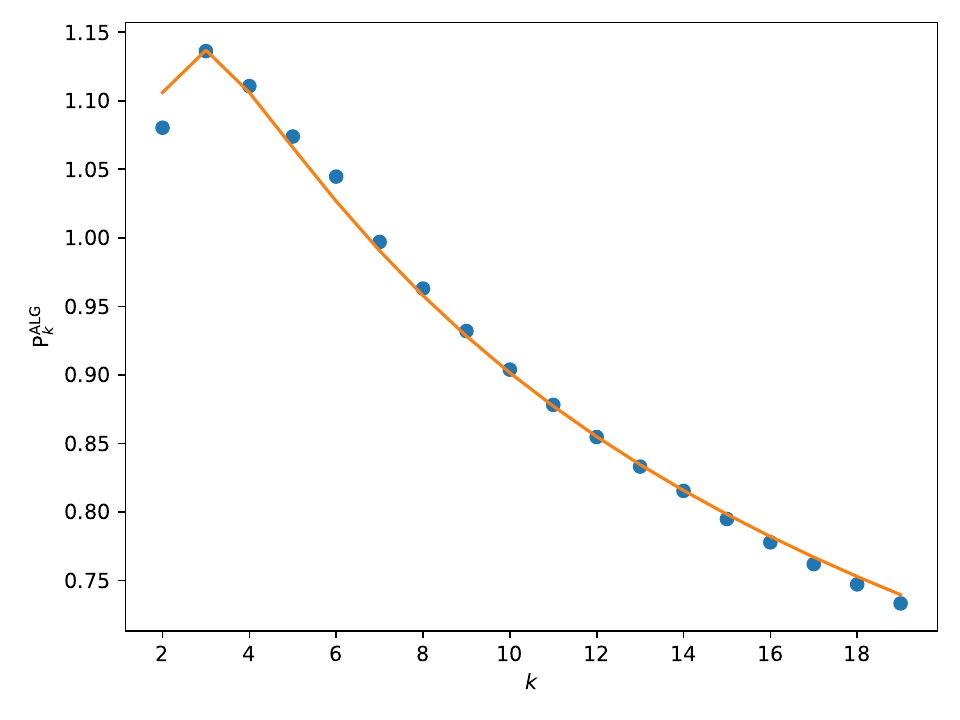}
    \caption{Numerically computed values for $\operatorname{\normalfont\textsc{P}}_k^{\mathrm{ALG}}$ for $k\in\left\{2,\ldots,19\right\}$ (blue dots). We also perform a least-square error fit to $\sqrt{\frac{c\ln\left(k\right)}{k}}$ (orange line), which we conjecture is the correct functional form at large $k$.}
    \label{fig:parisi}
\end{figure}
\begin{proposition}[Performance of AMP on the Bernoulli and Poisson ensemble~{\cite{el2021optimization,marwaha2022boundsapproximating,jones_et_al_full,cheairi2024algorithmicuniversalitylowdegreepolynomials}}]
    Consider either the Bernoulli or Poisson ensembles of \textsc{MAX-$k$-XOR-SAT}. Approximate message passing achieves a satisfied fraction:
    \begin{equation}
        \mu_{\mathrm{AMP}}:=\frac{1}{2}+\frac{1}{2\sqrt{\lambda}}\operatorname{\normalfont\textsc{P}}_k^{\mathrm{ALG}}
    \end{equation}
    with high probability over the randomness of the ensemble and the algorithm.
\end{proposition}
Note that this value is identical to the bound we show for the performance of DQI with inverse Lipschitz decoders in Theorem~\ref{thm:bp_dqi_fails_ogp}. Though we are unable to prove the satisfied fraction achieved by AMP outside of these ensembles, we conjecture it achieves the same value for the Gallager ensemble.

As $\operatorname{\normalfont\textsc{P}}_k^{\mathrm{ALG}}$ can be expressed as the minimum of a convex functional $\operatorname{\normalfont\textsc{P}}_k$ (the famous \emph{Parisi functional}) over a convex space, one can efficiently optimize the functional numerically~\cite{alaoui2020algorithmicthresholdsmeanfield,marwaha2022boundsapproximating}. This allows us to compare the performance of AMP not only against DQI with an inverse Lipschitz decoder such as belief propagation, but also against more general decoders satisfying the assumptions of Theorem~\ref{thm:dqi_fails_gallager}. We repurpose the code of \cite{marwaha2022boundsapproximating} to minimize $\operatorname{\normalfont\textsc{P}}_k$ over this domain. In Fig.~\ref{fig:parisi} we plot the calculated value of $\operatorname{\normalfont\textsc{P}}_k^{\mathrm{ALG}}$ as a function of $k$; our results at $k=2$ and $k=3$ agree with what was previously computed in the literature~\cite{alaoui2020algorithmicthresholdsmeanfield}. Note that while this is a convex optimization problem and therefore standard convex optimization algorithms should always yield the global optimum, we found in practice that it was very numerically unstable; for this reason we plot the minimum found value over $5$ runs of the optimization procedure for each data point over random initializations.

We also perform a least-squares fit of $\operatorname{\normalfont\textsc{P}}_k^{\mathrm{ALG}}$ using the conjectured functional form $\sqrt{\frac{c\ln\left(k\right)}{k}}$, yielding the numerical estimate $c\approx 3.530$. Taken together,
\begin{equation}
    \mu_{\mathrm{AMP}}\approx\frac{1}{2}+\sqrt{\frac{0.882\ln\left(k\right)}{k\lambda}}>\frac{\left\langle g\right\rangle_{\mathrm{DQI}}}{m},
\end{equation}
with the inequality holding at sufficiently large $k$. As AMP optimizes to what is known as the branching OGP threshold~\cite{el2021optimization}, we conjecture that $\mu_{\mathrm{AMP}}\geq\left\langle g\right\rangle_{\mathrm{DQI}}/m$ even at small $k$. Proof of this conjecture would follow from suitably tight bounds on the optimal OGP thresholds, which we leave for future work.

\section{Outlook}

Given the algorithmic obstructions implied by spin glass transitions, a key algorithmic question concerns techniques to overcome the spin glass phase transition barrier characterized by the overlap gap property.
Here, we show that this phase transition cannot be surpassed by DQI unless major advances are made in known decoding algorithms. One potential avenue for circumventing our no-go results consists of utilizing decoding strategies which are themselves inherently quantum.
Such decoders require a coherent superposition over errors with particular phases in exchange for improved thresholds, and since errors are themselves introduced coherently and intentionally in DQI, it may be possible to adjust DQI to become amenable to quantum decoders achieving thresholds with $k$-scalings beyond the stability barrier~\cite{chailloux2023quantum}.

More broadly, spin glass transitions and algorithmic stability serve as a rigorous diagnostic for the possibility of attaining quantum advantage via a given algorithmic proposal.
As many of the most widely considered quantum algorithms for optimization are also stable and thus subject to obstruction by spin glass phase transitions, the design of new quantum optimization algorithms that overcome these obstructions---as well as determining which optimization problems are both interesting and do not exhibit such phase transitions---remains a pressing open direction.

\section{Methods}

\subsection{Decoded Quantum Interferometry}\label{sec:dqi_rev}

The Decoded Quantum Interferometry (DQI) algorithm is a quantum reduction between the combinatorial optimization problem MAX-$k$-XOR-SAT and decoding a classical linear code~\cite{jordan2025optimizationdecodedquantuminterferometry}. 
We briefly describe the reduction, which is illustrated in Fig.~\ref{fig:DQI_flowchart}. 
A MAX-$k$-XOR-SAT instance of degree $d$ is specified by a pair $(\bm{B}, \bm{v})$ where $\bm{B} \in \mathbb{F}_2^{m \times n}$ and $\bm{v} \in \mathbb{F}_2^{m}$, for $m \geq n$. 
Here, $\bm{B}$ satisfies the constraint that the weight of every row (column) is at most $k$ ($d$). 
The task is to find $\bm{x} \in \mathbb{F}_2^{n}$ such that $\lVert \bm{Bx} - \bm{v} \rVert_1$ is minimized. 
We may define a slightly different objective function \begin{align}
    f(\bm{x}) :=  m - 2\left\lVert\bm{Bx}\oplus\bm{v}\right\rVert_1
\end{align}
which is equivalent to the previous objective function $g$ in the sense that a choice $\bm{x}$ maximizes $m-\lVert \bm{Bx} \oplus \bm{v} \rVert_1$ if and only if it maximizes $f(\bm{x})$. 
$f$ differs slightly from the function we previously considered, $g(\bm{x}) = m - \left\lVert\bm{Bx}\oplus\bm{v}\right\rVert_1$, but is equivalent to $g$ as an objective function.
We introduce $f$ because it arises naturally in the construction of DQI.
The goal of DQI is to sample solutions $\bm{x}$ from a distribution proportional to $\mathcal{P}^2(f(\bm{x}))$, where $\mathcal{P}$ is a univariate polynomial of a large degree $\ell$. 
In this way the values of $\bm{x}$ for which $f(\bm{x})$ is small have very low probability, whereas values of $\bm{x}$ for which $f(\bm{x})$ is large have comparatively a much larger probability. 
Such a sampling can be achieved by creating the quantum state \begin{align}
    \ket{\text{DQI}_{\ell}(f)} := \frac{1}{\mathcal{N}} \sum_{\bm{x} \in \mathbb{F}_2^n} \mathcal{P}(f(\bm{x})) |\bm{x}\rangle ,
\end{align}
and then subsequently measuring in the computational basis. 
Here, $1/\mathcal{N}$ is the normalization operator that enforces the correct normalization of the state.
Because $n \leq m$, $\bm{B}^\intercal \in \mathbb{F}_2^{n \times m}$ can be interpreted as the parity check matrix of a classical linear code. The central claim of DQI is that if there exists a decoder which can correctly decode all errors up to weight $\ell$ in the code given by $\bm{B}^\intercal$, then there is a quantum algorithm which samples from a distribution of solutions $\bm{x}$ with probability proportional to $\mathcal{P}^2(\bm{x})$, for any $\mathcal{P}$ such that $\deg(\mathcal{P}) \leq \ell$.
There are two key facts which together imply this reduction. 
The first is that 
\begin{theorem}[Symmetric polynomial expansion~\cite{jordan2025optimizationdecodedquantuminterferometry}] \label{thm:symmetric_polynomial_expansion}
    Let $\mathcal{P}$ be a univariate polynomial of degree $\ell$ and let $z_1, \ldots, z_m$ be formal variables satisfying $z_i^2 = 1$ for all $i$. 
    Then there exists coefficients $c_0, \ldots, c_\ell \in \mathbb{R}$, depending only on $\mathcal{P}$, such that \begin{align} \label{eq:symmetric_expansion}
        \mathcal{P}\left( \sum_{i=1}^m z_i \right) = \sum_{k=0}^{\ell} c_k \sum_{\bm{y} \in \mathbb{F}_2^{m} : |\bm{y}| = k} z_1^{y_1} \cdots z_m^{y_m} . 
    \end{align}
\end{theorem}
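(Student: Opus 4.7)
The plan is to prove the statement by expanding $\mathcal{P}(\sum_i z_i)$ power by power, reducing with the relation $z_i^2 = 1$, and then invoking symmetry in the variables $z_1, \ldots, z_m$ to conclude that every squarefree monomial of a given weight receives the same coefficient.

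Concretely, first I would write $\mathcal{P}(t) = \sum_{j=0}^{\ell} p_j t^j$ and expand each power multinomially:
\begin{equation*}
    \left(\sum_{i=1}^m z_i\right)^j = \sum_{\bm{a}\in[m]^j} z_{a_1} z_{a_2}\cdots z_{a_j}.
\end{equation*}
For a fixed tuple $\bm{a}$, let $n_i(\bm{a})$ denote the number of occurrences of index $i$ in $\bm{a}$. Using the relation $z_i^2=1$, the monomial $z_{a_1}\cdots z_{a_j}$ collapses to $\prod_{i=1}^m z_i^{y_i(\bm{a})}$, where $y_i(\bm{a}) := n_i(\bm{a}) \bmod 2$. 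Thus each term reduces to a squarefree monomial indexed by some $\bm{y}\in\mathbb{F}_2^m$, and since the total number of factors is $j$, its Hamming weight satisfies $|\bm{y}| \leq j \leq \ell$ (and $|\bm{y}| \equiv j \bmod 2$).

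Second, I would argue that the coefficient accumulating on a squarefree monomial $z_1^{y_1}\cdots z_m^{y_m}$ depends only on $|\bm{y}|$. Observe that $\mathcal{P}(\sum_i z_i)$ is manifestly invariant under any permutation of $\{z_1,\ldots,z_m\}$. The reduced polynomial representative in the quotient ring $\mathbb{R}[z_1,\ldots,z_m]/(z_i^2-1)_{i=1}^m$ (a space for which squarefree monomials form a basis) is unique, so its coefficients must inherit the same $S_m$-symmetry. Since $S_m$ acts transitively on squarefree monomials of any fixed weight, all such monomials share a common coefficient, which we define to be $c_{|\bm{y}|}$. Combining the two observations yields exactly the claimed expansion
\begin{equation*}
    \mathcal{P}\!\left(\sum_{i=1}^m z_i\right) = \sum_{k=0}^{\ell} c_k \sum_{\bm{y}\in\mathbb{F}_2^m:\,|\bm{y}|=k} z_1^{y_1}\cdots z_m^{y_m},
\end{equation*}
with coefficients $c_k$ determined solely by $\mathcal{P}$ (one can in fact read them off by collecting contributions across the multinomial expansion, but this explicit formula is not needed for the statement).

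I do not expect a significant obstacle: the argument is essentially bookkeeping combined with a symmetry observation in the quotient ring $\mathbb{R}[z_1,\ldots,z_m]/(z_i^2-1)_i$. If any step requires care, it is the uniqueness of the reduced representative, which is needed to transfer the $S_m$-invariance from $\mathcal{P}(\sum_i z_i)$ to the squarefree coefficients; this follows because the $2^m$ squarefree monomials are $\mathbb{R}$-linearly independent in the quotient, as they form the standard basis of functions on $\{\pm 1\}^m$.
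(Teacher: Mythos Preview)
The paper does not actually prove this statement; it is quoted from Ref.~\cite{jordan2025optimizationdecodedquantuminterferometry} and used as a black box in the description of DQI. So there is no ``paper's own proof'' to compare against.

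Your argument is correct and is the standard one. The two ingredients---multinomial expansion followed by reduction modulo $z_i^2-1$, and the $S_m$-invariance forcing all weight-$k$ squarefree monomials to carry the same coefficient---are exactly what is needed, and your remark that the squarefree monomials form a basis of $\mathbb{R}[z_1,\ldots,z_m]/(z_i^2-1)_i$ (equivalently, of functions on $\{\pm 1\}^m$) cleanly justifies transferring the symmetry to the coefficients. One minor point of phrasing: the coefficients $c_k$ depend on $m$ as well as on $\mathcal{P}$ (e.g.\ for $\mathcal{P}(t)=t^2$ one gets $c_0=m$), but since $m$ is fixed throughout this is harmless and matches the theorem as stated.
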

Denote by $\bm{b}_i$ the $i$th row of $\bm{B}$. 
Each row is known as a \textit{clause} and each corresponding $v_i$ is known as a \textit{target}.
If $\bm{b}_i \cdot \bm{x} = v_i$, then we say that the $i$th clause is \textit{satisfied}. 
Hence, $f$ counts the difference between the number of satisfied and unsatisfied clauses.
We can therefore re-express $f(\bm{x}) = \sum_{i=1}^m f_i(\bm{x})$, where \begin{align}
    f_i(\bm{x}) = (-1)^{\bm{b}_i \cdot \bm{x} + v_i} .
\end{align}
Applying Theorem~\ref{thm:symmetric_polynomial_expansion}, \begin{align}
    \mathcal{P}(f(\bm{x})) = \sum_{k=0}^{\ell} c_k \sum_{\bm{y} \in \mathbb{F}_2^{m} : |\bm{y}| = k} f_1^{y_1} \cdots f_m^{y_m} .
\end{align}
The second fact is that if we apply a Hadamard transform $\bm{H}^{\otimes n}$ to $\ket{\text{DQI}_{\ell}(f)}$, we find a state which is simpler to synthesize. By applying Eq.~(\ref{eq:symmetric_expansion}), 
\begin{align} \label{eq:target-state_DQI}
    \bm{H}^{\otimes n} \ket{\text{DQI}_{\ell}} & \propto \sum_{k=0}^{\ell} c_k \sum_{|\bm{y}| = k} (-1)^{\bm{v} \cdot \bm{y}} \ket{\bm{B}^\intercal \bm{y}}
\end{align}
Figure~\ref{fig:DQI_flowchart} describes in detail as to how the state in Eq.~(\ref{eq:target-state_DQI}) is synthesized by using a decoder for the classical linear code whose parity check matrix is given by $\bm{B}^\intercal \in \mathbb{F}_2^{n \times m}$.
Specifically, we apply the following steps.
\begin{enumerate}
    \item Define $w_k = c_k \sqrt{\binom{m}{k}}$. The state $\propto \sum_{k=0}^\ell w_k \ket{k}$ is synthesized, which can be done efficiently since $\ell = O(n)$ and so the register has only $O(\log n)$ qubits. 
    Next, create a Dicke state \begin{align}
       \ket{D^m_k} = \binom{m}{k}^{-1/2} \sum_{\bm{y} \in \mathbb{F}_2^m : |\bm{y}|=k} \ket{\bm{y}}
    \end{align} coherently, using methods such as that of \cite{bartschi2022short}. 
    The Hamming weight of every term in a Dicke state is the same, so we may compute it and thereby uncompute the $\ket{k}$ register. This leaves our initial resource state $\ket{\psi_2}$, which has no dependence on the specific problem instance $(\bm{B}, \bm{v})$.
    \item By applying $\bigotimes_{i=1}^{m} \bm{Z}^{v_i}$ on $\ket{\psi_2}$, we add a phase and create $\ket{\psi_3}$. We then add a new register and compute $\ket{\bm{y}} \ket{0} \mapsto \ket{\bm{y}} \ket{\bm{B}^\intercal \bm{y}}$.
    \item In the final step, we apply a decoder for the code $\bm{B}^\intercal$ which we assume is capable of deducing the error $\bm{y}$ from the syndrome $\bm{B}^\intercal \bm{y}$ for every $\bm{y}$ such that $\lVert \bm{y} \rVert_1 \leq \ell$. This decoder therefore enables the uncomputation of the $\ket{\bm{y}}$ register, which produces the state in Eq.~(\ref{eq:target-state_DQI}).
\end{enumerate}
After applying a final Hadamard transform, we achieve the final state $\bm{\mathcal{A}}\left(\bm{B},\bm{v}\right) = \ket{\text{DQI}_{\ell}(f)}\bra{\text{DQI}_{\ell}(f)}$.

\subsection{Topological Obstructions in Combinatorial Optimization Problems}

We begin by reviewing topological obstructions in optimization problems; in particular, we here discuss the \emph{overlap gap property} (OGP). The OGP is a topological property of the near-optimal space of optimization problems which is conjectured to tightly characterize their average-case hardness~\cite{doi:10.1073/pnas.2108492118,chen2019,gamarnik2022algorithmsbarrierssymmetricbinary,10.1214/23-AAP1953,10.1002/cpa.22222,cheairi2024algorithmicuniversalitylowdegreepolynomials}. It has also been previously seen to have implications in quantum settings, having been used to show specific quantum algorithms such as low-depth QAOA are inhibited from preparing near-optimal solutions to combinatorial optimization problems~\cite{farhi2020quantumapproximateoptimizationalgorithm,anschuetz2022critical,anschuetzkiani2022,9996946,Anshu2023concentrationbounds,chen2023localalgorithmsfailurelogdepth,goh2025overlapgappropertylimits,anschuetz2025unified,anschuetz2025efficientlearningimpliesquantum}, and even to show a weaker version of the no low-energy trivial states (NLTS) conjecture~\cite{anschuetz2024combinatorial}.

There are many different variants of the OGP. We will here be primarily concerned with the \emph{$R$-OGP} (also known as the \emph{multi-OGP})~\cite{rahman2017local,doi:10.1137/140989728}. For concreteness, we specialize the definition to the correlated ensemble of \textsc{MAX-$k$-XOR-SAT} instances we introduced in Definition~\ref{def:kappa_corr_ens}. Informally, the $R$-OGP is a statement that over $R$ correlated problem instances $g_{\bm{X}}$, solutions achieving a satisfied fraction of $\mu$ clauses are clustered w.h.p. Furthermore, near-optimal solutions from independent problem instances are far apart. We give an illustration of this phenomenon in the special case when $R=2$ in Fig.~\ref{fig:e_ogp}.

\begin{figure}
    \centering
    \includegraphics[width=0.8\linewidth]{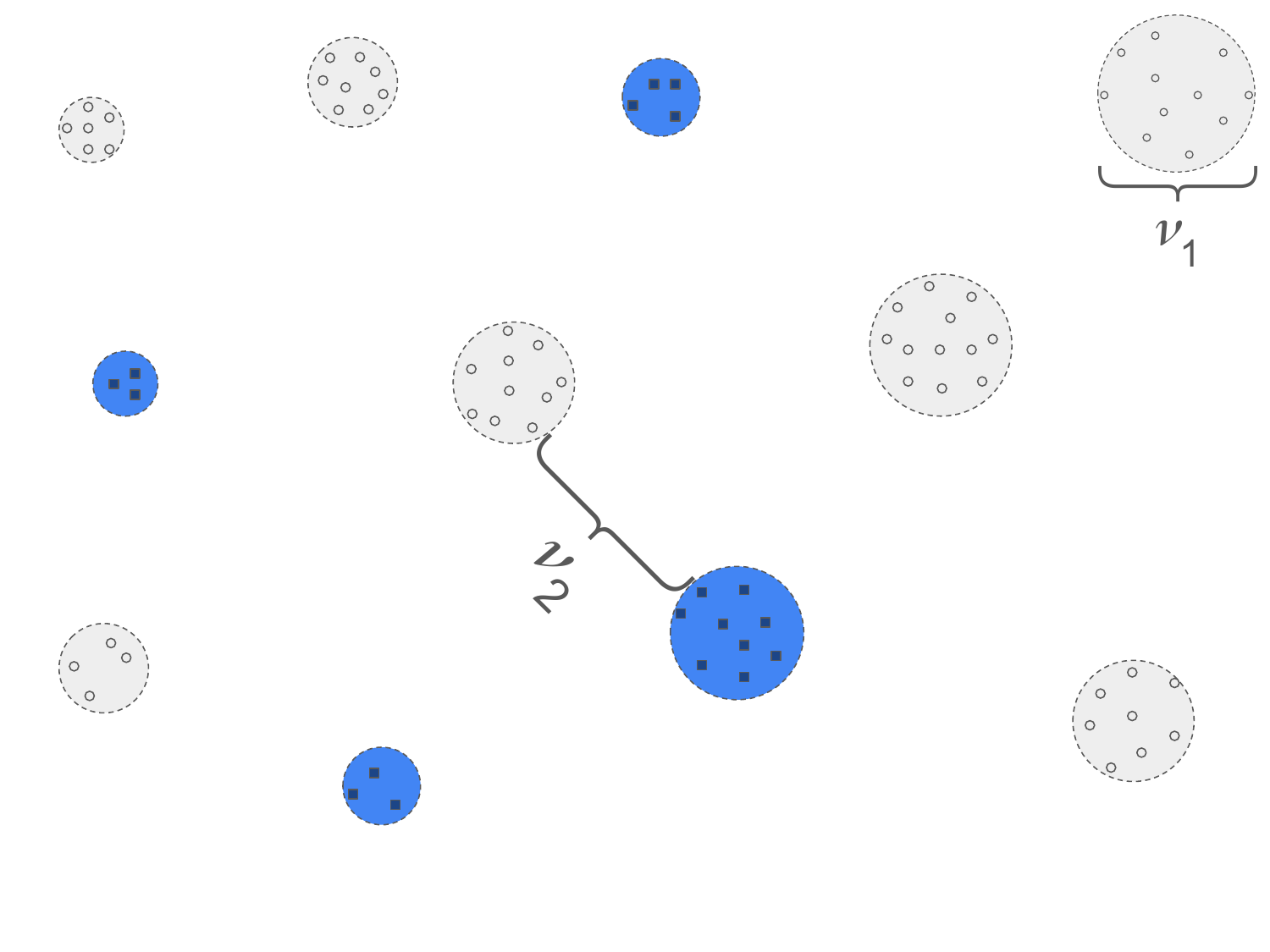}
    \caption{An illustration of the near-optimal solution space of a problem satisfying the $R$-OGP when $R=2$. For two potentially-correlated problem instances (gray circles and blue squares), tuples (small circles and squares) are either close in the semimetric $n^{-1}d_{k,S}$ (within $\nu_1$) or far apart (beyond $\nu_2$). The dotted large circles are a guide to the eye.}
    \label{fig:e_ogp}
\end{figure}
As the $R$-OGP is defined with respect to correlated problem instances, we now introduce a distribution of correlated \textsc{MAX-$k$-XOR-SAT} instances whose marginals are identical to the transposed Gallager ensemble $\mathbb{P}_{\mathrm{G}}$. For each integer $0\leq p\leq m$ define the $m\times m$ projector:
\begin{equation}\label{eq:upsilon_proj_def}
    \bm{\varUpsilon}_p:=\operatorname{diag}\left(\underbrace{1,\ldots,1}_p,\underbrace{0,\ldots,0}_{m-p}\right).
\end{equation}
\begin{definition}[$\kappa$-correlated ensemble]\label{def:kappa_corr_ens}
    Fix $\kappa\in\left[0,1\right]$, draw $\left(\bm{B},\bm{v}^{\left(1\right)}\right)\sim\mathbb{P}_{\mathrm{G}}$ and $\left\{\bm{\tilde{v}}^{\left(r\right)}\sim\mathbb{P}_{\mathrm{par}}\right\}_{r=2}^R$ independently, and let:
    \begin{equation}
        \bm{v}^{\left(r\right)}:=\bm{v}^{\left(1\right)}\oplus\bm{\varUpsilon}_{\left\lfloor\kappa m\right\rfloor}\left(\bm{v}^{\left(1\right)}\oplus\bm{\tilde{v}}^{\left(r\right)}\right).
    \end{equation}
    We call the joint distribution of $\left(\left(\bm{B},\bm{v}^{\left(r\right)}\right)\right)_{r=1}^R$ when drawn in this way the \emph{$\left(\kappa,R\right)$-correlated ensemble} $\mathbb{P}_R^{\left(\kappa\right)}$. When $R=2$, we simply say the $\kappa$-correlated ensemble $\mathbb{P}_2^{\left(\kappa\right)}$.
\end{definition}
Informally, these are $R$-tuples of \textsc{MAX-$k$-XOR-SAT} instances with identical constraint matrices, and whose parities are independent within their first $\kappa$-fraction of parities. By construction, $\mathbb{P}_R^{\left(\kappa\right)}$ marginalizes to $\mathbb{P}_{\mathrm{G}}$ in any variable for any $\kappa\in\left[0,1\right]$.

While typically the notion of distance in the OGP is given by the Hamming distance, for technical reasons we will find it much more convenient to work with what we call the \emph{$k$-minimum Hamming semimetric}. To define this, we first define the $n\times n$ projectors $\bm{\varGamma}_i$, $i\in\left[k\right]$ onto $n/k$-dimensional subspaces:
\begin{equation}\label{eq:gamma_proj_def}
    \bm{\varGamma}_i:=\operatorname{diag}\left(\underbrace{0,\ldots,0}_{\frac{\left(i-1\right)n}{k}},\underbrace{1,\ldots,1}_{\frac{n}{k}},\underbrace{0,\ldots,0}_{\left(1-\frac{i}{k}\right)n}\right).
\end{equation}
Given a subset $S\subset\left[n\right]$, we also define the $n\times n$ projector onto bits labeled by indices not in $S$:
\begin{equation}
    \bm{\varSigma}_S:=\operatorname{diag}\left(\left(\bm{1}\left\{i\notin S\right\}\right)_{i=1}^n\right).
\end{equation}
We now define the semimetric. This is formally a semimetric as it does not satisfy the triangle inequality, though we will later see that it still has nice properties. In its definition we use $\bm{1}\in\mathbb{F}_2^n$ to denote the all-$1$s vector.
\begin{definition}[$\left(k,S\right)$-minimum Hamming semimetric]\label{def:k_min_ham_semi}
    The \emph{$\left(k,S\right)$-minimum Hamming semimetric} between $\bm{z},\bm{z'}\in\mathbb{F}_2^n$ is:
    \begin{widetext}
        \begin{equation}
        d_{k,S}\left(\bm{z},\bm{z'}\right):=\sum_{i=1}^k\min\left(\left\lVert\bm{\varSigma}_S\bm{\varGamma}_i\left(\bm{z}\oplus\bm{z'}\right)\right\rVert_1,\left\lVert\bm{\varSigma}_S\bm{\varGamma}_i\left(\bm{z}\oplus\bm{z'}\oplus\bm{1}\right)\right\rVert_1\right).
    \end{equation}
    \end{widetext}
\end{definition}

We are now equipped to formally define the $R$-OGP as we use it here, specialized toward \text{MAX-$k$-XOR-SAT} over the transposed Gallager ensemble with clause density $\lambda$ and $k$-variable constraints. We consider a slightly more general version which will give us tighter bounds, but the usual multi-OGP definition can be recovered by taking $\sigma=0$ and changing the $d_{k,S}$ semimetric to the usual Hamming distance.
\begin{definition}[$R$-OGP]\label{def:multi_ogp}
    Fix any $R\geq 2\in\mathbb{N}$, $0\leq\nu_1<\nu_2<1/2$, and $0\leq\sigma\leq 1$. If there exists a $\mu^\ast$ such that the following holds for all $\kappa\in\left[0,1\right]$ and functions $S:\mathbb{F}_2^{m\times n}\to\binom{\left[n\right]}{\left\lfloor\sigma n\right\rfloor}$, then $\bm{g}_{\bm{X}}$ is said to satisfy the $R$-OGP with parameters $\left(R,\mu^\ast,\nu_1,\nu_2,\sigma\right)$.

    For any $\mu>\mu^\ast$, the quantity
    \begin{equation}
        \mathbb{P}_{\left(\bm{B},\bm{v}^{\left(r\right)}\right)_{r=1}^R\sim\mathbb{P}_R^{\left(\kappa\right)}}\left[\mathcal{S}_{\left(\bm{B},\bm{v}^{\left(r\right)}\right)_{r=1}^R}^{\left(R,\mu,\nu_1,\nu_2,S_{\bm{B}}\right)}\neq\varnothing\right]
    \end{equation}
    is at most $\exp\left(-\operatorname{\Omega}\left(n\right)\right)$,
    where $\mathcal{S}_{\left(\bm{B},\bm{v}^{\left(r\right)}\right)_{r=1}^R}^{\left(R,\mu,\nu_1,\nu_2,S_{\bm{B}}\right)}$ is the random set of $R$-tuples $\left(\bm{z}^{\left(r\right)}\right)_{r=1}^R\in\left\{-1,1\right\}^{R\times n}$ satisfying:
    \begin{enumerate}
        \item \textit{$\mu$-SAT fraction:} For all $r\in\left[R\right]$,
        \begin{equation}
            g_{\bm{X}^{\left(r\right)}}\left(\bm{z}^{\left(r\right)}\right)\geq\mu\lambda n.
        \end{equation}
        \item \textit{$\left(k,S\right)$-minimum Hamming semimetric bound:} Recalling the $\left(k,S\right)$-minimum Hamming semimetric (Definition~\ref{def:k_min_ham_semi}), for all $r\neq r'\in\left[R\right]$,
        \begin{equation}
            d_{k,S_{\bm{B}}}\left(\bm{z}^{\left(r\right)},\bm{z}^{\left(r'\right)}\right)\in\left[\nu_1 n,\nu_2 n\right].
        \end{equation}
    \end{enumerate}
\end{definition}

We will also consider the \emph{chaos property}, a weaker property than the $R$-OGP that only requires Eq.~\eqref{eq:chaos_property} be satisfied in Definition~\ref{def:multi_ogp}. The reason for this is two-fold:
\begin{enumerate}
    \item As it is weaker than the $R$-OGP, it is easier to prove; in Appendix~\ref{sec:proof_of_chaos_prop} we will compute the chaos property threshold exhibited by \textsc{MAX-$k$-XOR-SAT} for any choice of $R$, though we are only able to compute the threshold for the full $R$-OGP when $R=2$.
    \item While it yields a weaker topological obstruction than the full overlap gap property, it will suffice for obstructing DQI.
\end{enumerate}
For ease of reference later, we give a full, formal definition for the chaos property.
\begin{definition}[Chaos property]\label{def:chaos_prop}
    Fix any $R\geq 2\in\mathbb{N}$, $0<\nu_2<1/2$, and $0\leq\sigma\leq 1$. If there exists a $\mu^\ast$ such that the following holds for all functions $S:\mathbb{F}_2^{m\times n}\to\binom{\left[n\right]}{\left\lfloor\sigma n\right\rfloor}$, $\bm{g}_{\bm{X}}$ is said to satisfy the chaos property with parameters $\left(R,\mu^\ast,\nu_2,\sigma\right)$.

    For any $\mu>\mu^\ast$, the quantity
    \begin{equation}
        \mathbb{P}_{\left(\bm{B},\bm{v}^{\left(r\right)}\right)_{r=1}^R\sim\mathbb{P}_2^{\left(1\right)}}\left[\mathcal{S}_{\left(\bm{B},\bm{v}^{\left(r\right)}\right)_{r=1}^R}^{\left(R,\mu,0,\nu_2,S_{\bm{B}}\right)}\neq\varnothing\right]
    \end{equation}
    is at most $\exp\left(-\operatorname{\Omega}\left(n\right)\right)$,
    where $\mathcal{S}_{\left(\bm{B},\bm{v}^{\left(r\right)}\right)_{r=1}^R}^{\left(R,\mu,0,\nu_2,S_{\bm{B}}\right)}$ is the random set of $R$-tuples $\left(\bm{z}^{\left(r\right)}\right)_{r=1}^R\in\left\{-1,1\right\}^{R\times n}$ satisfying:
    \begin{enumerate}
        \item \textit{$\mu$-SAT fraction:} For all $r\in\left[R\right]$,
        \begin{equation}
            g_{\bm{X}^{\left(r\right)}}\left(\bm{z}^{\left(r\right)}\right)\geq\mu\lambda n.
        \end{equation}
        \item \textit{$\left(k,S\right)$-minimum Hamming semimetric bound:} Recalling the $\left(k,S\right)$-minimum Hamming semimetric (Definition~\ref{def:k_min_ham_semi}), for all $r\neq r'\in\left[R\right]$,
        \begin{equation}
            d_{k,S_{\bm{B}}}\left(\bm{z}^{\left(r\right)},\bm{z}^{\left(r'\right)}\right)\leq\nu_2 n.
        \end{equation}
    \end{enumerate}
    We emphasize that in the chaos property definition, the correlation parameter $\kappa$ is set to $1$.
    That is, unlike in the $R$-OGP, the $R$ instances are independent.
\end{definition}

\subsection{Stable Quantum Algorithms}\label{sec:stab_qas}

We now review \emph{stable quantum algorithms}. These are quantum algorithms which are Lipschitz in their inputs with respect to the \emph{quantum Wasserstein distance} on quantum states~\cite{anschuetz2025efficientlearningimpliesquantum}. In what follows we write out the definition of stability when specialized to the case of the Gallager ensemble of constraint matrices. We use $\mathcal{S}_n$ to denote the set of $n$-qubit pure states, and $\mathcal{S}_n^{\text{m}}$ to denote the set of $n$-qubit mixed states. We also use $D:=m\left(n+1\right)$ as shorthand for the total number of variables defining a \textsc{MAX-$k$-XOR-SAT} instance, i.e., the dimension of the vector $\operatorname{vec}\left(\bm{B}\right)\oplus\bm{v}$.

We consider in what follows a quantum algorithm $\bm{\mathcal{A}}$ to be a map from a space of inputs $\mathbb{F}_2^D$ as well as a probability space $\left(\varOmega,\mathbb{P}_\varOmega\right)$ to the set of quantum states $\mathcal{S}_n^{\text{m}}$. The addition of a probability space allows us to consider randomized quantum algorithms on the same footing as deterministic quantum algorithms. While in principle one can incorporate the randomness directly into $\bm{\mathcal{A}}$, it turns out doing so has implications on the stability parameters; see~\cite{anschuetz2025efficientlearningimpliesquantum} for more discussion on this technicality.

We now have all of the ingredients to define stability when specialized to \text{MAX-$k$-XOR-SAT}.
Informally, we say a quantum algorithm is stable if with high probability the algorithm is a Lipshitz function of the \textsc{MAX-$k$-XOR-SAT} parities $\bm{v}$, except perhaps within a small bad subset of qubits.
\begin{definition}[Stable quantum algorithm]\label{def:stable_qas}
    Let $\bm{\mathcal{A}}:\mathbb{F}_2^D\times\varOmega\to\mathcal{S}_n^{\mathrm{m}}$ be a quantum algorithm with associated probability space $\left(\varOmega,\mathbb{P}_\varOmega\right)$. Furthermore, let $\mathcal{K}\subset\left[0,1\right]$ be a set of $\kappa$'s, each labeling a $\kappa$-correlated  instances as defined in Definition~\ref{def:kappa_corr_ens}.

    $\bm{\mathcal{A}}$ is said to be \emph{$\left(f,L,\mathcal{K},p_{\mathrm{st}}\right)$-stable} if, for all $\kappa\in\mathcal{K}$,
    \begin{widetext}
    \begin{equation}
        \mathbb{P}_{\left(\left(\bm{B},\bm{v}\right),\left(\bm{B},\bm{v'}\right),\omega\right)\sim\mathbb{P}_2^{\left(\kappa\right)}\otimes\mathbb{P}_\varOmega}\left[\exists S_{\bm{B}}\in\binom{\left[n\right]}{f}:\left\lVert\Tr_{S_{\bm{B}}}\left(\bm{\mathcal{A}}\left(\bm{B},\bm{v},\omega\right)-\bm{\mathcal{A}}\left(\bm{B},\bm{v'},\omega\right)\right)\right\rVert_{W_2}\leq L\left\lVert\bm{v}-\bm{v'}\right\rVert_1\right]\geq 1-p_{\mathrm{st}}.
    \end{equation}
    \end{widetext}
    If $\mathcal{K}$ can be arbitrary, we say $\bm{\mathcal{A}}$ is $\left(f,L,p_{\mathrm{st}}\right)$-stable. Furthermore, if $\bm{\mathcal{A}}$ is a constant function of $\omega\in\varOmega$, we say $\bm{\mathcal{A}}$ is deterministic.
\end{definition}
Here, $\left\lVert\cdot\right\rVert_{W_2}$ induces the quantum Wasserstein distance of order $2$~\cite{anschuetz2025efficientlearningimpliesquantum} which we review in Appendix~\ref{sec:quant_wass_dist}. In~\cite{anschuetz2025efficientlearningimpliesquantum}, it was shown that many standard quantum algorithms are stable according to Definition~\ref{def:stable_qas}, including logarithmic-depth variational quantum algorithms, phase estimation, and logarithmic-depth Lindbladian dynamics.

We also here review the concept of a near-optimal quantum algorithm~\cite{anschuetz2025efficientlearningimpliesquantum}. We specialize to the case when the output is a classical state as we are here interested in the hardness of classical problems; this will allow us to directly incorporate the effects of shot noise in our definition. In particular, let $\mathcal{M}:\mathcal{S}_n^{\text{m}}\to\operatorname{Conv}\left(\mathcal{B}\right)$ be the channel denoting measurement in the computational basis, where $\mathcal{B}$ is the set of pure computational basis states on $n$ qubits and $\operatorname{Conv}\left(\cdot\right)$ denotes the convex hull. We can equivalently consider this as a stochastic map to pure computational basis states $\bm{\widetilde{\mathcal{M}}}:\mathcal{S}_n^{\text{m}}\times\left[0,1\right]\to\mathcal{B}$, where:
\begin{equation}\label{eq:comp_meas_chan_def}
    \mathbb{E}_{\upsilon\sim\mathcal{U}}\left[\bm{\widetilde{\mathcal{M}}}\left(\bm{\rho},\upsilon\right)\right]=\bm{\mathcal{M}}\left(\bm{\rho}\right),
\end{equation}
with $\mathcal{U}$ denoting the uniform distribution over $\left[0,1\right]$. We then have the following definition, where for simplicity we slightly abuse notation and write $g_{\bm{X}}$ as a map $\mathcal{B}\to\mathbb{Z}$ rather than $\mathbb{F}_2^n\to\mathbb{Z}$.
\begin{definition}[Near-optimal quantum algorithm]\label{def:no_qas}
    Let $\bm{\mathcal{A}}:\mathbb{F}_2^D\times\varOmega\to\mathcal{S}_n^{\text{m}}$ be a quantum algorithm with associated probability space $\left(\varOmega,\mathbb{P}_\varOmega\right)$, and let $\bm{\widetilde{\mathcal{M}}}$ be as in Eq.~\eqref{eq:comp_meas_chan_def}. Then, $\bm{\mathcal{A}}$ is said to be \emph{$\left(\gamma,p_{\text{f}}\right)$-optimal} for $g_{\bm{X}}$ over $\mathbb{P}_{\mathrm{G}}$ if
    \begin{widetext}
        \begin{equation}
        \mathbb{P}_{\left(\bm{X},\omega,\upsilon\right)\sim\mathbb{P}_{\mathrm{G}}\otimes\mathbb{P}_\varOmega\otimes\mathcal{U}}\left[g_{\bm{X}}\left(\bm{\widetilde{\mathcal{M}}}\left(\bm{\mathcal{A}}\left(\bm{X},\omega\right),\upsilon\right)\right)\geq\gamma\lambda n\right]\geq 1-p_{\text{f}}.
    \end{equation}
    \end{widetext}
\end{definition}
Informally, a quantum algorithm is $\left(\gamma,p_{\text{f}}\right)$-optimal if it satisfies a fraction $\gamma$ of clauses with probability $1-p_{\text{f}}$ over both the randomness of the algorithm and the randomness of drawing problem instances from $\mathbb{P}_{\mathrm{G}}$.

\subsection{Stable Quantum Algorithms Fail to Optimize \texorpdfstring{\textsc{MAX-$k$-XOR-SAT}}{MAX-k-XOR-SAT}}\label{sec:stable_algs_fail}

In this section we show that \textsc{MAX-$k$-XOR-SAT} exhibits a spin glass phase transition over the transposed Gallager ensemble. We then provide a proof sketch that stable quantum algorithms cannot prepare states beyond this transition, allowing us to prove average-case hardness results for \textsc{MAX-$k$-XOR-SAT} at fixed $k$ for the first time for both quantum and classical stable algorithms.

We first claim that the transposed Gallager ensemble of \textsc{MAX-$k$-XOR-SAT} exhibits the chaos property, which will suffice to later show that DQI fails to prepare near-optimal states of this ensemble. We provide a proof sketch here and leave the full proof for Appendix~\ref{sec:proof_of_chaos_prop}, where we also prove that transposed Gallager \textsc{MAX-$k$-XOR-SAT} exhibits an overlap gap property for completeness.
\begin{restatable}[Transposed Gallager \textsc{MAX-$k$-XOR-SAT} exhibits the chaos property]{theorem}{tgxorsatchaosprop}\label{thm:chaos_prop}
    Fix any $0<\nu_2<1/2$, $R\in\mathbb{N}$, and $0\leq\sigma\leq 1$. Transposed Gallager \textsc{MAX-$k$-XOR-SAT} satisfies the chaos properties with parameters $\left(R,\mu,\nu_2,\sigma\right)$ for all $\mu$ greater than:
    \begin{equation}
        1-\operatorname{H}_2^{-1}\left(1-\frac{1}{R\lambda}-\frac{1}{\lambda}\left(1-\frac{1}{R}\right)\left(\operatorname{H}_2\left(\nu_2\right)+\sigma\right)\right).
    \end{equation}
    In particular, choosing:
    \begin{equation}
        R=\max\left(\left\lceil\nu_2^{-2}\right\rceil,\left\lceil\sigma^{-2}\right\rceil\right),
    \end{equation}
    the chaos property holds for any:
    \begin{equation}
        \begin{aligned}
            \mu&>1-\operatorname{H}_2^{-1}\left(1-\left(1+\operatorname{o}_{\nu_2^{-1},\sigma^{-1}}\left(1\right)\right)\left(\frac{\operatorname{H}_2\left(\nu_2\right)+\sigma}{\lambda}\right)\right)\\
            &=\frac{1}{2}+\left(1+\operatorname{o}_{\nu_2^{-1},\sigma^{-1}}\left(1\right)\right)\sqrt{\frac{\nu_2\ln\left(1/\nu_2\right)+\sigma}{2\lambda}}.
        \end{aligned}
    \end{equation}
\end{restatable}
\begin{proof}[Proof sketch]
    We show the bound holds uniformly for any fixed $\bm{B}$. A simple combinatorial argument shows that the number $Z$ of $\left(\bm{z}^{\left(r\right)}\right)_{r=1}^R\in\left\{-1,1\right\}^{R\times n}$ satisfying $d_{k,S_{\bm{B}}}\left(\bm{z}^{\left(r\right)},\bm{z}^{\left(r'\right)}\right)\leq\nu_2 n$ for all $r\neq r'\in\left[R\right]$ is bounded by:
    \begin{equation}
        Z\leq\exp_2\left(n+R\left(\operatorname{H}_2\left(\nu_2\right)+\sigma\right)n\right).
    \end{equation}
    Now, let $X_r$ be the random variable counting the number of entries for which $\bm{v}^{\left(r\right)}$ equals $\bm{B}\bm{z}^{\left(r\right)}$. As each $\bm{v}^{\left(r\right)}$ has i.i.d.\ Bernoulli random entries and the $\bm{v}^{\left(r\right)}$ are independent, $X_r$ are i.i.d.\ binomial random variables. A standard tail bound then says:
    \begin{widetext}
        \begin{equation}
        \mathbb{P}_{\left(\bm{v}^{\left(r\right)}\right)_{r=1}^R}\left[\bigcap_{r=1}^R\left\lVert\bm{B}\bm{z}^{\left(r\right)}\oplus\bm{v}^{\left(r\right)}\right\rVert_1\leq\left(1-\mu\right)\lambda n\right]\leq\exp_2\left(R\operatorname{H}_2\left(\mu\right)\lambda n-R\lambda n+\operatorname{O}\left(\log\left(n\right)\right)\right).
    \end{equation}
    \end{widetext}
    Union bounding over configurations then gives a requirement:
    \begin{equation}
        1+R\left(\operatorname{H}_2\left(\nu_2\right)+\sigma\right)+R\lambda\operatorname{H}_2\left(\mu\right)-R\lambda<0
    \end{equation}
    for the probability to be exponentially vanishing. The result then follows by solving for $\mu$.
\end{proof}

We now show that stable quantum algorithms fail to optimize transposed Gallager \textsc{MAX-$k$-XOR-SAT} at approximation ratios for which the model exhibits an overlap gap property. We also show that the weaker chaos property inhibits optimization by quantum algorithms with even stricter stability properties, which we will later see includes DQI. The full technical details of the proof are provided in Appendix~\ref{sec:ogp_implies_alg_hardness}.
\begin{restatable}[Stable Quantum Algorithms Fail for Transposed Gallager \textsc{MAX-$k$-XOR-SAT}]{theorem}{stableqasfailgallager}\label{thm:stab_algs_fail}
    Fix any $\left(R,\mu,\nu_1,\nu_2,\sigma\right)$ such that \textsc{MAX-$k$-XOR-SAT} satisfies the multi-OGP with these parameters (Definition~\ref{def:multi_ogp}). Fix any $Q\in\mathbb{N}$. There exists a constant $\delta>0$ depending only on $R$ and $Q$ such that the following holds at sufficiently large $n$. Fix $f,L,p_{\text{st}}$ and $\gamma,p_{\text{f}}$ satisfying the following inequalities:
    \begin{align}
        p_{\text{st}}+p_{\text{f}}&\leq\delta;\\
        \gamma&\geq\mu;\\
        f&\leq\left\lfloor\sigma n\right\rfloor;\\
        \frac{\lambda L}{Q}\sqrt{\frac{Q+1}{1-\left(Q+1\right)\sqrt{3p_{\text{f}}}}}&<\frac{\nu_2-\nu_1}{2}.
    \end{align}
    There is no $\left(f,L,p_{\text{st}}\right)$-stable and $\left(\gamma,p_{\text{f}}\right)$-optimal quantum algorithm for tranposed Gallager \textsc{MAX-$k$-XOR-SAT}.

    Similarly, fix any $\left(R,\mu,\nu_2,\sigma\right)$ such that \textsc{MAX-$k$-XOR-SAT} satisfies the chaos property with these parameters (Definition~\ref{def:chaos_prop}). Fix $f,L,p_{\text{st}}$ and $\gamma,p_{\text{f}}$ satisfying the following inequalities:
    \begin{align}
        p_{\text{st}}+p_{\text{f}}&\leq\frac{1}{12};\\
        \gamma&\geq\mu;\\
        f&\leq\left\lfloor\sigma n\right\rfloor;\\
        \lambda L\sqrt{\frac{2}{1-2\sqrt{3p_{\text{f}}}}}&<\frac{\nu_2}{2}.
    \end{align}
    There is no $\left(f,L,p_{\text{st}}\right)$-stable and $\left(\gamma,p_{\text{f}}\right)$-optimal quantum algorithm for tranposed Gallager \textsc{MAX-$k$-XOR-SAT}.
\end{restatable}
\begin{proof}[Proof sketch]
    At a high level, our strategy follows that of~\cite[Theorem~16]{anschuetz2025efficientlearningimpliesquantum}, but differs in a few key ways:
    \begin{itemize}
        \item As we only consider classical problems, the proof can be simplified substantially.
        \item As the eigenbasis of the problem is fixed, we can account for shot noise in the failure probability of the algorithm.
        \item The $\left(k,S\right)$-minimum Hamming semimetric (Definition~\ref{def:k_min_ham_semi}) which we use to define (and later prove) our OGP and chaos property does not satisfy the triangle inequality, introducing complications to the proof.
        \item $f$ no longer need depend on $Q$, which greatly improves the satisfied fraction at which DQI (or any other stable algorithm with nontrivial $f$) is obstructed. This is achieved by slightly modifying the definition of stability from~\cite{anschuetz2025efficientlearningimpliesquantum}, which requires further modifications to the proof.
        \item Our strategy requires a novel strategy to interpolate between correlated problem instances as we are no longer considering Gaussian randomness.
    \end{itemize}

    We now give a sketch of our approach. We proceed by contradiction, assuming there exists a quantum algorithm $\bm{\mathcal{A}}$ that is $\left(f,L,p_{\text{st}}\right)$-stable and $\left(\gamma,p_{\text{f}}\right)$-optimal for \textsc{MAX-$k$-XOR-SAT} instances drawn from $\mathbb{P}_{\mathrm{G}}$.
    \begin{enumerate}
        \item We show that the existence of $\bm{\mathcal{A}}$ implies the existence of a stable, near-optimal quantum algorithm which is also deterministic. This follows by a pigeonhole principle-like argument to show that there must exist some good random seed $\omega\in\varOmega$ for which $\bm{\mathcal{A}}$ has only slightly worse stability parameters.
        \item We show that the existence of a stable, near-optimal, deterministic quantum algorithm for this problem implies the existence of a stable, near-optimal classical algorithm $\bm{\mathcal{I}}$ (at the cost of worse constants). This follows by defining $\bm{\mathcal{I}}$ as composing $\bm{\mathcal{A}}$ with a dephasing channel. By the locality of the dephasing channel, this algorithm does not change the stability parameters by much.
        \item We consider an interpolation path over many replicas, and show that $\bm{\mathcal{I}}$ is stable and near-optimal over the replicas with high probability. The path is constructed by successively resampling steps of the Gallager construction as described in Sec.~\ref{sec:ldpc_codes}. The result then follows by a union bound.
        \item Due to \textsc{MAX-$k$-XOR-SAT} with independent parities having distant solutions (as implied by the chaos property, Definition~\ref{def:chaos_prop}), we show that with high probability all $R$-tuples of $T$ independently-sampled instances have near-optimal states which are distant in $\left(k,S\right)$-minimum Hamming semimetric with high probability. Just as the previous step, this follows from a simple union bound and the definition of the chaos property.
        \item We show that with high probability there exists some point along the interpolation path where this algorithm outputs a configuration disallowed by the multi-OGP due to the pairwise-stability and near-optimality of $\bm{\mathcal{I}}$. In particular, by the previous step, $\bm{\mathcal{I}}$ must output distant solutions at the final point of the interpolation path---when the problem instances are independent---and must output identical instances at the initial point of the interpolation path, when all of the problem instances are identical. By the Lipschitzness of $\bm{\mathcal{I}}$, it must therefore output solutions in the disallowed subspace at some point along the interpolation path; in other words, $\bm{\mathcal{I}}$ is unable to ``jump'' between clusters along the interpolation path due to its stability.
        \item Finally, we show that there exist choices of parameters such that all ``with high probability'' events have a nontrivial intersection. This contradicts the assumption of the existence of $\bm{\mathcal{A}}$.
    \end{enumerate}
\end{proof}
Finally, to supplement Theorem~\ref{thm:bp_dqi_fails_ogp} of the main text, we show directly that DQI with an inverse Lipschitz decoder is obstructed by the multi-OGP of the transposed Gallager ensemble, though unfortunately we are unable to compute explicit threshold.
\begin{theorem}[DQI with an inverse Lipschitz decoder is obstructed by the overlap gap property]\label{thm:inv_lip_const_k_ogp_bound}
    Consider $\operatorname{DQI}_\ell$ with an $L$-inverse Lipschitz decoder (Definition~\ref{def:l_inv_lip_dec}) for any $n$-independent choice of $L$. Let $\left(R,\mu_{\mathrm{OGP}},\nu_1,\nu_2,\sigma\right)$ be parameters for which the $R$-OGP is satisfied for the transposed Gallager ensemble of \textsc{MAX-$k$-XOR-SAT}. For sufficiently large $n$, $\operatorname{DQI}_\ell$ does not succeed in sampling a bit string $\bm{z}\in\mathbb{F}_2^n$ achieving a satisfied fraction:
    \begin{equation}
        \frac{g\left(\bm{z}\right)}{m}>\mu_{\mathrm{OGP}}
    \end{equation}
    with constant probability over both the randomness of the algorithm and $\mathbb{P}_{\mathrm{G}}$.
\end{theorem}
\begin{proof}
    By Theorem~\ref{thm:dqi_is_lipschitz}, $\operatorname{DQI}_\ell$ with an $L$-inverse Lipschitz decoder is $\left(0,L,0\right)$-stable. The first result then follows from Theorem~\ref{thm:stab_algs_fail} by taking $Q$ sufficiently large and taking $p_{\mathrm{st}}+p_{\mathrm{f}}$ to be smaller than the constant $\delta>0$ defined in the theorem statement.
\end{proof}

\subsection*{Acknowledgments}

The authors thank Madhu Sudan for an insightful discussion and Kunal Marwaha for comments on a draft of this work. 
E.R.A. is funded in part by the Walter Burke Institute for Theoretical Physics at Caltech. 
J.Z.L. is funded by a National Defense Science and Engineering Graduate (NDSEG) fellowship and by the U.S. Department of Energy, Office of Science, National Quantum Information Science Research Centers, Co-design Center for Quantum Advantage (C2QA) under contract number DE-SC0012704.

\bibliography{main}

\newpage
\onecolumngrid
\appendix

\section*{Supplementary Material}

\section{Maximum Decodable Error Weight for Gallager Ensemble Codes}\label{sec:decode_weight}

Recall that we are working with the $(k, d)$ Gallager ensemble, an ensemble of LDPC codes whose parity check matrices form $(k, d)$-regular bipartite graphs in their Tanner graph representation.
Each check acts on $d$ bits, and each bit is acted on by $k$ checks.
Equivalently, every data (left) node has degree $k$ and every check (right) node has degree $d$.

\subsection{Limitations of Belief Propagation}

We here study the asymptotic performance of belief propagation decoding on the $(k, d)$ Gallager ensemble.
We specialize to the Gallager ensemble for simplicity, but our result holds for any LDPC ensemble which is asymptotically locally tree-like, which includes not only the Gallager ensemble but also the random $(k, d)$-regular Tanner graph ensemble and irregular degree distribution ensembles~\cite{burshtein2002bounds}.

Fix a design rate $r$ and let the number of data bits be $n$.
Belief propagation (BP) executes in a round-by-round fashion, the total number of which $r$ are taken to be fixed ahead of time.
The fixing of rounds precludes the technical possibility of letting $r$ scale with $n$, but is a necessary artifact for the theoretical analysis of BP only because it implies that the Tanner graph is locally tree-like (i.e. has no cycles of length $\sim r$) asymptotically, thereby enabling inductive probability computations.
As a consequence, a fixed round complexity is a standard model choice in the density evolution analysis of BP~\cite{burshtein2002bounds,richardson2008modern,richardson2001capacity} as a method of enforcing asymptotic local tree-like structure of the Tanner graph.
In practice, if cycles are present, the performance of BP generally worsens, as the posterior probabilities necessary for deciding the decoded message may not converge at all.
In what follows, we define $p_{\text{BP}}^*(k)$ to be the maximum error probability of a binary symmetric channel in which a $r$-round belief propagation decoder has a failure probability that vanishes as $n \to \infty$.
The quantity does not depend on $r$, so long as it is fixed while $n$ tends to $\infty$.
This quantity's dependence on $d$ is implicitly defined by its dependence on $k$ when $r$ is fixed, since by regularity $k = (1-r)d = \lambda^{-1} d$.

\begin{theorem}[Limitations of belief propagation]
\label{thm:BP_limits}
    Fix any number of rounds $r$.
    There exists $k_0 > 0$ such that for all $k \geq k_0$, $p_{\operatorname{BP}}^*(k) \leq \frac{1}{4} \lambda^{-1} \frac{\ln k}{k}$.
\end{theorem}

Before we prove the above theorem, we introduce some notation involved in the execution of belief propagation (BP).
Each round involves a round of \emph{right-bound} messages sent from data nodes to check nodes, and a round of \emph{left-bound} messages sent from check nodes to data nodes. 
Fix an edge $(v, w)$ between a data node $v$ and a check node $w$.
Since we are concerned only with the asymptotic limit $n \to \infty$ and the Gallager ensemble is locally tree-like as $n \to \infty$, each data node has a unique parent node, which we define to be $w$.
A right-bound message $m_v \in [0,1]$ from $v$ to $w$ is interpreted as the posterior probability that the corresponding data bit is $1$.
For a fixed such data node $v$, let $X^{(t)}$ be the message sent by $v$ to $w$.
The final decoded message is given by $v = 1$ if $X^{(r)}  \geq 1/2$ and $v = 0$ otherwise.
Let $x_t := \mathbb{E}[X^{(t)}]$, where the expectation is taken over the randomness of the error.
The algorithm is initialized with $X^{(0)}$, which is defined as the prior probability that the transmitted bit corresponding to $v$ is $1$, given the noisy bit received under the binary symmetric channel with error probability $p$.
Since we work with linear codes, we may assume without loss of generality that the transmitted codeword is $0^n$.
Consequently, $X^{(0)}$ is $p$ if the received bit is $0$, and $1-p$ otherwise.
Hence, \begin{align}
    x_0 = p(1-p) + (1-p) p = 2 p (1 - p) .
\end{align}
Moreover, the decoding fails if $X^{(r)} \geq 1/2$, since the correct codeword is $0^n$.

The finite-$k$ upper bound on BP has been characterized analytically by \cite{burshtein2002bounds}.
Our proof proceeds by an asymptotic analysis of their finite bound; we rely on the following two facts from the finite analysis.
\begin{lemma}[\cite{burshtein2002bounds}, Theorem 3]
\label{lemma:burshtein_thm_3}
For any $\epsilon > 0$, $r \geq 1$, $k, d$, the following inequality holds for $n$ sufficiently large.
\begin{align}
    x_{r + 1} > (1 - \epsilon) x_0 [1 - (1 - 2 x_{r})^{d-1}]^{k-1} .
\end{align}
\end{lemma}

The choice of fixed $r$ is irrelevant in our proof as long as it is fixed so that the cycle-free assumption utilized in the proof of the above lemma holds; we henceforth disregard the exact choice of $r$.

\begin{lemma}[\cite{burshtein2002bounds}, Lemma 3]
\label{lemma:burshtein_lemma_3}
For $X^{(t)}$ as defined above, suppose that $x_t \geq \gamma$.
Then \begin{align}
    \Pr[X^{(t)} \geq \frac{1}{2}] \geq \frac{1}{2} (1 - \sqrt{1 - 2 \gamma}) .
\end{align}
\end{lemma}

With these lemmas in place, we next prove Theorem~\ref{thm:BP_limits}.
\begin{proof}[Proof of Theorem~\ref{thm:BP_limits}]
We consider a binary symmetric channel of probability $p = \alpha \frac{\ln k}{k}$, where $\alpha > 0$ is a constant to be determined.
Then \begin{align}
    x_0 = 2 p (1 - p) = 2 \alpha \frac{\ln k}{k} \left(1 - \alpha \frac{\ln k}{k} \right) = (2 \alpha - \operatorname{o}_k(1)) \frac{\ln k}{k} .
\end{align}
We prove by induction that $x_{t} \geq (1 - \operatorname{o}_k(1)) x_0$ for all $t \leq r$, when $n$ is sufficiently large.
More precisely, we first fix $k$, and take $n \to \infty$ so that we may apply Lemma~\ref{lemma:burshtein_thm_3}.
We then take the limit $k \to \infty$, giving rise to $\operatorname{o}_k(1)$ factors.
The induction base case is immediate; for the induction hypothesis, suppose that $x_{t} \geq (1 - \epsilon) x_0$.
Choose any $\epsilon > 0$ such that $\beta := 4 \lambda \alpha (1 - \epsilon) > 1$.
Then by Lemma~\ref{lemma:burshtein_thm_3}, with the notation $\lambda^{-1} = 1 - r$ so that $\lambda = \frac{d}{k}$, \begin{align}
    x_{t+1} & > (1 - \epsilon) x_0 [1 - (1 - 2 x_t)^{d-1}]^{k-1} \\
    & \geq (1 - \epsilon) x_0 [1 - \exp(-2 (d-1) x_t)]^{k-1} \\
    & \geq (1 - \epsilon) x_0 [1 - \exp(-2 (d-1) (1 - \epsilon) x_0)]^{k-1} \\
    & \geq (1 - \epsilon) x_0 \left[1 - \exp(-2 (d-1) (1 - \epsilon) (2 \alpha - \operatorname{o}_k(1)) \frac{\ln k}{k}) \right]^{k-1} \\
    & = (1 - \epsilon) x_0 \left[1 - \exp(-2 \lambda (1 - \epsilon) (2 \alpha - \operatorname{o}_k(1)) \ln k) \right]^{k-1} \\
    & = (1 - \epsilon) x_0 \left[1 - k^{- (1 - \epsilon) (4 \lambda \alpha - \operatorname{o}_k(1))} \right]^{k-1} \\
    & = (1 - \epsilon) x_0 \left[1 - k^{- (\beta -\operatorname{o}_k(1))} \right]^{k-1}  .
\end{align}
Here we have used the inequality $1 - x \leq e^{-x}$ and applied the induction hypothesis.
For all sufficiently large $k$, $\beta - \operatorname{o}_k(1) < 0$, and thus $k^{- (\beta -\operatorname{o}_k(1))} \in [0, 1]$.
We next apply the inequality $(1 - q)^m \geq 1 - mq$ for all positive integers $m$ and $q \in [0, 1]$.
This follows by a probabilistic interpretation: for $m$ independent coin flips with probability $q$ of heads, the probability that at least one heads appears is $1 - (1 - q)^m$.
At the same time, by a union bound, $1 - (1 - q)^m \leq mq$, from which the bound follows.
In our case, \begin{align}
    x_{t+1} & > (1 - \epsilon) x_0 [1 - (k-1) k^{- (\beta -\operatorname{o}_k(1))}] \\
    & \geq (1 - \epsilon) x_0 [1 - k \cdot k^{- (\beta -\operatorname{o}_k(1))}] \\
    & \geq (1 - \epsilon) x_0 [1 - k^{- (\beta - 1 -\operatorname{o}_k(1))}] \\
    & = (1 - \epsilon) x_0 (1 - \operatorname{o}_k(1)) ,
\end{align}
where the last line follows because $\beta > 1$ by definition.
Since we have freedom of choosing $\epsilon$ to be arbitrarily small so long as it does not depend on $n$, we choose $\epsilon = \operatorname{o}_k(1)$, e.g. $\epsilon = \frac{1}{10k}$.
This choice makes not issue in the requirement that $\beta > 1$, since $\beta$ increases as $\epsilon$ decreases.
Thus, as claimed, $x_{t+1} \geq (1 - \operatorname{o}_k(1)) x_0$.
To instantiate $\alpha$, we choose $\alpha = c \lambda^{-1}$ for any $c > 1/4$.
Next, we apply Lemma~\ref{lemma:burshtein_lemma_3}, so that \begin{align}
    \Pr[X^{(r)} \geq \frac{1}{2}] & \geq \frac{1}{2} \left( 1 - \sqrt{1 - 2 (1 - \operatorname{o}_k(1)) x_0} \right) \\
    & \geq \frac{1}{2} \left( 1 - \sqrt{1 - 2 (1 - \operatorname{o}_k(1)) (2 \alpha - \operatorname{o}_k(1)) \frac{\ln k}{k}} \right)  \\
    & = \frac{1}{2} \left( 1 - \sqrt{1 - (4 \alpha - \operatorname{o}_k(1)) \frac{\ln k}{k}} \right) \\
    & \geq \frac{1}{2} \left( (2 \alpha - \operatorname{o}_k(1)) \frac{\ln k}{k} \right) \\
    & = (1 - \operatorname{o}_k(1)) p ,
\end{align}
where we have used the inequality $\sqrt{1 - x} \leq 1 - \frac{x}{2}$ for $x \leq 1$.
Hence, the probability of failure fails to vanish as $n \to \infty$, when $k$ is sufficiently large.
In fact, the above calculation shows that for sufficiently large $k$, as $n \to \infty$, the decoded error probability is effectively no better than the error probability with no decoding at all.
This threshold is given by \begin{align}
    p^*_{\text{BP}}(k) \leq p = \alpha \frac{\ln k}{k} = c \lambda^{-1} \frac{\ln k}{k} .
\end{align}
Since this bound holds for all $c > 1/4$, it holds for $c = 1/4$.
\end{proof}

We further observe that, due to its algorithmic locality, BP is an inverse Lipschitz decoder, and this property is insensitive to the details of the decoder implementation.

\begin{lemma}[BP algorithms are inverse Lipschitz]
\label{kemma:BP_inverse_Lipschitz}
Let $\mathcal{D}$ be any decoding algorithm on a $(k, d)$-regular Tanner graph which operates as follows.
Each node is equipped with an internal state.
In each of $r$ rounds, each node sends a message to each of its neighbors, where the message is a function only of the node's state.
Then, each node updates its state based on all received messages.
At the beginning of the algorithm, each check node is initialized with the measured syndrome.
Then $\mathcal{D}$ is $L$-inverse Lipschitz, where $L \leq (kd)^{r/2 + 1}$.
\end{lemma}

\begin{proof}
We wish to show that $\left\lVert\mathcal{D}^{-1}(\bm{y}) - \mathcal{D}^{-1}(\bm{y}')\right\rVert_1 \leq L \lVert \bm{y} - \bm{y}' \rVert_1$ for any $\bm{y}, \bm{y}'$ in the range of $\mathcal{D}$.
Suppose that $\bm{y}$ and $\bm{y}'$ differ at a single bit.
We consider the backwards light cone of that bit in the algorithm.
The act of modifying one data bit at the end of the algorithm can affect at most $k$ check bits in the previous round, $r-1$.
Then, in round $r-2$, this effect can propagate to at most $d$ data bits per check bit, for a total of $kd$ bits.
Then, in round $r-3$, at most $kd \cdot k$ check bits are affected.
In general, every 2 rounds, the initial modification can propagate to affect at most $kd$ other check bits.
Hence, once we propagate to the beginning of the execution, at most $(kd)^{\lceil r/2 \rceil} \leq (kd)^{r/2 + 1}$ check bits are affected, and hence the initial syndrome given by $\mathcal{D}^{-1}$ differs by at most $(kd)^{r/2 + 1}$ bits.
More generally, each different bit between $\bm{y}$ and $\bm{y}'$ induces a backpropagating light cone of width $\leq (kd)^{r/2 + 1}$.
Their intersections can only decrease the number of bits affected, and thus \begin{align}
    \left\lVert\mathcal{D}^{-1}(\bm{y}) - \mathcal{D}^{-1}(\bm{y}')\right\rVert_1 \leq L \lVert \bm{y} - \bm{y}' \rVert_1
\end{align}
as claimed.
\end{proof}

\subsection{Limitations of Non-Local Decoders}

Beyond belief propagation-type approximate iterative decoders, state-of-the-art exact decoders in the literature are primarily based on linear programming (LP) relaxations and their semidefinite programming (SDP) generalizations~\cite{feldman2003using,koetter2005characterizations,ghazi2017lp}.
While these exact decoders are known to correct a typical $(k, d)$-LDPC code for all errors up to weight $\Omega(n/d)$, they are also upper bounded by the same threshold $\operatorname{O}(n/d)$.
We here provide a self-contained proof for the LP relaxation decoder; their generalizations in the LP/SDP hierarchies have been studied on an ensemble very similar to the Gallager ensemble and shown to fail on errors of relative weight $\frac{3}{d} (1 + \operatorname{o}_d(1))$~\cite{ghazi2017lp}.
Hence, we expect the bound to hold similarly for the Gallager ensemble.

The LP decoder arises from the observation that the problem of decoding can be expressed as an integer linear program (ILP).
Although the general ILP problem is NP-hard~\cite{papadimitriou1981complexity}, it becomes easy if we do not restrict the solutions to be integers~\cite{khachiyan1979polynomial}.
LP decoders thus express the decoding problem as an ILP, ``relax'' the constraints to not necessarily be integer-valued, and then run a LP solver in hopes that the solution returned is nevertheless the desired error.
More precisely, given a syndrome $\bm{s} \in \mathbb{Z}_2^n$, the ILP is given by \begin{align}
    \min_{\bm{x} \in \mathbb{Z}_2^m} \lVert \bm{x} \rVert_1 \; \text{subject to} \; \bm{Hx} = \bm{s} .
\end{align}
In the LP relaxation, the domain of $x_i$ broadens from $\{0,1\}$ to $[0,1]$ and the constraint is that the solution lies in the convex hull of bitstrings with syndrome $\bm{s}$.

It is known that for a typical $(k, d)$-LDPC code, the LP decoder can decode all errors up to relative weight $\Omega_d(1/d)$ with high probability over the choice of code~\cite{feldman2007lp}.
(This ensemble is a uniformly random $(k, d)$-regular Tanner graph, which differs slightly from the Gallager ensemble construction.)
At the same time, this threshold is optimal up to a universal constant: for \emph{any} $(k, d)$-regular Tanner graph, there exists an error of relative weight $\operatorname{O}_d(1/d)$ for which the LP decoder will fail to correct.
(This statement is independent of the choice of ensemble, and thus applies equally to the Gallager ensemble.)

\begin{theorem}[Limitations of the LP decoder]
\label{thm:LP_limitations}
For any $(k, d)$-regular Tanner graph, there exists an error of relative weight $\frac{1}{d} (1 + \operatorname{o}_d(1))$ such that the LP decoder will fail on $\bm{e}$ (i.e. output a vector which is not $\bm{e}$).
\end{theorem}

\begin{proof}
The relaxation implies that feasible solutions (convex combinations of true binary solutions) outputted by the decoder may not be bitstrings at all, and instead may be fractions.
Consider the non-integer error \begin{align}
    \Bar{\bm{x}} := \left( \frac{1}{d}, \dots, \frac{1}{d} \right) \in [0,1]^m .
\end{align}
We claim that regardless of the choice $\bm{s}$ and the choice of $\bm{H}$, so long as every check has sparsity $d$, $\Bar{\bm{x}}$ satisfies $\bm{Hx} = \bm{s}$.
For any given check $\bm{h}$ with support $S$ such that $|S| = d$, the corresponding syndrome bit is either 1 or 0.
In the former case, we may write $\Bar{\bm{x}}$ restricted to $S$ as \begin{align}
    \Bar{\bm{x}} \big\rvert_S = \frac{1}{d} \sum_{i \in S} \bm{1}_i ,
\end{align}
where $\bm{1}_i$ is 1 on the $i$th entry and 0 everywhere else.
This is a convex combination of errors which give syndrome bit 1.
In the latter case, we may also write $\Bar{\bm{x}}$ restricted to $S$ as \begin{align}
    \Bar{\bm{x}} \big\rvert_S = \frac{1}{d(d-1)} \sum_{i < j : i,j \in S} \bm{1}_{ij} ,
\end{align}
where $\bm{1}_{ij}$ have 1's in the $i$th and $j$th entries and 0's everywhere else.
Each $i \in S$ is represented $d-1$ times in this sum, so every entry uniformly has magnitude $1/d$ as required.
Each $\bm{1}_{ij}$ has syndrome bit 0, so this is a convex combination of errors which give syndrome bit 0.
Consequently, $\Bar{\bm{x}}$ satisfies the constraints of the relaxed LP problem irrespective of the details of the choice of $\bm{H}$.

For any error $\bm{e}$ of weight $w > \frac{m}{d}$, the LP decoder cannot output $\bm{e}$ as the solution, because$\Bar{\bm{x}}$ is a valid solution whose weight $m/d$ is smaller than that of $\bm{e}$.
Thus, the LP decoder fails to find $\bm{e}$.
The weight $w$ may be taken arbitrarily close to $m/d$ as claimed.
\end{proof}

\section{Background on the Quantum Wasserstein Distance}\label{sec:quant_wass_dist}

We here give background on the quantum Wasserstein distance. The quantum Wasserstein distance was originally introduced in~\cite{9420734}, though here we will mainly consider the slightly more general definition introduced in~\cite{anschuetz2025efficientlearningimpliesquantum}.

Informally, the quantum Wasserstein distance is a quantum ``earth mover's'' metric in that states which differ only by a channel acting on $\ell$ qubits differ in Wasserstein distance by $\operatorname{O}\left(\ell\right)$; in this way, it can be thought of as a quantum generalization of the Hamming distance (and indeed, it reduces to the Hamming distance on bit strings). More formally, it is defined in the following way. Here, $\left\lVert\cdot\right\rVert_1$ denotes the trace norm, and $\mathcal{O}_n$ is the space of Hermitian observables on $n$ qubits. We also use the notation $\Tr_{\mathcal{I}}$ to denote the partial trace when tracing out the qubits labeled by the index set $\mathcal{I}$. To begin, we define the quantum Wasserstein $F$-norm---note that this is not a true norm as it is not homogeneous.
\begin{definition}[Quantum Wasserstein $F$-norm~{\cite[Definition~43]{anschuetz2025efficientlearningimpliesquantum}}]
    Let $\bm{X}$ be a Hermitian, traceless observable on $n$ qubits. The \emph{quantum Wasserstein $F$-norm of order $p$} is defined as:
    \begin{equation}
        \left\lVert\bm{X}\right\rVert_{W_p}:=\min_{\left\{\bm{X}_i\right\}_{i=1}^n\in\mathcal{B}\left(\bm{X}\right)}\left(\sum_{i=1}^n\left\lVert\frac{1}{2}\bm{X}_i\right\rVert_1^{\frac{1}{p}}\right),
    \end{equation}
    where
    \begin{equation}
        \mathcal{B}\left(\bm{X}\right)=\left\{\left\{\bm{X}_i\right\}_{i=1}^n:\bm{X}=\sum_{i=1}^n\bm{X}_i\wedge\bm{X}_i\in\mathcal{O}_n\wedge\Tr_{\left\{i\right\}}\left(\bm{X}_i\right)=0\right\}.
    \end{equation}
\end{definition}
We can define a metric using this $F$-norm, which we call the quantum Wasserstein distance of order $p$.
\begin{definition}[Quantum Wasserstein distance of order $p$~{\cite[Definition~44]{anschuetz2025efficientlearningimpliesquantum}}]
    For $\bm{\rho},\bm{\sigma}\in\mathcal{S}_n^{\text{m}}$, their \emph{quantum Wasserstein distance of order $p$} is:
    \begin{equation}
        W_p\left(\bm{\rho},\bm{\sigma}\right):=\left\lVert\bm{\rho}-\bm{\sigma}\right\rVert_{W_p}.
    \end{equation}
\end{definition}

Unlike more traditional metrics on the space of quantum states---such as the trace distance---the quantum Wasserstein distance is not unitarily invariant, i.e., $\left\lVert\bm{U}\bm{X}\bm{U}^\dagger\right\rVert_{W_p}$ does not necessarily equal $\left\lVert\bm{X}\right\rVert_{W_p}$ for unitary $\bm{U}$. Furthermore, the norm is not necessarily contractive under quantum channels. That said, the metric still has some nice properties which we review in what follows.

First, there is an equivalence of quantum Wasserstein norms.
\begin{proposition}[Equivalence of quantum Wasserstein norms~{\cite[Proposition~45]{anschuetz2025efficientlearningimpliesquantum}}]\label{prop:wass_norm_equiv}
    For any $q\leq p$,
    \begin{equation}
        \left\lVert\bm{X}\right\rVert_{W_q}^q\leq\left\lVert\bm{X}\right\rVert_{W_p}^p\leq n^{p-q}\left\lVert\bm{X}\right\rVert_{W_q}^q.
    \end{equation}
\end{proposition}
Furthermore, there is an equivalence of norms between the quantum Wasserstein and trace norms.
\begin{proposition}[Equivalence of trace and quantum Wasserstein norms~{\cite[Proposition~2]{9420734}}]\label{prop:wass_trace_norm_equiv}
    For any traceless $\bm{X}\in\mathcal{O}_n$,
    \begin{equation}
        \frac{1}{2}\left\lVert\bm{X}\right\rVert_1\leq\left\lVert\bm{X}\right\rVert_{W_1}\leq\frac{n}{2}\left\lVert\bm{X}\right\rVert_1.
    \end{equation}
\end{proposition}

Second, though the quantum Wasserstein norm is generally not contractive under the action of quantum channels, it is contractive under the action of tensor-product channels.
\begin{proposition}[Contractivity under tensor product channels~{\cite[Proposition~47]{anschuetz2025efficientlearningimpliesquantum}}]\label{prop:contractivity_w2}
    For any channel of the form
    \begin{equation}
        \bm{\varLambda}=\frac{1}{B}\sum_{b=1}^B\bigotimes_{i=1}^n\bm{\varLambda}_i^{\left(b\right)},
    \end{equation}
    we have:
    \begin{equation}
        \left\lVert\bm{\varLambda}\left(\bm{X}\right)\right\rVert_{W_p}\leq\left\lVert\bm{X}\right\rVert_{W_p}.
    \end{equation}
    This inequality is saturated when $\bm{\varLambda}$ is a tensor product of unitary channels.
\end{proposition}
More generally, local operations have bounded influence on the quantum Wasserstein distance.
\begin{proposition}[Continuity of the $W_p$ distance]\label{prop:cont_wp}
    Let $\mathcal{I}\subseteq\left[n\right]$, and let $\bm{X}$ be a traceless Hermitian operator on $n$ qubits with the property $\Tr_{\mathcal{I}}\left(\bm{X}\right)=0$. Then,
    \begin{equation}
        \left\lVert\bm{X}\right\rVert_{W_p}\leq\left\lvert\mathcal{I}\right\rvert\left(\frac{3}{4}\right)^{\frac{1}{p}}\left\lVert\bm{X}\right\rVert_1.
    \end{equation}
\end{proposition}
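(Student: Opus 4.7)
\textbf{Proof plan for Proposition~\ref{prop:cont_wp}.} The plan is to construct an explicit decomposition $\bm{X}=\sum_{i=1}^n\bm{X}_i$ lying in $\mathcal{B}(\bm{X})$ and supported (as far as the nonzero pieces are concerned) only on indices in $\mathcal{I}$, then bound each summand $\lVert\tfrac{1}{2}\bm{X}_i\rVert_1$ by $\tfrac{3}{4}\lVert\bm{X}\rVert_1$, and finally plug into the definition of the quantum Wasserstein $F$-norm. Enumerate $\mathcal{I}=\{i_1,\dots,i_k\}$ with $k=|\mathcal{I}|$ and define iteratively $\bm{Y}_0:=\bm{X}$ and, for $j=1,\dots,k-1$,
\begin{align*}
\bm{X}_{i_j}&:=\bm{Y}_{j-1}-\tfrac{1}{2}\bm{I}_{\{i_j\}}\otimes\Tr_{\{i_j\}}(\bm{Y}_{j-1}),\\
\bm{Y}_j&:=\tfrac{1}{2}\bm{I}_{\{i_j\}}\otimes\Tr_{\{i_j\}}(\bm{Y}_{j-1}),
\end{align*}
finally setting $\bm{X}_{i_k}:=\bm{Y}_{k-1}$ and $\bm{X}_i:=0$ for $i\notin\mathcal{I}$.

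The next step is to verify that $\{\bm{X}_i\}\in\mathcal{B}(\bm{X})$. Hermiticity and the identity $\bm{X}=\sum_j\bm{X}_{i_j}$ are immediate by telescoping. A quick induction shows $\bm{Y}_j=2^{-j}\bm{I}_{\{i_1,\dots,i_j\}}\otimes\Tr_{\{i_1,\dots,i_j\}}(\bm{X})$, so $\Tr_{\{i_k\}}(\bm{X}_{i_k})=\Tr_{\{i_k\}}(\bm{Y}_{k-1})=2^{-(k-1)}\bm{I}\otimes\Tr_{\mathcal{I}}(\bm{X})=0$ by hypothesis. For $j<k$, $\Tr_{\{i_j\}}(\bm{X}_{i_j})=\Tr_{\{i_j\}}(\bm{Y}_{j-1})-\Tr_{\{i_j\}}(\bm{Y}_{j-1})=0$. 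Thus the decomposition is admissible.

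The main technical step is to prove the pointwise bound $\lVert\tfrac{1}{2}\bm{X}_{i_j}\rVert_1\le\tfrac{3}{4}\lVert\bm{X}\rVert_1$, which is the source of the $(3/4)^{1/p}$ factor. The key observation is that, on any single qubit $a$, the map $\bm{A}\mapsto\tfrac{1}{2}\bm{I}_a\otimes\Tr_a\bm{A}$ is precisely the Pauli twirl $\tfrac{1}{4}\sum_{P\in\{I,X,Y,Z\}}(P_a\otimes\bm{I})\bm{A}(P_a\otimes\bm{I})$. Applied to the difference,
\begin{equation*}
\bm{A}-\tfrac{1}{2}\bm{I}_a\otimes\Tr_a\bm{A}=\tfrac{1}{4}\bigl(3\bm{A}-X_a\bm{A}X_a-Y_a\bm{A}Y_a-Z_a\bm{A}Z_a\bigr),
\end{equation*}
so the triangle inequality and unitary invariance of $\lVert\cdot\rVert_1$ give $\lVert\bm{A}-\tfrac{1}{2}\bm{I}_a\otimes\Tr_a\bm{A}\rVert_1\le\tfrac{3}{2}\lVert\bm{A}\rVert_1$ for every Hermitian $\bm{A}$. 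Applying this with $\bm{A}=\bm{Y}_{j-1}$ and $a=i_j$, and using contractivity of the trace norm under partial trace plus identity-padding to propagate $\lVert\bm{Y}_{j-1}\rVert_1\le\lVert\bm{X}\rVert_1$ at each step, yields $\lVert\tfrac{1}{2}\bm{X}_{i_j}\rVert_1\le\tfrac{3}{4}\lVert\bm{X}\rVert_1$ for $j<k$; the final term $\bm{X}_{i_k}=\bm{Y}_{k-1}$ satisfies the even simpler bound $\lVert\tfrac{1}{2}\bm{X}_{i_k}\rVert_1\le\tfrac{1}{2}\lVert\bm{X}\rVert_1\le\tfrac{3}{4}\lVert\bm{X}\rVert_1$.

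Summing,
\begin{equation*}
\lVert\bm{X}\rVert_{W_p}\le\sum_{j=1}^{k}\bigl\lVert\tfrac{1}{2}\bm{X}_{i_j}\bigr\rVert_1^{1/p}\le k\bigl(\tfrac{3}{4}\lVert\bm{X}\rVert_1\bigr)^{1/p}=|\mathcal{I}|\bigl(\tfrac{3}{4}\bigr)^{1/p}\lVert\bm{X}\rVert_1^{1/p},
\end{equation*}
which gives the claimed bound. The hard part is really just spotting the Pauli-twirl identity: without it one obtains only the weaker constant $1$ in place of $3/4$, since a naive triangle-inequality split of $\bm{X}_{i_j}=\bm{Y}_{j-1}-\bm{Y}_j$ loses the coherent cancellation between the three non-identity Paulis. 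Everything else — constructing the telescoping decomposition, checking the kernel conditions, and propagating the trace-norm bound through the iterates — is routine.
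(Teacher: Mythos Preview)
Your construction and bounds are correct and are precisely the telescoping-plus-Pauli-twirl argument behind Proposition~5 of Ref.~\cite{9420734}, which is all the paper invokes for its proof. One caveat: your final display actually yields $\lvert\mathcal{I}\rvert\,(3/4)^{1/p}\,\lVert\bm{X}\rVert_1^{1/p}$ rather than the $\lVert\bm{X}\rVert_1$ printed in the statement; since $\lVert\cdot\rVert_{W_p}$ as defined here is homogeneous of degree $1/p$, your exponent is the one consistent with the definition, and the discrepancy is a typo in the stated bound (the two agree at $p=1$, which is the case actually proved in the cited reference).
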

\begin{proof}
    This is an immediate generalization of Proposition~5 of~\cite{9420734}.
\end{proof}
In particular, quantum channels acting on $k$ qubits only change the quantum $W_2$ distance by $\operatorname{O}\left(k\right)$; this is an immediate generalization of Corollary~2 of~\cite{9420734}.
\begin{corollary}[$W_p$ bound for local channels]\label{cor:bounded_w_p_change_loc}
    Let $\bm{\varLambda}$ be a superoperator acting on at most $k$ qubits. Then:
    \begin{equation}
        \left\lVert\bm{\rho}-\bm{\varLambda}\left(\bm{\rho}\right)\right\rVert_{W_p}\leq 2k\left(\frac{3}{4}\right)^{\frac{1}{p}}\left\lVert\bm{\varLambda}\left(\bm{\rho}\right)\right\rVert_{1\to 1},
    \end{equation}
    where $\left\lVert\cdot\right\rVert_{1\to 1}$ denotes the superoperator norm with respect to the trace norm.
\end{corollary}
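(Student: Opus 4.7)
The plan is to reduce the bound to a direct application of Proposition~\ref{prop:cont_wp}. Let $\mathcal{I} \subseteq [n]$ denote the support of $\bm{\varLambda}$, so $|\mathcal{I}| \leq k$ by hypothesis. I would set $\bm{X} := \bm{\rho} - \bm{\varLambda}(\bm{\rho})$ and verify that $\bm{X}$ is traceless and Hermitian and satisfies $\Tr_{\mathcal{I}}(\bm{X}) = 0$. Applying Proposition~\ref{prop:cont_wp} with this choice of $\mathcal{I}$ then yields $\|\bm{X}\|_{W_p} \leq k(3/4)^{1/p}\|\bm{X}\|_1$, and a triangle inequality on the trace norm produces the advertised bound after multiplying by an appropriate operator-norm factor.

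The only non-routine step is the verification of $\Tr_{\mathcal{I}}(\bm{X}) = 0$. The key observation is that ``acting on at most $k$ qubits'' means $\bm{\varLambda}$ decomposes as $\bm{\varLambda} = \bm{\varLambda}_{\mathcal{I}} \otimes \operatorname{id}_{\mathcal{I}^c}$. Since $\bm{\varLambda}_{\mathcal{I}}$ is trace preserving (being a restriction of a quantum channel to its support), tracing out $\mathcal{I}$ on both sides gives $\Tr_{\mathcal{I}}(\bm{\varLambda}(\bm{\rho})) = \Tr_{\mathcal{I}}(\bm{\rho})$, so the partial traces cancel and the hypothesis of Proposition~\ref{prop:cont_wp} is satisfied. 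Tracelessness and Hermiticity of $\bm{X}$ are immediate from the corresponding properties of $\bm{\rho}$ and $\bm{\varLambda}(\bm{\rho})$.

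Once this is in place, I would complete the proof with the standard estimate $\|\bm{X}\|_1 \leq \|\bm{\rho}\|_1 + \|\bm{\varLambda}(\bm{\rho})\|_1 \leq 2\|\bm{\varLambda}\|_{1 \to 1}$, where the last step uses $\|\bm{\rho}\|_1 = 1$ together with the fact that $\|\bm{\varLambda}\|_{1 \to 1} \geq 1$ for any trace-preserving superoperator. Combining with the Wasserstein estimate from Proposition~\ref{prop:cont_wp} yields the stated inequality.

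The main subtlety, and the point I would want to flag, is parsing the right-hand side of the statement: the object $\|\bm{\varLambda}(\bm{\rho})\|_{1 \to 1}$ is syntactically a $1 \to 1$ superoperator norm applied to an operator, which is not quite the standard usage. The natural reading is that this denotes the superoperator norm $\|\bm{\varLambda}\|_{1 \to 1}$, and under this interpretation the argument above produces the stated bound cleanly. Up to this parsing convention the proof is essentially a short calculation invoking Proposition~\ref{prop:cont_wp} together with elementary properties of the trace norm; no new technical machinery is required.
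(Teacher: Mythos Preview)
Your approach is correct and is precisely the ``immediate generalization'' the paper has in mind: apply Proposition~\ref{prop:cont_wp} to $\bm{X}=\bm{\rho}-\bm{\varLambda}(\bm{\rho})$ on the support $\mathcal{I}$ of $\bm{\varLambda}$, then bound $\|\bm{X}\|_1$ by $2\|\bm{\varLambda}\|_{1\to 1}$. Your flag on the notation $\|\bm{\varLambda}(\bm{\rho})\|_{1\to 1}$ is apt (it should be read as $\|\bm{\varLambda}\|_{1\to 1}$), and your implicit assumption that $\bm{\varLambda}$ is trace-preserving is both necessary for $\bm{X}$ to be traceless and consistent with the original Corollary~2 of Ref.~\cite{9420734} being stated for channels.
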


Finally, the quantum Wasserstein distance over mixtures of product states in a shared basis upper bounds the classical Wasserstein distance. To state this result, we first define the notion of a \emph{coupling} between two probability distributions $p$ and $q$, specializing to discrete spaces for simplicity.
\begin{definition}[Coupling on a discrete space]
    Let $p$ and $q$ be probability distributions over a set $\mathcal{X}$ of finite cardinality. A probability distribution $\pi$ on $\mathcal{X}\times\mathcal{X}$ is called a \emph{coupling} between $p$ and $q$ if:
    \begin{align}
        p\left(x\right)&=\sum_{y\in\mathcal{X}}\pi\left(x,y\right),\\
        q\left(y\right)&=\sum_{x\in\mathcal{X}}\pi\left(x,y\right).
    \end{align}
\end{definition}
Couplings are used to define the classical Wasserstein distance, summarized as follows. Here, $d_{\mathrm{H}}$ denotes the Hamming distance.
\begin{definition}[Classical Wasserstein distances~{\cite[Definition~2]{9420734}}]
    The classical Wasserstein distance of order $\alpha$ between two distributions $p$ and $q$ over a discrete space $\mathcal{X}$ is defined as:
    \begin{equation}
        W_\alpha\left(p,q\right):=\inf_{\pi\in\mathcal{C}\left(p,q\right)}\left(\mathbb{E}_{\left(x,y\right)\sim\pi}d_{\text{H}}\left(x,y\right)^\alpha\right)^{\frac{1}{\alpha}}.
    \end{equation}
\end{definition}
We now state the relation between the quantum and classical Wasserstein distances.
\begin{proposition}[Quantum Wasserstein distance over mixtures of product states~{\cite[Proposition~54]{anschuetz2025efficientlearningimpliesquantum}}]\label{prop:quant_wass_prod_states}
    Consider quantum states $\bm{\rho}$ and $\bm{\sigma}$ mutually diagonalized by the same product state basis $\left\{\bm{s}\right\}_{\bm{s}\in\left\{0,1\right\}^{\times n}}$:
    \begin{align}
        \bm{\rho}&=\sum_{\bm{s}\in\left\{0,1\right\}^{\times n}} p\left(\bm{s}\right)\ket{\bm{s}}\bra{\bm{s}},\\
        \bm{\sigma}&=\sum_{\bm{s}\in\left\{0,1\right\}^{\times n}} q\left(\bm{s}\right)\ket{\bm{s}}\bra{\bm{s}}.
    \end{align}
    Let $\mathcal{C}\left(p,q\right)$ be the set of couplings between $p$ and $q$. Then, for any $\alpha\geq 1$,
    \begin{equation}\label{eq:quant_wass_red}
        \inf_{\pi\in\mathcal{C}\left(p,q\right)}\left(\mathbb{E}_{\left(\bm{s},\bm{t}\right)\sim\pi}\left\lVert\ket{\bm{s}}\bra{\bm{s}}-\ket{\bm{t}}\bra{\bm{t}}\right\rVert_{W_1}^\alpha\right)^{\frac{1}{\alpha}}\leq W_\alpha\left(\bm{\rho},\bm{\sigma}\right)\leq\inf_{\pi\in\mathcal{C}\left(p,q\right)}\sum_{\bm{s},\bm{t}\in\left\{0,1\right\}^{\times n}}\pi\left(\bm{s},\bm{t}\right)^{\frac{1}{\alpha}}\left\lVert\ket{\bm{s}}\bra{\bm{s}}-\ket{\bm{t}}\bra{\bm{t}}\right\rVert_{W_1}.
    \end{equation}
\end{proposition}

\section{Proof of Theorem~\ref{thm:stab_algs_fail}}\label{sec:ogp_implies_alg_hardness}

In this Appendix we prove that stable quantum algorithms fail to optimize \textsc{MAX-$k$-XOR-SAT} at any approximation ratio exhibiting an $R$-OGP. We also show that stable quantum algorithms with sufficiently good stability parameters are obstructed by the weaker chaos property; we will later see that the fixed-subset chaos property suffices for obstructing DQI. These results are presented as Theorem~\ref{thm:stab_algs_fail} of the main text, which we restate here for convenience.
\stableqasfailgallager*

We now prove Theorem~\ref{thm:stab_algs_fail}. At a high level, our strategy follows that of~\cite[Theorem~16]{anschuetz2025efficientlearningimpliesquantum}, but differs in a few key ways:
\begin{itemize}
    \item As we only consider classical problems, the proof can be simplified substantially.
    \item As the eigenbasis of the problem is fixed, we can account for shot noise in the failure probability of the algorithm.
    \item The $\left(k,S\right)$-minimum Hamming semimetric (Definition~\ref{def:k_min_ham_semi}) which we use to define (and later prove) our OGP and chaos property does not satisfy the triangle inequality, introducing complications to the proof.
    \item $f$ no longer need depend on $Q$, which greatly improves the satisfied fraction at which DQI (or any other stable algorithm with nontrivial $f$) is obstructed. This is achieved by slightly modifying the definition of stability from~\cite{anschuetz2025efficientlearningimpliesquantum}, which requires further modifications to the proof.
    \item Our strategy requires a novel strategy to interpolate between correlated problem instances as we are no longer considering Gaussian randomness.
\end{itemize}
Before proceeding with the proof, we give a sketch of our approach, lettered by subsection in what follows. We proceed by contradiction, assuming there exists a quantum algorithm $\bm{\mathcal{A}}$ that is $\left(f,L,p_{\text{st}}\right)$-stable and $\left(\gamma,p_{\text{f}}\right)$-optimal for \textsc{MAX-$k$-XOR-SAT} instances drawn from $\mathbb{P}_{\mathrm{G}}$.
\begin{enumerate}[label=(\Alph*)]
    \item We show that the existence of $\bm{\mathcal{A}}$ implies the existence of a stable, near-optimal quantum algorithm which is also deterministic.
    \item We show that the existence of a stable, near-optimal, deterministic quantum algorithm for this problem implies the existence of a stable, near-optimal classical algorithm $\bm{\mathcal{I}}$ (at the cost of worse constants).
    \item We consider an interpolation path over many replicas, and show that $\bm{\mathcal{I}}$ is stable and near-optimal over the replicas with high probability.
    \item Due to \textsc{MAX-$k$-XOR-SAT} with independent parities having distant solutions (as implied by the chaos property, Definition~\ref{def:chaos_prop}), we show that with high probability all $R$-tuples of $T$ independently-sampled instances have near-optimal states which are distant in $\left(k,S\right)$-minimum Hamming semimetric with high probability.
    \item We show that with high probability there exists some point along the interpolation path where this algorithm outputs a configuration disallowed by the multi-OGP due to the pairwise-stability and near-optimality of $\bm{\mathcal{I}}$.
    \item Finally, we show that there exist choices of parameters such that all ``with high probability'' events have a nontrivial intersection. This contradicts the assumption of the existence of $\bm{\mathcal{A}}$.
\end{enumerate}

\subsection{Reduction to Deterministic Quantum Algorithms}

We first prove that, WLOG, one can consider deterministic quantum algorithms. The proof follows a similar strategy as~\cite[Lemma~6.11]{gamarnik2022algorithmsbarrierssymmetricbinary}.
\begin{lemma}[Reduction to deterministic quantum algorithms]\label{lem:det_rand_alg_red}
    Let $\bm{\mathcal{A}}\left(\bm{X},\omega\right)$ be a quantum algorithm that is both $\left(f,L,p_{\text{st}}\right)$-stable and $\left(\gamma,p_{\text{f}}\right)$-optimal for $g_{\bm{X}}$ over $\bm{X}\sim\mathbb{P}_{\mathrm{G}}$. Fix any $\mathcal{K}=\left\{\kappa\right\}\subseteq\left[0,1\right]$. Then, there exists a deterministic quantum algorithm $\bm{\widetilde{\mathcal{A}}}\left(\bm{X}\right)$ that is both $\left(f,L,\mathcal{K},3p_{\text{st}}\right)$-stable and $\left(\gamma,3p_{\text{f}}\right)$-optimal for $g_{\bm{X}}$.
\end{lemma}
\begin{proof}
    Let $\left(\varOmega,\mathbb{P}_\varOmega\right)$ be the probability space associated with $\bm{\mathcal{A}}$ and let $g^\ast n$ be the expected maximal value of $g_{\bm{X}}$ over $\bm{X}\sim\mathbb{P}_{\mathrm{G}}$. Recall the definition of the measurement channel $\bm{\widetilde{\mathcal{M}}}$ (Eq.~\eqref{eq:comp_meas_chan_def}). For notational convenience, we define the events for all $\omega\in\varOmega$, $\nu\in\left[0,1\right]$, and $\bm{X}=\left(\bm{B},\bm{v}\right),\bm{Y}=\left(\bm{B},\bm{v'}\right)\in\mathbb{F}_2^D$:
    \begin{align}
        \mathcal{E}_{\text{st}}^{\left(\omega\right)}\left(\bm{X},\bm{Y}\right)&:=\left\{\exists S_{\bm{B}}\in\binom{\left[n\right]}{f}:\left\lVert\Tr_{S_{\bm{B}}}\left(\bm{\mathcal{A}}\left(\bm{X},\omega\right)-\bm{\mathcal{A}}\left(\bm{Y},\omega\right)\right)\right\rVert_{W_2}\leq L\left\lVert\bm{v}-\bm{v'}\right\rVert_1\right\},\\
        \mathcal{E}_{\text{no}}^{\left(\omega\right)}\left(\bm{X},\upsilon\right)&:=\left\{g_{\bm{X}}\left(\bm{\widetilde{\mathcal{M}}}\left(\bm{\mathcal{A}}\left(\bm{X},\omega\right),\upsilon\right)\right)\geq\gamma \lambda n\right\},
    \end{align}
    and the events for all $\omega\in\varOmega$:
    \begin{align}
        \mathcal{T}_{\text{st}}\left(\omega\right)&:=\left\{\mathbb{P}_{\left(\bm{X},\bm{Y}\right)\sim\mathbb{P}_2^{\left(\kappa\right)}}\left[\mathcal{E}_{\text{st}}^{\left(\omega\right)}\left(\bm{X},\bm{Y}\right)^\complement\right]>3p_{\text{st}}\right\}=\left\{\mathbb{P}_{\left(\bm{X},\bm{Y}\right)\sim\mathbb{P}_2^{\left(\kappa\right)}}\left[\mathcal{E}_{\text{st}}^{\left(\omega\right)}\left(\bm{X},\bm{Y}\right)\right]\leq 1-3p_{\text{st}}\right\},\\
        \mathcal{T}_{\text{no}}\left(\omega\right)&:=\left\{\mathbb{P}_{\left(\bm{X},\upsilon\right)\sim\mathbb{P}_{\mathrm{G}}\otimes\mathcal{U}}\left[\mathcal{E}_{\text{no}}^{\left(\omega\right)}\left(\bm{X},\upsilon\right)^\complement\right]>3p_{\text{f}}\right\}=\left\{\mathbb{P}_{\left(\bm{X},\upsilon\right)\sim\mathbb{P}_{\mathrm{G}}\otimes\mathcal{U}}\left[\mathcal{E}_{\text{no}}^{\left(\omega\right)}\left(\bm{X},\upsilon\right)\right]\leq 1-3p_{\text{f}}\right\}.
    \end{align}
    By the law of total probability,
    \begin{align}
        \mathbb{E}_{\omega\sim\mathbb{P}_\varOmega}\left[\mathbb{P}_{\left(\bm{X},\bm{Y}\right)\sim\mathbb{P}_2^{\left(\kappa\right)}}\left[\mathcal{E}_{\text{st}}^{\left(\omega\right)}\left(\bm{X},\bm{Y}\right)^\complement\right]\right]&=\mathbb{P}_{\left(\bm{X},\bm{Y},\omega\right)\sim\mathbb{P}_2^{\left(\kappa\right)}\otimes\mathbb{P}_\varOmega}\left[\mathcal{E}_{\text{st}}^{\left(\omega\right)}\left(\bm{X},\bm{Y}\right)^\complement\right]\leq p_{\text{st}},\\
        \mathbb{E}_{\omega\sim\mathbb{P}_\varOmega}\left[\mathbb{P}_{\left(\bm{X},\upsilon\right)\sim\mathbb{P}_{\mathrm{G}}\otimes\mathcal{U}}\left[\mathcal{E}_{\text{no}}^{\left(\omega\right)}\left(\bm{X},\upsilon\right)^\complement\right]\right]&=\mathbb{P}_{\left(\bm{X},\omega,\upsilon\right)\sim\mathbb{P}_{\mathrm{G}}\otimes\mathbb{P}_\varOmega\otimes\mathcal{U}}\left[\mathcal{E}_{\text{no}}^{\left(\omega\right)}\left(\bm{X},\upsilon\right)^\complement\right]\leq p_{\text{f}},
    \end{align}
    where the inequalities follow from the stability and near-optimality of $\bm{\mathcal{A}}$.
    By Markov's inequality,
    \begin{align}
        \mathbb{P}_{\omega\sim\mathbb{P}_\varOmega}\left[\mathcal{T}_{\text{st}}\left(\omega\right)\right]&\leq\frac{\mathbb{E}_{\omega\sim\mathbb{P}_\varOmega}\left[\mathbb{P}_{\left(\bm{X},\bm{Y}\right)\sim\mathbb{P}_2^{\left(\kappa\right)}}\left[\mathcal{E}_{\text{st}}^{\left(\omega\right)}\left(\bm{X},\bm{Y}\right)^\complement\right]\right]}{3p_{\text{st}}}\leq\frac{1}{3},\\
        \mathbb{P}_{\omega\sim\mathbb{P}_\varOmega}\left[\mathcal{T}_{\text{no}}\left(\omega\right)\right]&\leq\frac{\mathbb{E}_{\omega\sim\mathbb{P}_\varOmega}\left[\mathbb{P}_{\left(\bm{X},\upsilon\right)}\left[\mathcal{E}_{\text{no}}^{\left(\omega\right)}\left(\bm{X},\upsilon\right)^\complement\right]\right]}{3p_{\text{f}}}\leq\frac{1}{3}.
    \end{align}
    Furthermore, by the union bound,
    \begin{equation}
        \mathbb{P}_{\omega\sim\mathbb{P}_\varOmega}\left[\mathcal{T}_{\text{no}}\left(\omega\right)\cup\mathcal{T}_{\text{st}}\left(\omega\right)\right]\leq\frac{2}{3}.
    \end{equation}
    In particular, there must exist $\omega^\ast\in\varOmega$ such that the event
    \begin{equation}
        \left(\mathcal{T}_{\text{no}}\left(\omega^\ast\right)\cup\mathcal{T}_{\text{st}}\left(\omega^\ast\right)\right)^\complement=\mathcal{T}_{\text{no}}\left(\omega^\ast\right)^\complement\cap\mathcal{T}_{\text{st}}\left(\omega^\ast\right)^\complement
    \end{equation}
    occurs. By definition, then,
    \begin{equation}
        \bm{\widetilde{\mathcal{A}}}\left(\bm{X}\right):=\bm{\mathcal{A}}\left(\bm{X},\omega^\ast\right)
    \end{equation}
    is a deterministic quantum algorithm that is both $\left(f,L,\mathcal{K},3p_{\text{st}}\right)$-stable and $\left(\gamma,3p_{\text{f}}\right)$-optimal for $g_{\bm{X}}$.
\end{proof}

We also bound the probability over $\bm{X}$ that the probability of achieving a high-energy state over $\upsilon\sim\mathcal{U}$ is large.
\begin{lemma}[Controlling shot noise]\label{lem:controlling_shot_noise}
    Let $\bm{\mathcal{A}}\left(\bm{X}\right)$ be a $\left(\gamma,3p_{\text{f}}\right)$-optimal deterministic quantum algorithm for $g_{\bm{X}}$ over $\bm{X}\sim\mathbb{P}_{\mathrm{G}}$. Then:
    \begin{equation}
        \mathbb{P}_{\bm{X}\sim\mathbb{P}_{\mathrm{G}}}\left[\mathbb{P}_{\upsilon\sim\mathcal{U}}\left[g_{\bm{X}}\left(\bm{\widetilde{\mathcal{M}}}\left(\bm{\mathcal{A}}\left(\bm{X}\right),\upsilon\right)\right)\geq\gamma\lambda n\right]\geq 1-\sqrt{3p_{\text{f}}}\right]\geq 1-\sqrt{3p_{\text{f}}}.
    \end{equation}
\end{lemma}
\begin{proof}
    Recall the definition of the measurement channel $\bm{\widetilde{\mathcal{M}}}$ (Eq.~\eqref{eq:comp_meas_chan_def}). For notational convenience, we define the event for all $\nu\in\left[0,1\right]$ and $\bm{X}$:
    \begin{equation}
        \mathcal{E}_{\text{no}}^{\left(\bm{X}\right)}\left(\upsilon\right):=\left\{g_{\bm{X}}\left(\bm{\widetilde{\mathcal{M}}}\left(\bm{\mathcal{A}}\left(\bm{X}\right),\upsilon\right)\right)\geq\gamma\lambda n\right\},
    \end{equation}
    as well as the event for all $\bm{X}$:
    \begin{equation}\label{eq:near_opt_event}
        \mathcal{T}_{\text{no}}\left(\bm{X}\right):=\left\{\mathbb{P}_{\upsilon\sim\mathcal{U}}\left[\mathcal{E}_{\text{no}}^{\left(\bm{X}\right)}\left(\upsilon\right)^\complement\right]>\sqrt{3p_{\text{f}}}\right\}=\left\{\mathbb{P}_{\upsilon\sim\mathcal{U}}\left[\mathcal{E}_{\text{no}}^{\left(\bm{X}\right)}\left(\upsilon\right)\right]\leq 1-\sqrt{3p_{\text{f}}}\right\}.
    \end{equation}
    By the law of total probability and the $\left(\gamma,3p_{\text{f}}\right)$-near optimality of $\bm{\mathcal{A}}$,
    \begin{equation}
        \mathbb{E}_{\bm{X}\sim\mathbb{P}_{\mathrm{G}}}\left[\mathbb{P}_{\upsilon}\left[\mathcal{E}_{\text{no}}^{\left(\bm{X}\right)}\left(\upsilon\right)^\complement\right]\right]=\mathbb{P}_{\left(\bm{X},\upsilon\right)\sim\mathbb{P}_{\mathrm{G}}\otimes\mathcal{U}}\left[\mathcal{E}_{\text{no}}^{\left(\bm{X}\right)}\left(\upsilon\right)^\complement\right]\leq 3p_{\text{f}}.
    \end{equation}
    By Markov's inequality,
    \begin{equation}
        \mathbb{P}_{\bm{X}\sim\mathbb{P}_{\mathrm{G}}}\left[\mathcal{T}_{\text{no}}\left(\bm{X}\right)\right]\leq\frac{\mathbb{E}_{\bm{X}\sim\mathbb{P}_{\mathrm{G}}}\left[\mathbb{P}_{\upsilon\sim\mathcal{U}}\left[\mathcal{E}_{\text{no}}^{\left(\bm{X}\right)}\left(\upsilon\right)^\complement\right]\right]}{\sqrt{3p_{\text{f}}}}\leq\sqrt{3p_f}.
    \end{equation}
    Substituting the definition of $\mathcal{T}_{\text{no}}\left(\bm{X}\right)$ yields the final result.
\end{proof}

\subsection{Measurement Reduction}

We now prove that, if there exists a stable, near-optimal quantum algorithm for a classical problem class, there exists a near-optimal classical algorithm which satisfies a weaker notion of stability than that given in Definition~\ref{def:stable_qas}. In what follows, we use $\mathcal{U}$ to denote the uniform distribution over $\left[0,1\right]$, and we use $\mathcal{B}$ to denote the set of pure computational basis states.
\begin{lemma}[Nondeterministic classical shadows reduction]\label{lem:class_shads_red}
    Assume there exists an $\left(f,L,\left\{\kappa\right\},3Qp_{\text{st}}\right)$-stable and $\left(\gamma,3p_{\text{f}}\right)$-near optimal deterministic quantum algorithm $\bm{\mathcal{A}}\left(\bm{X}\right)$ for $g_{\bm{X}}$ over $\bm{X}\sim\mathbb{P}_{\mathrm{G}}$. Then, there exists a pure quantum algorithm $\bm{\mathcal{G}}:\mathbb{R}^D\times\left[0,1\right]\to\mathcal{B}$ such that, for $\left(\bm{X},\bm{Y}\right)=\left(\left(\bm{B},\bm{v}\right),\left(\bm{B},\bm{v'}\right)\right)\sim\mathbb{P}_2^{\left(\kappa\right)}$,
    \begin{equation}\label{eq:g_close_to_stab}
        \mathbb{P}_{\left(\bm{X},\bm{Y}\right)\sim\mathbb{P}_2^{\left(\kappa\right)}}\left[\exists S_{\bm{B}}\in\binom{\left[n\right]}{f}:\left\lVert\Tr_{S_{\bm{B}}}\left(\bm{\mathcal{G}}\left(\bm{X},\nu\right)-\bm{\mathcal{G}}\left(\bm{Y},\nu\right)\right)\right\rVert_{W_2}\leq L\left\lVert\bm{v}-\bm{v'}\right\rVert_1\right]\geq 1-3p_{\text{st}}.
    \end{equation}
    Furthermore, for any $\bm{X}$ and $\upsilon\in\left[0,1\right]$ satisfying the probability $1-\sqrt{3p_{\text{f}}}$ event $\mathcal{T}_{\text{no}}\left(\bm{X}\right)^\complement$ (Eq.~\eqref{eq:near_opt_event}),
    \begin{equation}\label{eq:g_near_opt}
        \mathbb{P}_{\upsilon\sim\mathcal{U}}\left[g_{\bm{X}}\left(\bm{\mathcal{G}}\left(\bm{X},\upsilon\right)\right)\geq\gamma\lambda n\right]\geq 1-\sqrt{3p_{\text{f}}}.
    \end{equation}
\end{lemma}
\begin{proof}
    We begin by considering Eq.~\eqref{eq:g_close_to_stab}. As the quantum Wasserstein distance is nonincreasing under convex combinations of tensor product channels (see Proposition~\ref{prop:contractivity_w2}, reviewed in Appendix~\ref{sec:quant_wass_dist}), we have for all $\bm{\rho},\bm{\sigma}\in\mathcal{S}_n^{\text{m}}$:
    \begin{equation}
        \left\lVert\mathbb{E}_{\upsilon\sim\mathcal{U}}\left[\bm{\widetilde{\mathcal{M}}}\left(\bm{\rho},\upsilon\right)-\bm{\widetilde{\mathcal{M}}}\left(\bm{\sigma},\upsilon\right)\right]\right\rVert_{W_2}:=\left\lVert\bm{\mathcal{M}}\left(\bm{\rho}\right)-\bm{\mathcal{M}}\left(\bm{\sigma}\right)\right\rVert_{W_2}\leq\left\lVert\bm{\rho}-\bm{\sigma}\right\rVert_{W_2}.
    \end{equation}
    In particular, defining:
    \begin{equation}\label{eq:g_def}
        \bm{\mathcal{G}}\left(\bm{X},\upsilon\right):=\bm{\widetilde{\mathcal{M}}}\left(\bm{\mathcal{A}}\left(\bm{X}\right),\upsilon\right),
    \end{equation}
    we have:
    \begin{equation}\label{eq:quant_stab_red}
        \mathbb{P}_{\left(\bm{X},\bm{Y}\right)\sim\mathbb{P}_2^{\left(\kappa\right)}}\left[\exists S_{\bm{B}}\in\binom{\left[n\right]}{f}:\left\lVert\Tr_{S_{\bm{B}}}\left(\bm{\mathcal{G}}\left(\bm{X},\nu\right)-\bm{\mathcal{G}}\left(\bm{Y},\nu\right)\right)\right\rVert_{W_2}\leq L\left\lVert\bm{v}-\bm{v'}\right\rVert_1\right]\geq 1-3p_{\text{st}}
    \end{equation}
    since $\bm{\mathcal{A}}\left(\bm{X}\right)$ is $\left(f,L,\mathcal{K},3p_{\text{st}}\right)$-stable.

    Finally, Eq.~\eqref{eq:g_near_opt} follows immediately from substituting Eq.~\eqref{eq:g_def} into Lemma~\ref{lem:controlling_shot_noise}, yielding the final result.
\end{proof}

Ideally, we would strengthen this notion of stability to that of Definition~\ref{def:stable_qas}, which would allow us to directly leverage the machinery of classical algorithmic hardness results based on OGPs. Unfortunately, it is generally the case that:
\begin{equation}
    \left\lVert\mathbb{E}_{\nu\sim\mathcal{U}}\left[\Tr_{S_{\bm{B}}}\left(\bm{\mathcal{G}}\left(\bm{X},\nu\right)-\bm{\mathcal{G}}\left(\bm{Y},\nu\right)\right)\right]\right\rVert_{W_2}\neq\mathbb{E}_{\nu\sim\mathcal{U}}\left[\left\lVert\Tr_{S_{\bm{B}}}\left(\bm{\mathcal{G}}\left(\bm{X},\nu\right)-\bm{\mathcal{G}}\left(\bm{Y},\nu\right)\right)\right\rVert_{W_2}\right],
\end{equation}
making such a strengthening generally impossible. However, we can get close; as the quantum Wasserstein distance upper bounds the classical Wasserstein distance for quantum states in the computational basis (see Proposition~\ref{prop:quant_wass_prod_states}), we have the following consequence of stability according to $\left\lVert\mathbb{E}_{\nu\sim\mathcal{U}}\left[\Tr_{S_{\bm{B}}}\left(\bm{\mathcal{G}}\left(\bm{X},\nu\right)-\bm{\mathcal{G}}\left(\bm{Y},\nu\right)\right)\right]\right\rVert_{W_2}$. Note that the quantum Wasserstein distance of order $1$ is exactly the Hamming distance on computational basis states~\cite{9420734}, so here $\left\lVert\ket{\bm{s}}\bra{\bm{s}}-\ket{\bm{t}}\bra{\bm{t}}\right\rVert_{W_1}$ can simply be thought of as $d_{\mathrm{H}}\left(\bm{s},\bm{t}\right)$.
\begin{proposition}[Expected Wasserstein distance~{\cite[Proposition~26]{anschuetz2025efficientlearningimpliesquantum}}]\label{prop:exp_wass_dist}
    For each $\bm{X}$, let $p_{\bm{X}}\left(\ket{\bm{s}}\bra{\bm{s}}\right)$ be the distribution of $\bm{\mathcal{G}}\left(\bm{X},\nu\right)$ over $\nu\sim\mathcal{U}$, i.e.,
    \begin{equation}
        \mathbb{E}_{\nu\sim\mathcal{U}}\left[\bm{\mathcal{G}}\left(\bm{X},\nu\right)\right]=:\sum_{\bm{s}\in\mathbb{F}_2^n}p_{\bm{X}}\left(\ket{\bm{s}}\bra{\bm{s}}\right)\ket{\bm{s}}\bra{\bm{s}}.
    \end{equation}
    Then, for every pair $\left(\bm{X},\bm{Y}\right)$, there exists some probability distribution $\pi_{\left(\bm{X},\bm{Y}\right)}\left(\ket{\bm{s}}\bra{\bm{s}},\ket{\bm{t}}\bra{\bm{t}}\right)$ over $\left(\ket{\bm{s}}\bra{\bm{s}},\ket{\bm{t}}\bra{\bm{t}}\right)\in\mathcal{B}^2$ satisfying:
    \begin{align}
        \mathbb{E}_{\left(\ket{\bm{s}}\bra{\bm{s}},\ket{\bm{t}}\bra{\bm{t}}\right)\sim\pi_{\left(\bm{X},\bm{Y}\right)}}\left[\left\lVert\ket{\bm{s}}\bra{\bm{s}}-\ket{\bm{t}}\bra{\bm{t}}\right\rVert_{W_1}^2\right]&\leq\left\lVert\mathbb{E}_{\nu\sim\mathcal{U}}\left[\bm{\mathcal{G}}\left(\bm{X},\nu\right)-\bm{\mathcal{G}}\left(\bm{Y},\nu\right)\right]\right\rVert_{W_2}^2,\\
        \sum_{\ket{\bm{t}}\bra{\bm{t}}\in\mathcal{B}}\pi_{\left(\bm{X},\bm{Y}\right)}\left(\ket{\bm{s}}\bra{\bm{s}},\ket{\bm{t}}\bra{\bm{t}}\right)&=p_{\bm{X}}\left(\ket{\bm{s}}\bra{\bm{s}}\right),\label{eq:first_coupling_marg}\\
        \sum_{\ket{\bm{s}}\bra{\bm{s}}\in\mathcal{B}}\pi_{\left(\bm{X},\bm{Y}\right)}\left(\ket{\bm{s}}\bra{\bm{s}},\ket{\bm{t}}\bra{\bm{t}}\right)&=p_{\bm{Y}}\left(\ket{\bm{t}}\bra{\bm{t}}\right).\label{eq:sec_coupling_marg}
    \end{align}
\end{proposition}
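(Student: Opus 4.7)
The plan is to invoke the coupling-based variational characterization of the quantum Wasserstein distance of order~$2$ reviewed in Appendix~\ref{sec:quant_wass_dist}, specialized to states that are diagonal in the computational basis $\mathcal{B}$.

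First, I would note that by Eq.~\eqref{eq:comp_meas_chan_def} each realization $\bm{\mathcal{G}}(\bm{X},\upsilon) = \bm{\widetilde{\mathcal{M}}}(\bm{\mathcal{A}}(\bm{X}),\upsilon)$ is a pure computational basis state, so the averaged state $\mathbb{E}_{\upsilon\sim\mathcal{U}}[\bm{\mathcal{G}}(\bm{X},\upsilon)]$ is diagonal in $\mathcal{B}$ with entries exactly $\{p_{\bm{X}}(\ket{\bm{s}}\bra{\bm{s}})\}_{\bm{s}\in\mathbb{F}_2^n}$, and similarly for $\bm{Y}$; both averaged states are therefore co-diagonalizable classical states.

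Next, I would invoke the variational representation of $\|\cdot\|_{W_2}^2$ as an infimum of $\mathbb{E}[\|\cdot\|_{W_1}^2]$ over quantum couplings of the two states. Restricting this infimum to couplings supported on $\mathcal{B}\times\mathcal{B}$ --- which, as I argue below, is without loss of generality for co-diagonalizable classical states --- recasts the problem as a classical optimal transport with marginals $p_{\bm{X}}$ and $p_{\bm{Y}}$ and cost $\|\ket{\bm{s}}\bra{\bm{s}} - \ket{\bm{t}}\bra{\bm{t}}\|_{W_1}^2$. This infimum is attained by compactness (the set of classical couplings with fixed marginals is a compact polytope and the cost functional is linear), so I would define $\pi_{(\bm{X},\bm{Y})}$ to be a minimizer. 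The marginal conditions Eqs.~\eqref{eq:first_coupling_marg} and~\eqref{eq:sec_coupling_marg} are then immediate from the coupling definition, and the squared-cost bound of the proposition holds by construction.

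The main technical obstacle is justifying the restriction to classical couplings on $\mathcal{B}\times\mathcal{B}$: an arbitrary quantum coupling of two co-diagonalizable states must be shown to be replaceable by a classical one without increasing the $W_1^2$ cost. Intuitively, this follows by dephasing a given coupling via the computational-basis dephasing channel, which is a convex combination of tensor products of single-qubit Pauli-$\bm{Z}$ channels and therefore contracts $W_1$ by Proposition~\ref{prop:contractivity_w2}, followed by a further refinement of the resulting diagonal components into pure basis states. Carefully executing this step requires unpacking the precise definition of $\|\cdot\|_{W_2}$ from Appendix~\ref{sec:quant_wass_dist}; the argument is standard in the quantum optimal transport literature, and is essentially the content of Proposition~26 of Ref.~\cite{anschuetz2025efficientlearningimpliesquantum}, which the stated proposition directly imports into the present setting.
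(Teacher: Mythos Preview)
The paper does not give its own proof of this proposition; it is imported verbatim from Ref.~\cite{anschuetz2025efficientlearningimpliesquantum}. Within the paper's own toolkit, the statement is an immediate consequence of the lower bound in Proposition~\ref{prop:quant_wass_prod_states} (Appendix~\ref{sec:quant_wass_dist}) with $\alpha=2$: that inequality asserts precisely that $\inf_{\pi\in\mathcal{C}(p_{\bm{X}},p_{\bm{Y}})}\mathbb{E}_\pi\bigl[\lVert\ket{\bm{s}}\bra{\bm{s}}-\ket{\bm{t}}\bra{\bm{t}}\rVert_{W_1}^2\bigr]\leq\lVert\mathbb{E}_\nu[\bm{\mathcal{G}}(\bm{X},\nu)-\bm{\mathcal{G}}(\bm{Y},\nu)]\rVert_{W_2}^2$, and compactness of the coupling polytope guarantees the infimum is attained by some $\pi_{(\bm{X},\bm{Y})}$.

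Your sketch reaches the same conclusion, and your final paragraph correctly identifies Proposition~\ref{prop:quant_wass_prod_states} (equivalently, Proposition~26 of the cited reference) as the crux. One caution: the intermediate step where you invoke a ``variational representation of $\lVert\cdot\rVert_{W_2}^2$ as an infimum of $\mathbb{E}[\lVert\cdot\rVert_{W_1}^2]$ over quantum couplings'' is not how $\lVert\cdot\rVert_{W_p}$ is actually defined in Appendix~\ref{sec:quant_wass_dist} --- the definition there is via single-site traceless decompositions $\{\bm{X}_i\}$, and no general coupling-based dual is stated for arbitrary states. The dephasing-and-refinement argument you outline is a plausible route to \emph{prove} the lower bound of Proposition~\ref{prop:quant_wass_prod_states} from scratch, but since that proposition is already available in the appendix, the cleanest argument is to invoke it directly (with $\alpha=2$) together with compactness, rather than rebuild it.
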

Unfortunately, $\pi_{\left(\bm{X},\bm{Y}\right)}$ depends on $\left(\bm{X},\bm{Y}\right)$, so this fact cannot be immediately applied to demonstrate the stability of $\bm{\mathcal{G}}$. However, this fact will be enough to allow us to prove the existence of a different stable, pure quantum algorithm $\bm{\mathcal{I}}$. We formalize this as the following lemma.
\begin{lemma}[Reduction to pure, deterministic algorithms]\label{lem:stab_loc_shad_red_along_path}
    Let $\bm{\mathcal{A}}$, $\mathcal{T}_{\text{no}}\left(\bm{X}\right)$, $\kappa$, and $\bm{\mathcal{G}}$ be as in Lemma~\ref{lem:class_shads_red}. Fix $Q\in\mathbb{N}$. Let $\mathbb{P}_Q$ be a distribution over $\bm{X}=\left(\bm{X}_q\right)_{q=0}^Q=\left(\left(\bm{B},\bm{v}_q\right)\right)_{q=0}^Q\in\mathbb{R}^{D\times\left(Q+1\right)}$ such that the marginal distribution over any pair $\left(\bm{X}_q,\bm{X}_{q+1}\right)$ is $\mathbb{P}_2^{\left(\kappa\right)}$. Assume that
    \begin{equation}
        p_{\text{f}}<\frac{1}{3\left(Q+1\right)^2},
    \end{equation}
    and fix any
    \begin{equation}
        \beta>\sqrt{\frac{Q}{1-\left(Q+1\right)\sqrt{3p_{\text{f}}}}}.
    \end{equation}
    Consider as well the event:
    \begin{equation}\label{eq:exp_ir_stab_small}
        \mathcal{C}_{\bm{X}}:=\bigcap_{q=0}^{Q-1}\left\{\exists S_{\bm{B}}\in\binom{\left[n\right]}{f}:\left\lVert\Tr_{S_{\bm{B}}}\left(\bm{\mathcal{A}}\left(\bm{X}_q\right)-\bm{\mathcal{A}}\left(\bm{X}_{q+1}\right)\right)\right\rVert_{W_2}\leq L\left\lVert\bm{v}_q-\bm{v}_{q+1}\right\rVert_1\right\}\cap\bigcap_{q=0}^Q\mathcal{T}_{\text{no}}\left(\bm{X}_q\right)^\complement,
    \end{equation}
    which occurs with probability:
    \begin{equation}\label{eq:prob_c_event}
        \mathbb{P}_{\bm{X}\sim\mathbb{P}_Q}\left[\mathcal{C}_{\bm{X}}\right]\geq 1-\left(3Qp_{\text{st}}+\left(Q+1\right)\sqrt{3p_{\text{f}}}+Q\exp\left(-\operatorname{\Omega}\left(n\right)\right)\right).
    \end{equation}

    For any $\bm{X}\sim\mathbb{P}_Q$ where the event $\mathcal{C}_{\bm{X}}$ occurs, there exists a pure, deterministic quantum algorithm $\bm{\mathcal{I}}:\left\{q\right\}_{q=0}^Q\to\mathcal{B}$ satisfying for all integer $0\leq q\leq Q-1$:
    \begin{equation}
        \left\lVert\Tr_{S_{\bm{B}}}\left(\bm{\mathcal{I}}\left(q\right)-\bm{\mathcal{I}}\left(q+1\right)\right)\right\rVert_{W_1}\leq\beta L\left\lVert\bm{v}_q-\bm{v}_{q+1}\right\rVert_1
    \end{equation}
    and, for all $q\in\left[Q\right]$,
    \begin{equation}
        g_{\bm{X}_q}\left(\bm{\mathcal{I}}\left(q\right)\right)\geq\gamma\lambda n.
    \end{equation}
\end{lemma}
\begin{proof}
    Consider $\bm{X}\sim\mathbb{P}_Q$. Let $\pi_{\left(\bm{X}_q,\bm{X}_{q+1}\right)}\left(\ket{\bm{s}_q}\bra{\bm{s}_q},\ket{\bm{s}_{q+1}}\bra{\bm{s}_{q+1}}\right)$ be as in Proposition~\ref{prop:exp_wass_dist}. We have:
    \begin{equation}
        \mathbb{E}_{\left(\ket{\bm{s}_q}\bra{\bm{s}_q},\ket{\bm{s}_{q+1}}\bra{\bm{s}_{q+1}}\right)\sim\pi_{\left(\bm{X}_q,\bm{X}_{q+1}\right)}}\left[\left\lVert\Tr_{S_{\bm{B}}}\left(\ket{\bm{s}}\bra{\bm{s}}-\ket{\bm{t}}\bra{\bm{t}}\right)\right\rVert_{W_1}^2\right]\leq\left\lVert\mathbb{E}_{\omega\sim\mathcal{U}}\left[\Tr_{S_{\bm{B}}}\left(\bm{\mathcal{G}}\left(\bm{X}_q,\omega\right)-\bm{\mathcal{G}}\left(\bm{X}_{q+1},\omega\right)\right)\right]\right\rVert_{W_2}^2.
    \end{equation}
    We also recall from Proposition~\ref{prop:exp_wass_dist}:
    \begin{equation}\label{eq:cons_of_margs}
        \begin{aligned}
            p_{\bm{X}_q}\left(\ket{\bm{s}_q}\bra{\bm{s}_q}\right)&=\sum_{\ket{\bm{s}_{q+1}}\bra{\bm{s}_{q+1}}\in\mathcal{B}}\pi_{\left(\bm{X}_q,\bm{X}_{q+1}\right)}\left(\ket{\bm{s}_q}\bra{\bm{s}_q},\ket{\bm{s}_{q+1}}\bra{\bm{s}_{q+1}}\right)\\
            &=\sum_{\ket{\bm{s}_{q-1}}\bra{\bm{s}_{q-1}}\in\mathcal{B}}\pi_{\left(\bm{X}_{q-1},\bm{X}_q\right)}\left(\ket{\bm{s}_{q-1}}\bra{\bm{s}_{q-1}},\ket{\bm{s}_q}\bra{\bm{s}_q}\right),
        \end{aligned}
    \end{equation}
    where the final equality holds due to the compatibility of the marginals of the $\pi_{\left(\bm{X}_q,\bm{X}_{q+1}\right)}$ (Eqs.~\eqref{eq:first_coupling_marg} and~\eqref{eq:sec_coupling_marg}). Finally, we define the probability distribution $\varPi_{\bm{X}}$ over $\left(\ket{\bm{s}_q}\bra{\bm{s}_q}\right)_{q=0}^Q\in\mathcal{B}^{Q+1}$ given by:
    \begin{equation}
        \varPi_{\bm{X}}\left(\left(\ket{\bm{s}_i}\bra{\bm{s}_i}\right)_{q=0}^Q\right):=\pi_{\left(\bm{X}_0,\bm{X}_1\right)}\left(\ket{\bm{s}_0}\bra{\bm{s}_0},\ket{\bm{s}_1}\bra{\bm{s}_1}\right)\prod_{q=1}^{Q-1}\frac{\pi_{\left(\bm{X}_q,\bm{X}_{q+1}\right)}\left(\ket{\bm{s}_q}\bra{\bm{s}_q},\ket{\bm{s}_{q+1}}\bra{\bm{s}_{q+1}}\right)}{p_{\bm{X}_q}\left(\ket{\bm{s}_q}\bra{\bm{s}_q}\right)}.
    \end{equation}
    Using the consistency of the single-variable marginals (Eq.~\eqref{eq:cons_of_margs}), it is easy to see by direct calculation that the two-variable marginals of $\varPi_{\bm{X}}$ agree with the $\pi_{\left(\bm{X}_q,\bm{X}_{q+1}\right)}$:
    \begin{equation}\label{eq:bigpi_two_var_margs}
        \varPi_{\bm{X}}\left(\ket{\bm{s}_q}\bra{\bm{s}_q},\ket{\bm{s}_{q+1}}\bra{\bm{s}_{q+1}}\right)=\pi_{\left(\bm{X}_q,\bm{X}_{q+1}\right)}\left(\ket{\bm{s}_q}\bra{\bm{s}_q},\ket{\bm{s}_{q+1}}\bra{\bm{s}_{q+1}}\right).
    \end{equation}

    Now, define a sample space $\varOmega:=\mathcal{B}^{Q+1}$. We use the notation $\omega_q$ (zero-indexed) to denote the projection of $\omega\in\varOmega$ to the $q$th of the factors $\mathcal{B}$. With this notation, we define the pure quantum algorithm $\bm{\widetilde{\mathcal{I}}}:\left\{q\right\}_{q=0}^Q\times\varOmega\to\mathcal{B}$:
    \begin{equation}
        \bm{\widetilde{\mathcal{I}}}\left(q,\omega\right)=\omega_q.
    \end{equation}
    By Markov's inequality, conditioned on $\bm{X}$ being such that the event $\mathcal{C}_{\bm{X}}$ occurs, we have from Markov's inequality that for any $\beta>0$:
    \begin{equation}
        \begin{aligned}
            \mathbb{P}_{\omega\sim\varPi_{\bm{X}}\mid\mathcal{C}_{\bm{X}}}&\left[\left\lVert\Tr_{S_{\bm{B}}}\left(\bm{\widetilde{\mathcal{I}}}\left(q,\omega\right)-\bm{\widetilde{\mathcal{I}}}\left(q+1,\omega\right)\right)\right\rVert_{W_1}\geq\beta L\left\lVert\bm{v}_q-\bm{v}_{q+1}\right\rVert_1\right]\\
            &=\mathbb{P}_{\omega\sim\varPi_{\bm{X}}\mid\mathcal{C}_{\bm{X}}}\left[\left\lVert\Tr_{S_{\bm{B}}}\left(\bm{\widetilde{\mathcal{I}}}\left(q,\omega\right)-\bm{\widetilde{\mathcal{I}}}\left(q+1,\omega\right)\right)\right\rVert_{W_1}^2\geq\beta^2 L^2\left\lVert\bm{v}_q-\bm{v}_{q+1}\right\rVert_1^2\right]\\
            &\leq\frac{\mathbb{E}_{\omega\sim\varPi_{\bm{X}}\mid\mathcal{C}_{\bm{X}}}\left[\left\lVert\Tr_{S_{\bm{B}}}\left(\bm{\widetilde{\mathcal{I}}}\left(q,\omega\right)-\bm{\widetilde{\mathcal{I}}}\left(q+1,\omega\right)\right)\right\rVert_{W_1}^2\right]}{\beta^2 L^2\left\lVert\bm{v}_q-\bm{v}_{q+1}\right\rVert_1^2}\\
            &\leq\frac{1}{\beta^2}
        \end{aligned}
    \end{equation}
    for any integer $0\leq q\leq Q-1$. By the union bound,
    \begin{equation}\label{eq:stab_r_algs}
        \mathbb{P}_{\omega\sim\varPi_{\bm{X}}\mid\mathcal{C}_{\bm{X}}}\left[\bigcap_{q=0}^{Q-1}\left\lVert\Tr_{S_{\bm{B}}}\left(\bm{\widetilde{\mathcal{I}}}\left(q,\omega\right)-\bm{\widetilde{\mathcal{I}}}\left(q+1,\omega\right)\right)\right\rVert_{W_1}\leq\beta L\left\lVert\bm{v}_q-\bm{v}_{q+1}\right\rVert_1\right]\geq 1-\frac{Q}{\beta^2}.
    \end{equation}
    Furthermore, by the union bound (and conditioned on $\mathcal{C}_{\bm{X}}$),
    \begin{equation}\label{eq:no_r_algs}
        \mathbb{P}_{\omega\sim\varPi_{\bm{X}}\mid\mathcal{C}_{\bm{X}}}\left[\bigcap_{q=0}^Q f_{\bm{X}_q}\left(\bm{\widetilde{\mathcal{I}}}\left(q,\omega\right)\right)\geq\gamma\lambda n\right]\geq 1-\left(Q+1\right)\sqrt{3p_{\text{f}}}.
    \end{equation}
    
    In particular, assuming $\beta$ is sufficiently large such that:
    \begin{equation}
        \frac{Q}{\beta^2}+\left(Q+1\right)\sqrt{3p_{\text{f}}}<1,
    \end{equation}
    we have from the law of total probability that there exists some $\omega^\ast\in\varOmega=\mathcal{B}^{Q+1}$ such that the events in Eqs.~\eqref{eq:stab_r_algs} and~\eqref{eq:no_r_algs} occur. The final result follows by defining:
    \begin{equation}
        \bm{\mathcal{I}}\left(q\right):=\bm{\widetilde{\mathcal{I}}}\left(q,\omega^\ast\right).
    \end{equation}
\end{proof}

\subsection{Considering Many Replicas}

We now apply Lemmas~\ref{lem:det_rand_alg_red} through~\ref{lem:stab_loc_shad_red_along_path} in sequence to $T+1$ replicas, and choose a specific interpolation path which we construct in the following way. First, we fix natural numbers $T$ and $Q$. We sample $\bm{B}\sim\mathbb{P}_{\mathrm{con}}$ and $\left(\bm{\tilde{v}}^{\left(t\right)}\right)_{t=0}^T\sim\mathbb{P}_{\mathrm{G}}$ independently, and define for each $t\in\left[T\right]$ and integer $0\leq q\leq Q$:
\begin{equation}
    \bm{v}_q^{\left(t\right)}:=\bm{\tilde{v}}^{\left(0\right)}\oplus\bm{\varUpsilon}_{\left\lfloor\frac{qm}{Q}\right\rfloor}\left(\bm{\tilde{v}}^{\left(0\right)}\oplus\bm{\tilde{v}}^{\left(t\right)}\right),
\end{equation}
where we recall the definition of $\bm{\varUpsilon}_p$ (Eq.~\eqref{eq:upsilon_proj_def}):
\begin{equation}
    \bm{\varUpsilon}_p:=\operatorname{diag}\left(\underbrace{1,\ldots,1}_p,\underbrace{0,\ldots,0}_{m-p}\right).
\end{equation}
Finally, we define for all $t\in\left[T\right]$ and integer $0\leq q\leq Q$:
\begin{equation}
    \bm{X}_q^{\left(t\right)}:=\left(\bm{B},\bm{v}_q^{\left(t\right)}\right).
\end{equation}
By construction, any pair $\left(\bm{X}_q^{\left(t\right)},\bm{X}_{q+1}^{\left(t\right)}\right)$ is marginally distributed as $\mathbb{P}_2^{\left(\frac{m}{Q}\right)}$. Furthermore, at $q=0$ all of the $\bm{X}_q^{\left(t\right)}$ are identical; at $q=Q$, their parities $\bm{v}_q^{\left(t\right)}$ are i.i.d. We let $\mathbb{P}_{T,Q}$ denote the joint distribution over all $\left(\left(\bm{X}_q^{\left(t\right)}\right)_{q=0}^Q\right)_{t=1}^T$ sampled in this way.

Our main result here is combining all of the lemmas proven to this point, and bounding the probability that collective stability and near-optimality holds for $\bm{\mathcal{I}}$ over $\mathbb{P}_{T,Q}$.
\begin{proposition}[Considering many replicas]\label{prop:red_pure_prod_algs}
    Fix $T,Q\in\mathbb{N}$. Let $\bm{\mathcal{A}}$ be a quantum algorithm that is both $\left(f,L,p_{\text{st}}\right)$-stable and $\left(\gamma,p_{\text{f}}\right)$-optimal for $g_{\bm{X}}$ over $\bm{X}\sim\mathbb{P}_{\mathrm{G}}$. Assume that
    \begin{equation}
        p_{\text{f}}<\frac{1}{3\left(Q+1\right)^2},
    \end{equation}
    and fix any
    \begin{equation}
        \beta>\sqrt{\frac{Q}{1-\left(Q+1\right)\sqrt{3p_{\text{f}}}}}.
    \end{equation}
    Then,
    \begin{equation}
        \mathbb{P}_{\bm{X}\sim\mathbb{P}_{T,Q}}\left[\mathcal{Y}\right]\geq 1-3TQp_{\text{st}}-\left(TQ+1\right)\sqrt{3p_{\text{f}}}-TQ\exp\left(-\operatorname{\Omega}\left(n\right)\right),
    \end{equation}
    where we have defined the event:
    \begin{equation}\label{eq:suc_int}
        \begin{aligned}
            \mathcal{Y}&:=\exists\left\{\bm{\mathcal{I}}_t:\left\{q\right\}_{q=0}^Q\to\mathcal{B}\right\}_{t=1}^T,S_{\bm{B}}\in\binom{\left[n\right]}{f}:\\
            &\left\{\bigcap_{t,t'=1}^T\bm{\mathcal{I}}_t\left(0\right)=\bm{\mathcal{I}}_{t'}\left(0\right)\right\}\\
            &\cap\left\{\left\{\bigcap_{t=1}^T\bigcap_{q=0}^{Q-1}\frac{1}{n}\left\lVert\Tr_{S_{\bm{B}}}\left(\bm{\mathcal{I}}_t\left(q\right)-\bm{\mathcal{I}}_t\left(q+1\right)\right)\right\rVert_{W_1}\leq\beta L\frac{\lambda}{Q}\right\}\right.\\
            &\cap\left.\left\{\bigcap_{t=1}^T\bigcap_{q=1}^Qg_{\bm{X}_q}\left(\bm{\mathcal{I}}_t\left(q\right)\right)\geq\gamma\lambda n\right\}\right\}.
        \end{aligned}
    \end{equation}
\end{proposition}
\begin{proof}
    We begin by conditioning on the event $\mathcal{C}_{\bm{X}}$ defined as in Lemma~\ref{lem:stab_loc_shad_red_along_path}. First, note that by construction of the interpolation path:
    \begin{equation}
        \left\lVert\bm{v}_q^{\left(t\right)}-\bm{v}_{q+1}^{\left(t\right)}\right\rVert_1\leq\frac{m}{Q}=\frac{\lambda}{Q}n
    \end{equation}
    for any $t\in\left[T\right]$ and integer $0\leq q\leq Q-1$. The final two events composing $\mathcal{Y}$ then follow immediately from applying Lemma~\ref{lem:stab_loc_shad_red_along_path} to $\left(\bm{X}_q^{\left(t\right)}\right)_{q=0}^Q$ for each $t\in\left[T\right]$---each yielding the function $\bm{\mathcal{I}}_t$---and the union bound. The consistency relation at $q=0$ follows by inspecting the proof of Lemma~\ref{lem:stab_loc_shad_red_along_path} and noting that $\bm{\mathcal{I}}_t\left(0\right)$ depends only on $\bm{X}_0^{\left(t\right)}$; as the $\bm{X}_0^{\left(t\right)}$ are identical across all $t\in\left[T\right]$, so are the $\bm{\mathcal{I}}_t\left(0\right)$. The final result holds by recalling the probability of the event $\mathcal{C}_{\bm{X}}$ occurring from Lemma~\ref{lem:stab_loc_shad_red_along_path}, as well as the union bound.
\end{proof}

\subsection{Distant Clustering for Independent Instances}

We here bound the probability of the \emph{chaos property} occurring. This is the second aspect of Theorem~\ref{thm:e_ogp}, which states that independent instances have distant near-optimal states.
\begin{lemma}[Distant clustering for independent instances]\label{lem:ind_clust_prob}
    Conditioned on $S\in\binom{\left[n\right]}{f}$, fix $\left(R,\mu,\nu_2,S\right)$ such that $g_{\bm{X}}$ satisfies the chaos property (Definition~\ref{def:chaos_prop}) with these parameters. Then:
    \begin{equation}
        \mathbb{P}_{\bm{X}\sim\mathbb{P}_{T,Q}}\left[\mathcal{Z}\mid S\right]\geq 1-\binom{T}{R}\exp\left(-\operatorname{\Omega}\left(n\right)\right),
    \end{equation}
    where
    \begin{equation}
       \mathcal{Z}\mid S:=\bigcap_{\mathcal{R}\in\binom{\left[T\right]}{R}}\left\{\mathcal{S}_{\left(\bm{X}_Q^{\left(r\right)}\right)_{r\in\mathcal{R}}}^{\left(R,\mu,0,\nu_2,S\right)}=\varnothing\right\}.
    \end{equation}
\end{lemma}
\begin{proof}
    As the $\bm{v}_Q^{\left(r\right)}$ are independent by construction, the desired result follows immediately from the union bound and the definition of the chaos property.
\end{proof}

\subsection{Topologically Obstructed Configurations Conditioned on Events}

Our strategy is now to show, conditioned on all of the previously-introduced events occurring, that the algorithm must output configurations that are topologically obstructed by either the $R$-OGP or the chaos property. This will follow from a similar argument as that in \cite{10.1214/23-AAP1953,anschuetz2025efficientlearningimpliesquantum}, though we reproduce it here in full for completeness. In what follows, we recall the $\left(k,S\right)$-minimum Hamming semimetric (Definition~\ref{def:k_min_ham_semi}):
\begin{equation}
    d_{k,S}\left(\bm{z},\bm{z'}\right):=\sum_{i=1}^k\min\left(\left\lVert\bm{\varSigma}_S\bm{\varGamma}_i\left(\bm{z}\oplus\bm{z'}\right)\right\rVert_1,\left\lVert\bm{\varSigma}_S\bm{\varGamma}_i\left(\bm{z}\oplus\bm{z'}\oplus\bm{1}\right)\right\rVert_1\right).
\end{equation}

First, we sample $\bm{X}\sim\mathbb{P}_{T,Q}$ and condition on the events $\mathcal{Y}$ (from Proposition~\ref{prop:red_pure_prod_algs}) and $\mathcal{Z}$ (from Lemma~\ref{lem:ind_clust_prob}). We let $\nu_1,\nu_2$ be the associated $R$-OGP parameters; if only the chaos property is satisfied, we take $\nu_1=0$. We construct a graph $G_{T,Q}$ with colored edges in the following way:
\begin{enumerate}
    \item $G_{T,Q}$ has vertex set $\left[T\right]$;
    \item The $G_{T,Q}$ has edge $\left(t,t'\right)$ if and only if there exists a $q\in\left[Q\right]$ such that:
    \begin{equation}
        d_{k,S_{\bm{B}}}\left(\bm{\mathcal{I}}_t\left(q\right),\bm{\mathcal{I}}_{t'}\left(q\right)\right)\in\left[\nu_1 n,\nu_2 n\right].
    \end{equation}
    The color of the edge $\left(t,t'\right)$ is the minimal $q\in\left[Q\right]$ such that this holds true.
\end{enumerate}
Our main goal in this section is to demonstrate that $G_{T,Q}$ has a monochromatic $R$-clique, which we know from Theorem~\ref{thm:e_ogp} is obstructed by the multi-OGP.

First, we claim that $G_{T,Q}$ is \emph{$R$-admissible} when conditioned on $\mathcal{Y}$ and $\mathcal{Z}$, defined as follows.
\begin{definition}[$R$-admissibility]
    Let $R\in\mathbb{N}$. A graph $G=\left(V,E\right)$ is said to be \emph{$R$-admissible} if, for every $\mathcal{R}\subseteq V$ with $\left\lvert\mathcal{R}\right\rvert=R$, there exist distinct $i,j\in\mathcal{R}$ such that $\left(i,j\right)\in E$.
\end{definition}
We now show that $G_{T,Q}$ is $R$-admissible when conditioned on $\mathcal{Y}$ and $\mathcal{Z}$.
\begin{lemma}[Topologically obstructed configurations conditioned on events]\label{lem:top_obs}
    Assume that:
    \begin{equation}\label{eq:suff_stab_m_admiss}
        \frac{\beta\lambda L}{Q}<\frac{\nu_2-\nu_1}{2}
    \end{equation}
    and $\gamma\geq\mu$. Draw $\left(\left(\bm{X}_q^{\left(t\right)}\right)_{q=0}^Q\right)_{t=1}^T$ according to $\mathbb{P}_{T,Q}$, and condition on the events $\mathcal{Y}$ (from Proposition~\ref{prop:red_pure_prod_algs}) and $\mathcal{Z}$ (from Lemma~\ref{lem:ind_clust_prob}) occurring with $S_{\bm{B}}=S$. Then, $G_{T,Q}$ is $R$-admissible.
\end{lemma}
\begin{proof}
    Define the function over $t,t'\in\left[T\right]$ and integer $0\leq q\leq Q$:
    \begin{equation}
        p\left(t,t',q\right):=\frac{1}{n}d_{k,S_{\bm{B}}}\left(\bm{\mathcal{I}}_t\left(q\right),\bm{\mathcal{I}}_{t'}\left(q\right)\right).
    \end{equation}
    While the $\left(k,S\right)$-minimum Hamming semimetric does not necessarily satisfy the triangle inequality, we claim that it has the following property for any $\bm{z},\bm{z'},\bm{z''}\in\mathbb{F}_2^n$:
    \begin{equation}
        d_{k,S}\left(\bm{z},\bm{z'}\right)\leq d_{k,S}\left(\bm{z},\bm{z''}\right)+d_{\mathrm{H}}\left(\bm{\varSigma}_S\bm{z'},\bm{\varSigma}_S\bm{z''}\right),
    \end{equation}
    where $d_{\mathrm{H}}$ denotes the Hamming distance. To see this, denoting $\bm{1}\in\mathbb{F}_2^n$ as the all-ones vector, we have:
    \begin{equation}
        \begin{aligned}
            d_{k,S}\left(\bm{z},\bm{z'}\right)&=\sum_{i=1}^k\min\left(\left\lVert\bm{\varSigma}_S\bm{\varGamma}_i\left(\bm{z}\oplus\bm{z'}\right)\right\rVert_1,\left\lVert\bm{\varSigma}_S\bm{\varGamma}_i\left(\bm{z}\oplus\bm{z'}\oplus\bm{1}\right)\right\rVert_1\right)\\
            &=\sum_{i=1}^k\min\left(\left\lVert\bm{\varSigma}_S\bm{\varGamma}_i\left(\bm{z}\oplus\bm{z''}\oplus\bm{z'}\oplus\bm{z''}\right)\right\rVert_1,\left\lVert\bm{\varSigma}_S\bm{\varGamma}_i\left(\bm{z}\oplus\bm{z''}\oplus\bm{z'}\oplus\bm{z''}\oplus\bm{1}\right)\right\rVert_1\right)\\
            &\leq\sum_{i=1}^k\min\left(\left\lVert\bm{\varSigma}\bm{\varGamma}_i\left(\bm{z}\oplus\bm{z''}\right)\right\rVert+\left\lVert\bm{\varSigma}\bm{\varGamma}_i\left(\bm{z'}\oplus\bm{z''}\right)\right\rVert_1,\left\lVert\bm{\varSigma}\bm{\varGamma}_i\left(\bm{z}\oplus\bm{z''}\oplus\bm{1}\right)\right\rVert+\left\lVert\bm{\varSigma}\bm{\varGamma}_i\left(\bm{z'}\oplus\bm{z''}\right)\right\rVert_1\right)\\
            &=d_{k,S}\left(\bm{z},\bm{z''}\right)+d_{\mathrm{H}}\left(\bm{\varSigma}_S\bm{z'},\bm{\varSigma}_S\bm{z''}\right).
        \end{aligned}
    \end{equation}
    Therefore, conditioned on $\mathcal{Y}$, by Proposition~\ref{prop:red_pure_prod_algs} we have that $p$ is Lipschitz in $q$ in that:
    \begin{equation}\label{eq:lip_of_p}
        \begin{aligned}
            \left\lvert p\left(t,t',q\right)-p\left(t,t',q+1\right)\right\rvert&\leq\frac{1}{n}d_{\mathrm{H}}\left(\Tr_{S_{\bm{B}}}\left(\bm{\mathcal{I}}_t\left(q\right)\right),\Tr_{S_{\bm{B}}}\left(\bm{\mathcal{I}}_t\left(q+1\right)\right)\right)+\frac{1}{n}d_{\mathrm{H}}\left(\Tr_{S_{\bm{B}}}\left(\bm{\mathcal{I}}_{t'}\left(q\right)\right),\Tr_{S_{\bm{B}}}\left(\bm{\mathcal{I}}_{t'}\left(q+1\right)\right)\right)\\
            &\leq\frac{2\beta\lambda L}{Q}\\
            &<\nu_2-\nu_1,
        \end{aligned}
    \end{equation}
    where the final inequality is due to Eq.~\eqref{eq:suff_stab_m_admiss}. Furthermore, by construction $p\left(0\right)=0$. Finally, writing $\bm{X}_q^{\left(t\right)}=\left(\bm{B},\bm{v}_q^{\left(t\right)}\right)$, recall that the $\bm{v}_Q^{\left(t\right)}$ are independent. Therefore, conditioned on $\mathcal{Z}$ with $S=S_{\bm{B}}$, by the $\left(\gamma,0\right)$-near-optimality of $\bm{\mathcal{I}}$ (when conditioned on $\mathcal{Y}$) we have that there exist some $s^\ast\neq t^\ast\in\left[T\right]$ such that:
    \begin{equation}
        p\left(s^\ast,t^\ast,Q\right)\geq\nu_2.
    \end{equation}
    Let $q^\ast$ be the smallest $q\in\left[Q\right]$ satisfying $p\left(s^\ast,t^\ast,q^\ast\right)\geq\nu_2$. In particular, $p\left(s^\ast,t^\ast,q^\ast-1\right)<\nu_2$. Furthermore, by Eq.~\eqref{eq:lip_of_p},
    \begin{equation}
        p\left(s^\ast,t^\ast,q^\ast-1\right)>\nu_2-\left(\nu_2-\nu_1\right)=\nu_1.
    \end{equation}
    Combining these two observations,
    \begin{equation}
        p\left(s^\ast,t^\ast,q^\ast-1\right)\in\left(\nu_1,\nu_2\right).
    \end{equation}
    Recalling the definition of $p\left(t,t',q\right)$ and $G_{T,Q}$ yields the final result.
\end{proof}

We now cite a result linking $R$-admissibility to the existence of a monochromatic $R$-clique.
\begin{proposition}[$G$ contains a monochromatic $R$-clique~{\cite[Proposition~6.12]{10.1214/23-AAP1953}}]
    Assume $G$ is $R$-admissible, has $C$ edge colors, and has $\exp_2\left(C^{4mC}\right)$ vertices. Then, $G$ has a monochromatic clique of cardinality $R$.
\end{proposition}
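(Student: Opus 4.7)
The plan is to reduce the statement to the classical multicolor Ramsey theorem by treating the absence of an edge as an auxiliary $(C+1)$-st color. Concretely, I would extend the edge coloring of $G$ to a coloring of the complete graph $K_N$ on the vertex set of $G$, where $N = \exp_2(C^{4mC})$: every edge of $G$ keeps its original color in $[C]$, while every pair of vertices not joined by an edge of $G$ is assigned a new color $\star$. This produces a $(C+1)$-coloring of $\binom{[N]}{2}$ to which multicolor Ramsey results can be applied directly, without having to reason about which vertex pairs are edges of $G$.

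Next, I would invoke the multicolor Ramsey theorem: letting $R_c(k)$ denote the smallest integer such that every $c$-coloring of the edges of a complete graph on $R_c(k)$ vertices contains a monochromatic $K_k$, provided $N \geq R_{C+1}(R)$ there is a set $\mathcal{R} \subseteq [N]$ of cardinality $R$ whose pairs all share a single color $c^\ast$ in the augmented coloring. I would then use $R$-admissibility to rule out $c^\ast = \star$: by construction, $c^\ast = \star$ would mean that no pair of vertices in $\mathcal{R}$ is an edge of $G$, which directly contradicts the assumption that every $R$-subset of vertices of $G$ contains at least one edge. Hence $c^\ast \in [C]$, and $\mathcal{R}$ is the desired monochromatic $R$-clique in $G$.

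The only remaining task is to verify the hypothesis $N \geq R_{C+1}(R)$. The standard Erd\H{o}s--Szekeres-type inductive argument yields a singly exponential bound of the form $R_{C+1}(R) \leq (C+1)^{(C+1)R}$, which is comfortably dwarfed by the doubly exponential quantity $\exp_2(C^{4mC})$ granted in the hypothesis. The main obstacle is therefore essentially nonexistent: the proof is little more than the encoding of non-edges as a fresh color, and the vast slack in the vertex count relieves us of any need to optimize Ramsey constants or to pursue the stronger Ramsey bounds available in the literature.
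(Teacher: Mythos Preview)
The paper does not give its own proof here; the proposition is quoted verbatim from the cited reference, and that source uses exactly your argument---complete the graph by assigning non-edges a fresh $(C{+}1)$-st color, apply the multicolor Ramsey bound $R_{C+1}(R)\le (C{+}1)^{(C+1)R}$, and eliminate the auxiliary color via $R$-admissibility. Your proof is correct and matches the original. One cosmetic point worth flagging: the $m$ in $\exp_2\!\left(C^{4mC}\right)$ is a notational leftover from the cited paper, where the target clique size is denoted $m$ rather than $R$ (cf.\ the footnote in the present paper about the ``$m$-OGP'' vs.\ ``$R$-OGP'' naming); it should be read as $C^{4RC}$, consistent with the application $T=\exp_2\!\left(Q^{4RQ}\right)$ that immediately follows.
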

This immediately gives the following result when applied to $G_{T,Q}$.
\begin{proposition}[$G_{T,Q}$ contains a monochromatic $R$-clique]\label{prop:m_clique}
    If $T=\exp_2\left(Q^{4RQ}\right)$, $G_{T,Q}$ conditioned on $\mathcal{Y}$ and $\mathcal{Z}$ with $S_{\bm{B}}=S$ has a monochromatic clique of cardinality $R$ if:
    \begin{equation}
        \frac{\beta\lambda L}{Q}<\frac{\nu_2-\nu_1}{2}
    \end{equation}
    and $\gamma\geq\mu$.
\end{proposition}

\subsection{Completing the Proof}

We now have all of the ingredients to complete Theorem~\ref{thm:stab_algs_fail}. First, we lower bound the probability that the event:
\begin{equation}
    \mathcal{W}:=\bigsqcup_{S\in\binom{\left[n\right]}{f}}\left\{\mathcal{Y}\mid S_{\bm{B}}=S\right\}\cap\left\{\mathcal{Z}\mid S\right\}
\end{equation}
occurs, with $\mathcal{Y}$ and $\mathcal{Z}$ defined in Proposition~\ref{prop:red_pure_prod_algs} and Lemma~\ref{lem:ind_clust_prob}, respectively.
\begin{lemma}[Probability of good events]\label{lem:prob_good_events}
    Fix $T$ and $Q$ as constants independent of $n$. Assume $p_{\text{st}}$, $p_{\text{f}}$, and $\beta\in\mathbb{R}^+$ are such that:
    \begin{align}
        3TQp_{\text{st}}+\left(TQ+1\right)\sqrt{3p_{\text{f}}}&\leq 1-\exp\left(-\operatorname{o}\left(n\right)\right);\\
        \beta&>\sqrt{\frac{Q}{1-\left(Q+1\right)\sqrt{3p_{\text{f}}}}}.
    \end{align}
    Then,
    \begin{equation}
        \mathbb{P}_{\mathbb{P}_k}\left[\mathcal{W}\right]\geq\exp\left(-\operatorname{o}\left(n\right)\right).
    \end{equation}
\end{lemma}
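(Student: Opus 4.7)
The plan is to apply the union bound to $\mathcal{W}^\complement = \mathcal{Y}^\complement \cup \mathcal{Z}^\complement$, using the probability bounds already established in Proposition~\ref{prop:red_pure_prod_algs} and Lemma~\ref{lem:ind_clust_prob}. Note that the hypothesis on $\beta$ is exactly what is required to invoke Proposition~\ref{prop:red_pure_prod_algs} (and, transitively, Lemmas~\ref{lem:class_shads_red} and~\ref{lem:stab_loc_shad_red_along_path}), so the bounds from those results apply directly.

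The first step is to bound $\mathbb{P}[\mathcal{Y}^\complement] \leq 3TQ p_{\text{st}} + (TQ+1)\sqrt{3p_{\text{f}}} + TQ \exp(-\Omega(n))$ from Proposition~\ref{prop:red_pure_prod_algs}, and $\mathbb{P}[\mathcal{Z}^\complement] \leq \binom{T}{R}\exp(-\Omega(n))$ from Lemma~\ref{lem:ind_clust_prob}. Since $T, Q, R$ are constants independent of $n$, the sum $TQ \exp(-\Omega(n)) + \binom{T}{R}\exp(-\Omega(n))$ is itself $\exp(-\Omega(n))$, so by the union bound
\begin{equation}
    \mathbb{P}[\mathcal{W}^\complement] \leq 3TQ p_{\text{st}} + (TQ+1)\sqrt{3p_{\text{f}}} + \exp(-\Omega(n)).
\end{equation}

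The second step is to combine this with the hypothesis $3TQ p_{\text{st}} + (TQ+1)\sqrt{3p_{\text{f}}} \leq 1 - \exp(-o(n))$, giving
\begin{equation}
    \mathbb{P}[\mathcal{W}] \geq \exp(-o(n)) - \exp(-\Omega(n)).
\end{equation}
Since any function of the form $\exp(-o(n))$ eventually dominates $\exp(-\Omega(n))$ (the former decays subexponentially, the latter exponentially), for sufficiently large $n$ the right-hand side is at least $\tfrac{1}{2}\exp(-o(n)) \geq \exp(-o(n))$, by absorbing the factor of $1/2$ into the $o(n)$ in the exponent. This completes the proof.

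I do not expect any real obstacle here; the lemma is essentially a bookkeeping step that packages the previous probability estimates into a clean lower bound suitable for the final contradiction argument in Theorem~\ref{thm:stab_algs_fail}. The only mildly delicate point is verifying that $\exp(-o(n))$ asymptotically swallows the $\exp(-\Omega(n))$ tail, which follows from the standard fact that $-o(n) + \log 2 \geq -\Omega(n)$ eventually.
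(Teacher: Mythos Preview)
Your proof is correct and follows essentially the same approach as the paper: apply the union bound to $\mathcal{W}^\complement=\mathcal{Y}^\complement\cup\mathcal{Z}^\complement$, invoke Proposition~\ref{prop:red_pure_prod_algs} and Lemma~\ref{lem:ind_clust_prob}, and combine with the hypothesis. If anything, your version is slightly more careful than the paper's, which omits the $\binom{T}{R}\exp(-\Omega(n))$ term explicitly (absorbing it into the other exponential tail) and does not spell out why $\exp(-o(n))-\exp(-\Omega(n))\geq\exp(-o(n))$.
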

\begin{proof}
    By Proposition~\ref{prop:red_pure_prod_algs}, Lemma~\ref{lem:ind_clust_prob}, and the union bound, $\mathcal{W}$ occurs with probability at least:
    \begin{equation}
        \begin{aligned}
            \mathbb{P}_{\mathbb{P}_k}\left[\mathcal{W}\right]&\geq 1-3TQp_{\text{st}}+\left(TQ+1\right)\sqrt{3p_{\text{f}}}-TQ\exp\left(-\operatorname{\Omega}\left(n\right)\right)\\
            &\geq\exp\left(-\operatorname{o}\left(n\right)\right).
        \end{aligned}
    \end{equation}
\end{proof}

We now use Proposition~\ref{prop:m_clique} to show a contradiction with the statement of Theorem~\ref{thm:stab_algs_fail}. Once again we assume $\mu,\nu_1,\nu_2$ are as in the assumed $R$-OGP; if the chaos property assumption is taken instead, fix $\nu_1=0$ in what follows. Conditioned on $\mathcal{W}$ (and given the assumed bounds on $\gamma,p_{\text{f}}$ and $f,L,p_{\text{st}}$), Proposition~\ref{prop:m_clique} states that there exists some $0\leq q\leq Q-1$ and $\mathcal{R}\in\binom{\left[T\right]}{R}$ such that, for all $t\neq t'\in\mathcal{R}$,
\begin{align}
    d_{k,S_{\bm{B}}}\left(\bm{\mathcal{I}}_t\left(q\right),\bm{\mathcal{I}}_{t'}\left(q\right)\right)&\in\left[\nu_1 n,\nu_2 n\right],\\
    g_{\bm{X}_q^{\left(t\right)}}\left(\bm{\mathcal{I}}_t\left(q\right)\right)&\geq\gamma\lambda n,\\
    g_{\bm{X}_q^{\left(t'\right)}}\left(\bm{\mathcal{I}}_{t'}\left(q\right)\right)&\geq\gamma\lambda n.
\end{align}
Namely, recalling the definition of the set $\mathcal{S}_{\left(\bm{X}_q^{\left(r\right)}\right)_{r\in\mathcal{R}}}$ in the definition of the multi-OGP (Definition~\ref{def:multi_ogp}), conditioned on $\mathcal{W}$ it is the case that this set is nonempty. This implies that:
\begin{equation}
    \begin{aligned}
        \mathbb{P}_{\bm{X}\sim\mathbb{P}_{T,Q}}\left[\mathcal{S}_{\left(\bm{X}_q^{\left(r\right)}\right)_{r\in\mathcal{R}}}^{\left(R,\mu,\nu_1,\nu_2,S_{\bm{B}}\right)}\neq\varnothing\right]&\geq\mathbb{P}_{\bm{X}\sim\mathbb{P}_{T,Q}}\left[\mathcal{W}\right]\\
        &\geq\exp\left(-\operatorname{o}\left(n\right)\right),
    \end{aligned}
\end{equation}
where the final line follows from Lemma~\ref{lem:prob_good_events}.

However, this contradicts both the $R$-OGP and the chaos property. For the former, from Definition~\ref{def:multi_ogp},
\begin{equation}
    \mathbb{P}_{\bm{X}\sim\mathbb{P}_{T,Q}}\left[\mathcal{S}_{\left(\bm{X}_q^{\left(r\right)}\right)_{r\in\mathcal{R}}}^{\left(R,\mu,\nu_1,\nu_2,S_{\bm{B}}\right)}\neq\varnothing\right]\leq\exp\left(-\operatorname{\Omega}\left(n\right)\right)
\end{equation}
if $\gamma\geq\mu$, yielding a contradiction. In the case of the chaos property, when $Q=1$ the $\bm{X}_1^{\left(r\right)}$ are constrained to independent, so by Definition~\ref{def:chaos_prop}:
\begin{equation}
    \mathbb{P}_{\bm{X}\sim\mathbb{P}_{T,Q}}\left[\mathcal{S}_{\left(\bm{X}_q^{\left(r\right)}\right)_{r\in\mathcal{R}}}^{\left(R,\mu,\nu_1,\nu_2,S_{\bm{B}}\right)}\neq\varnothing\right]\leq\exp\left(-\operatorname{\Omega}\left(n\right)\right)
\end{equation}
if $\gamma\geq\mu$, once again yielding a contradiction. Taking:
\begin{align}
    \beta&=\sqrt{\frac{Q+1}{1-\left(Q+1\right)\sqrt{3p_{\text{f}}}}},\\
    \delta&=\frac{1}{6TQ}=\frac{1}{6Q\exp_2\left(Q^{4RQ}\right)},\label{eq:delta_stab_def}
\end{align}
with $Q$ fixed to be $1$ if the chaos property is satisfied, then proves Theorem~\ref{thm:stab_algs_fail}.

\section{Maximum Value of Transposed Gallager \texorpdfstring{\textsc{MAX-$k$-XOR-SAT}}{MAX-k-XOR-SAT}}\label{sec:max_value_gallager_xor_sat}

We here prove the maximum w.h.p.\ attainable value for \textsc{MAX-$k$-XOR-SAT} instances drawn from the transposed Gallager ensemble (Definition~\ref{def:trans_gall_ens}). First, we prove that \textsc{MAX-$k$-XOR-SAT} over this distribution exhibits \emph{concentration}, namely, that the maximum number of satisfied clauses exponentially concentrates around its mean. We then show that the maximum fraction of satisfied clauses is, with probability exponentially close to $1$ over this ensemble, given by:
\begin{equation}\label{eq:theta_star_def}
    \theta^\ast:=1-\operatorname{H}_2^{-1}\left(1-\frac{1}{\lambda}\right)
\end{equation}
at sufficiently large $k$. In the large $\lambda$ limit, we can expand the inverse binary entropy near $1$ to find this expression takes the form:
\begin{equation}
    \theta^\ast=\frac{1}{2}+\sqrt{\frac{\ln\left(2\right)}{2\lambda}}+\operatorname{o}_\lambda\left(\lambda^{-\frac{1}{2}}\right).
\end{equation}
This functional form was conjectured by~\cite{jordan2025optimizationdecodedquantuminterferometry} for \textsc{MAX-$k$-XOR-SAT} instances drawn from this ensemble; we prove this is indeed the correct functional form.

We now formally state the main result of this section.
\begin{theorem}[Maximum value of transposed Gallager \textsc{MAX-$k$-XOR-SAT}]\label{thm:max_value}
    Let $\theta^\ast$ be defined as in Eq.~\eqref{eq:theta_star_def}. For any $\epsilon>0$,
    \begin{equation}\label{eq:lim_max_ub}
        \mathbb{P}_{\left(\bm{B},\bm{v}\right)\sim\mathbb{P}_{\mathrm{G}}}\left[\frac{1}{m}\max_{\bm{z}\in\mathbb{F}_2^n}\left(m-\left\lVert\bm{B}\bm{z}\oplus\bm{v}\right\rVert_1\right)-\theta^\ast\geq\epsilon\right]\leq\exp\left(-\operatorname{\Omega}\left(n\right)\right).
    \end{equation}
    Furthermore, for any $\epsilon>0$, there exists a $k_\epsilon=\operatorname{O}_{\epsilon^{-1}}\left(\log\left(\epsilon^{-1}\right)\right)$ such that, for all $k\geq k_\epsilon$,
    \begin{equation}\label{eq:lim_max}
        \mathbb{P}_{\left(\bm{B},\bm{v}\right)\sim\mathbb{P}_{\mathrm{G}}}\left[\left\lvert\frac{1}{m}\max_{\bm{z}\in\mathbb{F}_2^n}\left(m-\left\lVert\bm{B}\bm{z}\oplus\bm{v}\right\rVert_1\right)-\theta^\ast\right\rvert\geq\epsilon\right]\leq\exp\left(-\operatorname{\Omega}\left(n\right)\right).
    \end{equation}
\end{theorem}

\subsection{Concentration}

We first prove a concentration bound that we will use later.
\begin{proposition}[Concentration]\label{prop:concentration}
    Let
    \begin{equation}
        g^\ast\left(\bm{B},\bm{v}\right):=\max_{\bm{z}\in\mathbb{F}_2^n}\left(m-\left\lVert\bm{B}\bm{z}\oplus\bm{v}\right\rVert_1\right).
    \end{equation}
    There exists a universal constant $C>0$ such that:
    \begin{equation}
        \mathbb{P}_{\left(\bm{B},\bm{v}\right)\sim\mathbb{P}_{\mathrm{G}}}\left[\left\lvert g^\ast\left(\bm{B},\bm{v}\right)-\mathbb{E}_{\bm{v}\sim\mathbb{P}_{\mathrm{par}}}\left[g^\ast\left(\bm{B},\bm{v}\right)\right]\right\rvert\geq tm\right]\leq 2\exp\left(-Cmt^2\right).
    \end{equation}
\end{proposition}
\begin{proof}
    This concentration property follows from the fact that Lispchitz functions of a uniformly random point on the hypercube concentrate. In particular, there exists a universal constant $C>0$ such that, for any function $f:\left\{0,1\right\}^m\to\mathbb{R}$ satisfying:
    \begin{equation}\label{eq:l_lipschitz}
        \left\lvert f\left(\bm{v}\right)-f\left(\bm{v'}\right)\right\rvert\leq\frac{L}{m}\left\lVert\bm{v}\oplus\bm{v'}\right\rVert_1
    \end{equation}
    for all $\bm{v}\in\left\{0,1\right\}^m$, we have for $\bm{v}$ drawn uniformly at random that~\cite[Theorem~5.2.5]{vershynin_vectors}:
    \begin{equation}\label{eq:lipschitz_conc}
        \mathbb{P}\left[\left\lvert f\left(\bm{v}\right)-\mathbb{E}\left[f\left(\bm{v}\right)\right]\right\rvert\geq t\right]\leq 2\exp\left(-\frac{Cmt^2}{L^2}\right).
    \end{equation}
    We define:
    \begin{equation}
        g\left(\bm{v};\bm{B},\bm{z}\right):=m-\left\lVert\bm{B}\bm{z}\oplus\bm{v}\right\rVert_1.
    \end{equation}
    This function is obviously $L$-Lipschitz in $\bm{v}$ according to Eq.~\eqref{eq:l_lipschitz} with $L=m$. As the supremum of a set of $L$-Lipschitz functions is $L$-Lipschitz, we have from Eq.~\eqref{eq:lipschitz_conc} that:
    \begin{equation}
        \mathbb{P}_{\left(\bm{B},\bm{v}\right)\sim\mathbb{P}_{\mathrm{G}}}\left[\left\lvert g^\ast\left(\bm{B},\bm{v}\right)-\mathbb{E}_{\bm{v}\sim\mathbb{P}_{\mathrm{par}}}\left[g^\ast\left(\bm{B},\bm{v}\right)\right]\right\rvert\geq t\right]\leq 2\exp\left(-\frac{Ct^2}{m}\right).
    \end{equation}
    Taking $t\mapsto tm$ yields the final result.
\end{proof}

\subsection{Upper Bound}

We first upper bound the (w.h.p.) maximum number of satisfiable clauses, i.e., we first prove Eq.~\eqref{eq:lim_max_ub}. To achieve this, we define the random counting variable for any $0<\theta<1$:
\begin{equation}\label{eq:counting_var_def}
    N_\theta:=\left\lvert\left\{\bm{z}\in\mathbb{F}_2^n:\left\lVert\bm{B}\bm{z}\oplus\bm{v}\right\rVert_1\leq\left(1-\theta\right)m\right\}\right\rvert.
\end{equation}
Our desired upper bound then can be framed as an upper bound on:
\begin{equation}
    \mathbb{P}_{\left(\bm{B},\bm{v}\right)\sim\mathbb{P}_{\mathrm{G}}}\left[N_\theta\geq 1\right]\leq\mathbb{E}_{\bm{v}\sim\mathbb{P}_{\mathrm{par}}}\left[N_\theta\right]=\sum_{\bm{z}\in\mathbb{F}_2^n}\mathbb{P}_{\left(\bm{B},\bm{v}\right)\sim\mathbb{P}_{\mathrm{G}}}\left[\left\lVert\bm{B}\bm{z}\oplus\bm{v}\right\rVert_1\leq\left(1-\theta\right)m\right].
\end{equation}
In what follows we assume $\theta\geq\frac{1}{2}$ for brevity, as $\theta=\frac{1}{2}$ is trivially asymptotically achieved by random guessing.

Note that $\bm{v}$ has i.i.d.\ Bernoulli entries. This probability is therefore then just the probability a binomial random variable with $m$ trials succeeds at least $\theta m$ times. We therefore have from the CDF of the binomial distribution that:
\begin{equation}\label{eq:first_moment}
    \begin{aligned}
        \mathbb{E}_{\left(\bm{B},\bm{v}\right)\sim\mathbb{P}_{\mathrm{G}}}\left[N_\theta\right]&=2^n\operatorname{I}_{1/2}\left(\theta m+1,m-\theta m-1\right)\\
        &=2^n\operatorname{I}_{1/2}\left(\theta\lambda n+1,\left(1-\theta\right)\lambda n-1\right)\\
        &=\exp_2\left(-\left(\lambda-1\right)n+\operatorname{H}_2\left(\theta\right)\lambda n+\operatorname{O}\left(\log\left(n\right)\right)\right),
    \end{aligned}
\end{equation}
where $\operatorname{I}_p$ is the regularized incomplete beta function with parameter $p$ and $\operatorname{H}_2$ is the binary entropy function (in bits). This gives in the $n\to\infty$ limit:
\begin{equation}\label{eq:upper_bound}
    \mathbb{P}_{\left(\bm{B},\bm{v}\right)\sim\mathbb{P}_{\mathrm{G}}}\left[N_\theta\geq 1\right]\leq\exp\left(-\operatorname{\Omega}\left(n\right)\right)
\end{equation}
for any:
\begin{equation}
    \theta>\theta^\ast:=1-\operatorname{H}_2^{-1}\left(1-\frac{1}{\lambda}\right),
\end{equation}
where the inverse of the binary entropy is here defined with the codomain $\left[0,\frac{1}{2}\right]$.

\subsection{Lower Bound}

We now lower bound the (w.h.p.) maximum number of satisfiable clauses at sufficiently large $k$, which will complete the proof of Eq.~\eqref{eq:lim_max} (and therefore Theorem~\ref{thm:max_value}) given Eq.~\eqref{eq:lim_max_ub}. Recalling the definition of $N_\theta$ from Eq.~\eqref{eq:counting_var_def}, we will achieve this by using the Paley--Zygmund inequality:
\begin{equation}\label{eq:gen_pz}
    \mathbb{P}\left[N_\theta\geq 1\right]\geq\frac{\mathbb{E}\left[N_\theta\right]^2}{\mathbb{E}\left[N_\theta^2\right]},
\end{equation}
as well as a boosting trick due to Frieze~\cite{frieze1990171}. Once again, in what follows we assume $\theta\geq\frac{1}{2}$ for brevity, as asymptotically uniformly random $\bm{z}$ achieve $\theta=\frac{1}{2}$ w.h.p. 

We have already calculated the numerator of the right-hand side of Eq.~\eqref{eq:gen_pz} in Eq.~\eqref{eq:first_moment}, leaving only the second moment of $N_\theta$ to bound. We have by Markov's inequality and the law of total probability:
\begin{equation}
    \begin{aligned}
        \mathbb{E}_{\left(\bm{B},\bm{v}\right)\sim\mathbb{P}_{\mathrm{G}}}\left[N_\theta^2\right]=\sum_{\bm{z}\in\mathbb{F}_2^n}\sum_{\bm{z'}\in\mathbb{F}_2^n}&\mathbb{P}_{\left(\bm{B},\bm{v}\right)\sim\mathbb{P}_{\mathrm{G}}}\left[\left\lVert\bm{B}\bm{z}\oplus\bm{v}\right\rVert_1\leq\left(1-\theta\right)m\wedge\left\lVert\bm{B}\bm{z'}\oplus\bm{v}\right\rVert_1\leq\left(1-\theta\right)m\right]\\
        =\sum_{\bm{z}\in\mathbb{F}_2^n}\sum_{\bm{z'}\in\mathbb{F}_2^n}\sum_{\substack{\bm{\tilde{v}}\in\mathbb{F}_2^m\\\text{s.t. }\left\lVert\bm{\tilde{v}}\right\rVert_1\leq\left(1-\theta\right)m}}&\mathbb{P}_{\left(\bm{B},\bm{v}\right)\sim\mathbb{P}_{\mathrm{G}}}\left[\bm{B}\bm{z}\oplus\bm{v}=\bm{\tilde{v}}\right]\\
        &\times\mathbb{P}_{\left(\bm{B},\bm{v}\right)\sim\mathbb{P}_{\mathrm{G}}}\left[\left\lVert\bm{B}\bm{z'}\oplus\bm{v}\right\rVert_1\leq\left(1-\theta\right)m\mid\bm{B}\bm{z}\oplus\bm{v}=\bm{\tilde{v}}\right]\\
        \leq\sum_{\bm{z}\in\mathbb{F}_2^n}\sum_{\bm{z'}\in\mathbb{F}_2^n}\sum_{\substack{\bm{\tilde{v}}\in\mathbb{F}_2^m\\\text{s.t. }\left\lVert\bm{\tilde{v}}\right\rVert_1\leq\left(1-\theta\right)m}}&\mathbb{P}_{\left(\bm{B},\bm{v}\right)\sim\mathbb{P}_{\mathrm{G}}}\left[\bm{B}\bm{z}\oplus\bm{v}=\bm{\tilde{v}}\right]\\
        &\times\mathbb{P}_{\bm{B}\sim\mathbb{P}_{\mathrm{con}}}\left[\left\lVert\bm{B}\left(\bm{z}\oplus\bm{z'}\right)\oplus\bm{\tilde{v}}\right\rVert_1\leq\left(1-\theta\right)m\right]\\
        =2^{-m}\sum_{\bm{z}\in\mathbb{F}_2^n}\sum_{\bm{z'}\in\mathbb{F}_2^n}\sum_{\substack{\bm{\tilde{v}}\in\mathbb{F}_2^m\\\text{s.t. }\left\lVert\bm{\tilde{v}}\right\rVert_1\leq\left(1-\theta\right)m}}&\mathbb{P}_{\bm{B}\sim\mathbb{P}_{\mathrm{con}}}\left[\left\lVert\bm{B}\left(\bm{z}\oplus\bm{z'}\right)\oplus\bm{\tilde{v}}\right\rVert_1\leq \left(1-\theta\right)m\right].
    \end{aligned}
\end{equation}
As the summand depends only on $\bm{z}\oplus\bm{z'}$, we can write this more simply as:
\begin{equation}
    \mathbb{E}_{\left(\bm{B},\bm{v}\right)\sim\mathbb{P}_{\mathrm{G}}}\left[N_\theta^2\right]\leq 2^{-\left(\lambda-1\right)n}\sum_{\bm{y}\in\mathbb{F}_2^n}\sum_{\substack{\bm{\tilde{v}}\in\mathbb{F}_2^m\\\text{s.t. }\left\lVert\bm{\tilde{v}}\right\rVert_1\leq\left(1-\theta\right)m}}\mathbb{P}_{\bm{B}\sim\mathbb{P}_{\mathrm{con}}}\left[\left\lVert\bm{B}\bm{y}\oplus\bm{\tilde{v}}\right\rVert_1\leq \left(1-\theta\right)m\right].
\end{equation}
Finally, we can more conveniently write the sum as an expectation over $\bm{\tilde{v}}$ being drawn from a Bernoulli distribution conditioned on $tm$ of the entries being nonzero, i.e.,
\begin{equation}\label{eq:second_moment_simple}
    \begin{aligned}
        \mathbb{E}_{\left(\bm{B},\bm{v}\right)\sim\mathbb{P}_{\mathrm{G}}}\left[N_\theta^2\right]&\leq 2^{-\left(\lambda-1\right)n+\operatorname{H}_2\left(\theta\right)\lambda n+\operatorname{O}\left(\log\left(n\right)\right)}\sup_{\substack{t\in\left[0,\theta\right]\\\text{s.t. }tm\in\mathbb{Z}}}\sum_{\bm{y}\in\mathbb{F}_2^n}\mathbb{P}_{\left(\bm{B},\bm{\tilde{v}}\right)\sim\mathbb{P}_{\mathrm{con}}\otimes\mathcal{U}_{m,\left(1-t\right)m}}\left[\left\lVert\bm{B}\bm{y}\oplus\bm{\tilde{v}}\right\rVert_1\leq \left(1-\theta\right)m\right]\\
        &=:2^{-\left(\lambda-1\right)n+\operatorname{H}_2\left(\theta\right)\lambda n+\operatorname{O}\left(\log\left(n\right)\right)}\sup_{\substack{t\in\left[0,\theta\right]\\\text{s.t. }tm\in\mathbb{Z}}}M\left(t;\theta\right),
    \end{aligned}
\end{equation}
where $\mathcal{U}_{m,w}$ is the uniform distribution over $\mathbb{F}_2^m$ conditioned on the bit string having Hamming weight $w$.

We now specialize to $\bm{B}^\intercal$ being drawn from the Gallager ensemble. By the definition of the Gallager ensemble, we can write $\bm{B}\bm{y}$ in terms of independent components:
\begin{equation}
    \begin{aligned}
        M\left(t;\theta\right)&=\sum_{\bm{Y}\in\mathbb{F}_2^{k\times\frac{n}{k}}}\mathbb{P}_{\left(\bm{B},\bm{\tilde{v}}\right)\sim\mathbb{P}_{\mathrm{con}}\otimes\mathcal{U}_{m,\left(1-t\right)m}}\left[\left\lVert\bm{\tilde{v}}\oplus\left(\bm{B}\cdot\bigoplus_{i=1}^k\bm{Y}_i^\intercal\right)\right\rVert_1\leq\left(1-\theta\right)m\right]\\
        &=2^n\mathbb{E}_{\bm{Y}\sim\mathcal{U}^{k\otimes\frac{n}{k}}}\left[\mathbb{P}_{\left(\bm{B},\bm{\tilde{v}},\left(\bm{y}_i\right)_i\right)\sim\mathbb{P}_{\mathrm{con}}\otimes\mathcal{U}_{m,\left(1-t\right)m}}\left[\left\lVert\bm{\tilde{v}}\oplus\left(\bm{B}\cdot\bigoplus_{i=1}^k\bm{Y}_i^\intercal\right)\right\rVert_1\leq\left(1-\theta\right)m\right]\right].
    \end{aligned}
\end{equation}
Here, $\bigoplus$ denotes summation modulo $2$, the $\bm{Y}\sim\mathcal{U}^{k\otimes\frac{n}{k}}$ is drawn uniformly at random from $\mathbb{F}_2^{k\times n/k}$, and we use $\bm{Y}_i$ to denote the $i$th row of $\bm{Y}$. By the definition of the Gallager ensemble, each $\bm{B}\bm{\varGamma}_i\cdot\bigoplus_{i=1}^k\bm{Y}_i^\intercal$ is manifestly independent, where $\bm{\varGamma}_i$ are projectors onto $n/k$-dimensional subspaces as in Eq.~\eqref{eq:gamma_proj_def}; if $\bm{Y}_i$ has Hamming weight $\varDelta_i$, $\bm{B}\bm{\varGamma}_i\cdot\bigoplus_{i=1}^k\bm{Y}_i^\intercal$ is distributed uniformly randomly over vectors in $\mathbb{F}_2^m$ with Hamming weight $\lambda k\varDelta_i$. Motivated by this, defining the product distribution:
\begin{equation}\label{eq:tilde_u_set_def}
    \mathcal{\tilde{U}}_{\bm{\varDelta},t}:=\left(\bigotimes_{i=1}^k\mathcal{U}_{m,\lambda k\varDelta_i}\right)\otimes\mathcal{U}_{m,\left(1-t\right)m},
\end{equation}
we can rewrite $M\left(t;\theta\right)$ succinctly as:
\begin{equation}\label{eq:gallager_block_decomp}
    M\left(t;\theta\right)=2^n\mathbb{E}_{\bm{\varDelta}\sim\operatorname{Bin}\left(\frac{n}{k},\frac{1}{2}\right)^{\otimes k}}\left[\mathbb{P}_{\bm{u}\sim\mathcal{\tilde{U}}_{\bm{\varDelta},t}}\left[\left\lVert\bigoplus_{i=1}^{k+1}\bm{u}_i\right\rVert_1\leq\left(1-\theta\right)m\right]\right].
\end{equation}
We now fix $0<\epsilon<1$, which we will set later. We now define the following set of $\bm{\varDelta}$ excluding exceptionally small $\varDelta_i$:
\begin{equation}\label{eq:g_set_def}
    \mathcal{G}_\epsilon:=\left\{\bm{0}\preceq\bm{\varDelta}\preceq\frac{n}{k}\bm{1}:\frac{1}{k}\sum_{i=1}^k\min\left(\frac{k\varDelta_i}{n},1-\frac{k\varDelta_i}{n}\right)\geq\frac{\epsilon}{2}\right\}.
\end{equation}
We have for any $0<\epsilon<1$:
\begin{equation}\label{eq:m_bound}
    M\left(t;\theta\right)\leq 2^n\sup_{\bm{\varDelta}\in\mathcal{G}_\epsilon}\mathbb{P}_{\bm{u}\sim\mathcal{\tilde{U}}_{\bm{\varDelta},t}}\left[\left\lVert\bigoplus_{i=1}^{k+1}\bm{u}_i\right\rVert_1\leq\left(1-\theta\right)m\right]+2^n\mathbb{P}_{\bm{\varDelta}\sim\operatorname{Bin}\left(\frac{n}{k},\frac{1}{2}\right)^{\otimes k}}\left[\bm{\varDelta}\not\in\mathcal{G}_\epsilon\right].
\end{equation}
We can simply bound the second term using the density of the binomial distribution. First, we define an equivalency relation $\sim$ on $\bm{\varDelta}\in\mathcal{G}_\epsilon^\complement$:
\begin{equation}
    \bm{\varDelta}\sim\bm{\varDelta'}\iff\bigwedge_{i=1}^k\left\{\min\left(\frac{k\varDelta_i}{n},1-\frac{k\varDelta_i}{n}\right)=\min\left(\frac{k\varDelta_i'}{n},1-\frac{k\varDelta_i'}{n}\right)\right\}.
\end{equation}
Note the size of each equivalency class is at most $2^k$, and each has unique representative with each $\frac{k\varDelta_i}{n}\leq\frac{1}{2}$. We therefore have that:
\begin{equation}
    \left\lvert\mathcal{G}_\epsilon^\complement\right\rvert\leq 2^k\left\lvert\mathcal{G}_\epsilon^\complement/\sim\right\rvert,
\end{equation}
where $\mathcal{G}_\epsilon/\sim$ is the quotient set:
\begin{equation}
    \mathcal{G}_\epsilon^\complement/\sim=\left\{\bm{0}\preceq\bm{\varDelta}\preceq\frac{n}{2k}\bm{1}:\frac{1}{k}\sum_{i=1}^k\frac{k\varDelta_i}{n}<\frac{\epsilon}{2}\right\}.
\end{equation}
As $\sum_{i=1}^k\varDelta_i$ is binomial distributed with $n$ trials, we have:
\begin{equation}
    \begin{aligned}
        \mathbb{P}_{\bm{\varDelta}\sim\operatorname{Bin}\left(\frac{n}{k},\frac{1}{2}\right)^{\otimes\frac{k}{\lambda}}}\left[\bm{\varDelta}\not\in\mathcal{G}_\epsilon\right]&\leq 2^k\mathbb{P}_{\varDelta\sim\operatorname{Bin}\left(n,\frac{1}{2}\right)}\left[\varDelta<\frac{\epsilon}{2}n\right]\\
        &=2^{-n+\operatorname{H}_2\left(\frac{\epsilon}{2}\right)n+\operatorname{O}\left(\log\left(n\right)\right)}.
    \end{aligned}
\end{equation}

Before continuing with the remaining probability term in Eq.~\eqref{eq:m_bound}, we claim the following property of $\mathcal{G}_\epsilon$ which we will find useful later.
\begin{lemma}[Existence of $\mathcal{I}$]\label{lem:good_set}
    For every
    \begin{equation}
        \bm{\varDelta}\in\mathcal{G}_\epsilon:=\left\{\bm{0}\preceq\bm{\varDelta}\preceq\frac{n}{k}\bm{1}:\frac{1}{k}\sum_{i=1}^k\min\left(\frac{k\varDelta_i}{n},1-\frac{k\varDelta_i}{n}\right)\geq\frac{\epsilon}{2}\right\}
    \end{equation}
    and $c\in\left(0,1\right)$, there exists an index set $\mathcal{I}\subseteq\left[n\right]$ of cardinality $\left\lfloor ck\right\rfloor$ such that:
    \begin{equation}\label{eq:i_index_set_def}
        \forall i\in\mathcal{I},\,\bigwedge_{i\in\mathcal{I}}\left\lvert 1-2\min\left(\frac{k\varDelta_i}{n},1-\frac{k\varDelta_i}{n}\right)\right\rvert\leq\frac{1-\epsilon}{1-c}.
    \end{equation}
\end{lemma}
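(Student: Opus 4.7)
The plan is to reduce the claim to a simple averaging argument once the notation is cleaned up. Set $q_i := \min\!\left(k\varDelta_i/n,\, 1 - k\varDelta_i/n\right) \in [0,1/2]$ for each $i\in[k]$. Then the hypothesis $\bm{\varDelta}\in\mathcal{G}_\epsilon$ reads $\sum_{i=1}^{k} q_i \geq \epsilon k/2$, and since $1 - 2q_i \in [0,1]$ the desired inequality in Eq.~\eqref{eq:i_index_set_def} is equivalent to the pointwise lower bound $q_i \geq \tau$, where
\begin{equation}
\tau := \tfrac{1}{2}\left(1 - \tfrac{1-\epsilon}{1-c}\right) = \tfrac{\epsilon - c}{2(1-c)}.
\end{equation}
If $c \geq \epsilon$, then $\tau \leq 0$ and the desired condition is automatic for every coordinate, so any subset of size $\lfloor ck\rfloor$ works; I would dispose of this trivial case first, and henceforth assume $c < \epsilon$.

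In the nontrivial case, take $\mathcal{I}$ to be the $\lfloor ck\rfloor$ coordinates with the largest values of $q_i$. Proving that this choice satisfies the conclusion amounts to showing that the ``bad set'' $B := \{i\in[k] : q_i < \tau\}$ has cardinality at most $k - \lfloor ck\rfloor$. I would prove this by contradiction: suppose instead $|B| \geq k - \lfloor ck\rfloor + 1$, which in particular yields the strict inequality $|B| > (1-c)k$.

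Under this assumption, bound the hypothesis directly using $q_i < \tau$ on $B$ and the trivial estimate $q_i \leq 1/2$ elsewhere:
\begin{equation}
\sum_{i=1}^{k} q_i \;<\; |B|\,\tau + (k-|B|)\tfrac{1}{2} \;=\; \tfrac{k}{2} - |B|\!\left(\tfrac{1}{2} - \tau\right).
\end{equation}
The key algebraic identity $\tfrac{1}{2} - \tau = \tfrac{1-\epsilon}{2(1-c)}$ then gives $|B|\!\left(\tfrac{1}{2}-\tau\right) > \tfrac{(1-\epsilon)k}{2}$, hence $\sum_i q_i < \epsilon k/2$, contradicting $\bm{\varDelta}\in\mathcal{G}_\epsilon$. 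There is no substantive obstacle here---the argument is essentially Markov/pigeonhole---and the only item requiring attention is the strict/non-strict bookkeeping: the strict bound $|B| > (1-c)k$ is supplied by the floor in $\lfloor ck\rfloor$, and combined with $\tfrac{1}{2} - \tau > 0$ this upgrades the final estimate to a strict inequality, cleanly contradicting the non-strict hypothesis $\sum_i q_i \geq \epsilon k/2$.
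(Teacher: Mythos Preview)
Your proof is correct and follows essentially the same approach as the paper: both select $\mathcal{I}$ to be the $\lfloor ck\rfloor$ indices with the smallest values of $x_i := 1-2q_i$ (equivalently, largest $q_i$) and then use the averaging constraint $\frac{1}{k}\sum_i x_i \leq 1-\epsilon$ to bound the worst index in $\mathcal{I}$. The paper argues directly---bounding $x_{i^\ast}:=\max_{i\in\mathcal{I}}x_i$ via $x_{i^\ast}\leq (1-\epsilon)k - \sum_{i\notin\mathcal{I}}x_i \leq (1-\epsilon)k - (1-c)k\,x_{i^\ast}$---whereas you phrase the same averaging as a contradiction on the size of the bad set; your explicit case split on $c\geq\epsilon$ is a nice touch that makes the vacuous regime transparent, but the underlying pigeonhole is identical.
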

\begin{proof}
    Note we can rewrite $\mathcal{G}_\epsilon$ as:
    \begin{equation}
        \mathcal{G}_\epsilon=\left\{\bm{0}\preceq\bm{\varDelta}\preceq\frac{n}{k}\bm{1}:\frac{1}{k}\sum_{i=1}^k\left\lvert 1-\frac{2k\varDelta_i}{n}\right\rvert\leq 1-\epsilon\right\}.
    \end{equation}
    For a given $\bm{\varDelta}\in\mathcal{G}_\epsilon$, let $\mathcal{I}$ be the index set associated with the smallest $\left\lfloor ck\right\rfloor$ of the
    \begin{equation}
        x_i:=\left\lvert 1-\frac{2k\varDelta_i}{n}\right\rvert.
    \end{equation}
    Let $x_{i^\ast}$ be the largest $x_i$ of the $i\in\mathcal{I}$. We have:
    \begin{equation}
        \begin{aligned}
            \frac{1}{k}\sum_{i=1}^k x_i&\leq 1-\epsilon\\
            \implies x_{i^\ast}&\leq\left(1-\epsilon\right)k-\sum_{i\neq i^\ast\in\mathcal{I}}x_i-\sum_{i\not\in\mathcal{I}}x_i\\
            &\leq\left(1-\epsilon\right)k-\left\lvert\mathcal{I}^\complement\right\rvert x_{i^\ast}\\
            &\leq\left(1-\epsilon\right)k-\left(1-c\right)kx_{i^\ast}\\
            \implies x_{i^\ast}&\leq\frac{k}{\left(1-c\right)k+1}\left(1-\epsilon\right)\\
            &\leq\frac{1-\epsilon}{1-c}.
        \end{aligned}
    \end{equation}
\end{proof}

We now consider the remaining probability term in Eq.~\eqref{eq:m_bound}. We can equivalently write this term as a probability of the sum (over $\mathbb{F}_2$) of independent vectors being zero (up to a multiplicative factor). To see this, we sample another independent vector of i.i.d.\ Bernoulli random variables $\bm{u}_{k+2}$ conditioned on having Hamming weight at most $\left(1-t'\right)m$ for some $t'\in\left[0,\theta\right]$. Writing:
\begin{equation}
    \mathcal{U}_{\bm{\varDelta},t,t'}:=\mathcal{\tilde{U}}_{\bm{\varDelta},t}\otimes\mathcal{U}_{m,\left(1-t'\right)m},
\end{equation}
we have:
\begin{equation}\label{eq:prob_bound}
    \sup_{\bm{\varDelta}\in\mathcal{G}_\epsilon}\mathbb{P}_{\bm{u}\sim\mathcal{\tilde{U}}_{\bm{\varDelta},t}}\left[\left\lVert\bigoplus_{i=1}^{k+1}\bm{u}_i\right\rVert_1\leq\left(1-\theta\right)m\right]\leq 2^{\operatorname{H}_2\left(\theta\right)\lambda n}\sup_{\bm{\varDelta}\in\mathcal{G}_\epsilon}\sup_{\substack{t'\in\left[0,\theta\right]\\\text{s.t. }tm\in\mathbb{N}}}\mathbb{P}_{\bm{u}\sim\mathcal{U}_{\bm{\varDelta},t,t'}}\left[\bigoplus_{i=1}^{k+2}\bm{u}_i=\bm{0}\right].
\end{equation}
We can now use techniques from \cite{calkin1997} to bound this probability. In what follows, we use the shorthand:
\begin{equation}
    w_i=\left\{\begin{array}{ll}
        \lambda k\varDelta_i, & \text{if }i\leq k;\\
        \left(1-t\right)m, & \text{if }i=k+1;\\
        \left(1-t'\right)m, & \text{if }i=k+2
    \end{array}\right.
\end{equation}
for the Hamming weight of $\bm{u}_i$. Using \cite[Lemma~2.2]{calkin1997}, the probability can be exactly written as:
\begin{equation}
    \mathbb{P}_{\bm{u}\sim\mathcal{U}_{\bm{\varDelta},t,t'}}\left[\bigoplus_{i=1}^{k+2}\bm{u}_i=\bm{0}\right]=2^{-m}\sum_{x=0}^m\binom{m}{x}\prod_{i=1}^{k+2}\lambda_x^{\left(w_i\right)},
\end{equation}
where
\begin{equation}
    \lambda_x^{\left(w_i\right)}=\sum_{\beta=0}^{w_i}\left(-1\right)^\beta\frac{\binom{x}{\beta}\binom{m-x}{w_i-\beta}}{\binom{m}{w_i}}=\binom{m}{w_i}^{-1}K_{w_i}\left(x\right).
\end{equation}
Here, $K_w$ denotes the $w$th binary Kravchuk polynomial. We now fix any $c\in\left(0,1\right)$ (such that $ck$ is an integer) and $\mathcal{I}$ as in Lemma~\ref{lem:good_set}. We can bound via H\"{o}lder's inequality and the fact that all $\left\lvert\lambda_x^{\left(w_i\right)}\right\rvert\leq 1$:
\begin{equation}\label{eq:prob_eq_zero_ind}
    \begin{aligned}
        \mathbb{P}_{\bm{u}\sim\mathcal{U}_{\bm{\varDelta},t,t'}}\left[\bigoplus_{i=1}^{k+2}\bm{u}_i=\bm{0}\right]&=\frac{1}{\prod_{i=1}^{k+2}\binom{m}{w_i}}\mathbb{E}_{x\sim\operatorname{Bin}\left(m,\frac{1}{2}\right)}\left[\prod_{i=1}^{k+2}K_{w_i}\left(x\right)\right]\\
        &\leq\frac{1}{\prod_{i=1}^k\binom{m}{w_i}}\mathbb{E}_{x\sim\operatorname{Bin}\left(m,\frac{1}{2}\right)}\left[\left\lvert\prod_{i=1}^k K_{w_i}\left(x\right)\right\rvert\right]\\
        &\leq\frac{1}{\prod_{i\in\mathcal{I}}\binom{m}{w_i}}\prod_{i\in\mathcal{I}}\left(\mathbb{E}_{x\sim\operatorname{Bin}\left(m,\frac{1}{2}\right)}\left[\left\lvert K_{w_i}\left(x\right)\right\rvert^{ck}\right]\right)^{\frac{1}{ck}}.
    \end{aligned}
\end{equation}
Fortunately, the moments of the $\left\lvert K_{w_i}\left(x\right)\right\rvert$ over $x\sim\operatorname{Bin}\left(m,\frac{1}{2}\right)$ are known~\cite{9398654}. We have the bound~\cite[Corollary~4]{9398654}:
\begin{equation}\label{eq:kravchuk_bound}
    \prod_{i\in\mathcal{I}}\left(\mathbb{E}_{x\sim\operatorname{Bin}\left(m,\frac{1}{2}\right)}\left[\left\lvert K_{w_i}\left(x\right)\right\rvert^{ck}\right]\right)^{\frac{1}{ck}}\leq\sqrt{\prod_{i\in\mathcal{I}}\binom{m}{w_i}}\exp_2\left(\frac{m}{ck}\sum_{i=1}^{ck}\psi\left(ck,\min\left(1-\frac{w_i}{m},\frac{w_i}{m}\right)\right)\right),
\end{equation}
where:
\begin{equation}\label{eq:psi_def_mt}
    \psi\left(p,x\right):=p-1+\log_2\left(\left(1-\delta\right)^p+\delta^p\right)-\frac{p}{2}\operatorname{H}_2\left(x\right)-px\log_2\left(1-2\delta\right),
\end{equation}
with $\delta$ implicitly defined via:
\begin{equation}\label{eq:delta_def}
    x=\left(\frac{1}{2}-\delta\right)\frac{\left(1-\delta\right)^{p-1}-\delta^{p-1}}{\left(1-\delta\right)^p+\delta^p}.
\end{equation}
In Appendix~\ref{sec:bound_on_psi}, Proposition~\ref{prop:kravchuk_psi_bound}, we prove the following general upper bound on $\psi$:
\begin{equation}
    \psi\left(p,x\right)\leq\frac{4}{\ln\left(2\right)}\left(1-2x\right)^{\frac{p-1}{2}}+\frac{p}{2}\operatorname{H}_2\left(x\right)-1.
\end{equation}
This gives:
\begin{equation}
    \begin{aligned}
        \frac{1}{ck}\sum_{i\in\mathcal{I}}\psi\left(k,\min\left(1-\frac{w_i}{m},\frac{w_i}{m}\right)\right)&\leq -1+\frac{1}{2}\sum_{i\in\mathcal{I}}\operatorname{H}_2\left(\frac{w_i}{m}\right)+\frac{4}{\ln\left(2\right)ck}\sum_{i\in\mathcal{I}}\left(1-2\min\left(\frac{w_i}{m},1-\frac{w_i}{m}\right)\right)^{\frac{ck-1}{2}}\\
        &\leq\frac{4}{\ln\left(2\right)}\left(\frac{1-\epsilon}{1-c}\right)^{\frac{ck-1}{2}}-1+\frac{1}{2}\sum_{i\in\mathcal{I}}\operatorname{H}_2\left(\frac{w_i}{m}\right),
    \end{aligned}
\end{equation}
where the final line follows from the definition of $\mathcal{I}$. Using Stirling's approximation then gives:
\begin{equation}\label{eq:mod_u_prob_bound}
    \begin{aligned}
        \mathbb{P}_{\bm{u}\sim\mathcal{U}_{\bm{\varDelta},t,t'}}\left[\bigoplus_{i=1}^{k+2}\bm{u}_i=\bm{0}\right]&\leq\exp_2\left(-\left(1-\frac{4}{\ln\left(2\right)}\left(\frac{1-\epsilon}{1-c}\right)^{\frac{ck-1}{2}}\right)m+\frac{1}{2}\sum_{i\in\mathcal{I}}\operatorname{H}_2\left(\frac{w_i}{m}\right)-\frac{1}{2}\sum_{i\in\mathcal{I}}\operatorname{H}_2\left(\frac{w_i}{m}\right)+\operatorname{O}\left(\log\left(n\right)\right)\right)\\
        &=\exp_2\left(-\left(1-\frac{4}{\ln\left(2\right)}\left(\frac{1-\epsilon}{1-c}\right)^{\frac{ck-1}{2}}\right)m+\operatorname{O}\left(\log\left(n\right)\right)\right).
    \end{aligned}
\end{equation}
Taking $c=\epsilon/2$ then gives:\footnote{Assuming $\epsilon k$ is an even integer, such that the condition that $ck$ is an integer is satisfied; we will neglect repeating this for the rest of the proof for brevity.}
\begin{equation}
    \mathbb{P}_{\bm{u}\sim\mathcal{U}_{\bm{\varDelta},t,t'}}\left[\bigoplus_{i=1}^{k+2}\bm{u}_i=\bm{0}\right]\leq\exp_2\left(-\left(1-\ce^{-\operatorname{\Omega}_k\left(k\right)}\right)m+\operatorname{O}\left(\log\left(n\right)\right)\right)
\end{equation}
for any fixed $\epsilon$.

Putting everything together, for any choice of $0<\epsilon<1$,
\begin{equation}
    M\left(t;\theta\right)\leq 2^{-\left(\left(1-\ce^{-\operatorname{\Omega}_k\left(k\right)}\right)\lambda-1\right)n+\operatorname{H}_2\left(\theta\right)\lambda n+\operatorname{O}\left(\log\left(n\right)\right)}+2^{\operatorname{H}_2\left(\frac{\epsilon}{2}\right)n+\operatorname{O}\left(\log\left(n\right)\right)},
\end{equation}
and therefore:
\begin{equation}
    \mathbb{E}_{\left(\bm{B},\bm{v}\right)\sim\mathbb{P}_{\mathrm{G}}}\left[N_\theta^2\right]\leq 2^{-\left(\lambda-1\right)n+\operatorname{H}_2\left(\theta\right)\lambda n+\operatorname{O}\left(\log\left(n\right)\right)}\left(2^{-\left(\left(1-\ce^{-\operatorname{\Omega}_k\left(k\right)}\right)\lambda-1\right)n+\operatorname{H}_2\left(\theta\right)\lambda n+\operatorname{O}\left(\log\left(n\right)\right)}+2^{\operatorname{H}_2\left(\frac{\epsilon}{2}\right)n+\operatorname{O}\left(\log\left(n\right)\right)}\right).
\end{equation}

We now return to the Paley--Zygmund inequality. Reusing the first moment calculation from Eq.~\eqref{eq:first_moment}, we have as $n\to\infty$:
\begin{equation}
    \begin{aligned}
        \mathbb{P}_{\left(\bm{B},\bm{v}\right)\sim\mathbb{P}_{\mathrm{G}}}\left[N_\theta\geq 1\right]&\geq\frac{\mathbb{E}_{\left(\bm{B},\bm{v}\right)\sim\mathbb{P}_{\mathrm{G}}}\left[N_\theta\right]^2}{\mathbb{E}_{\left(\bm{B},\bm{v}\right)\sim\mathbb{P}_{\mathrm{G}}}\left[N_\theta^2\right]}\\
        &\geq\frac{2^{-\left(\lambda-1\right)n+\operatorname{H}_2\left(\theta\right)\lambda n+\operatorname{O}\left(\log\left(n\right)\right)}}{2^{-\left(\left(1-\ce^{-\operatorname{\Omega}_k\left(k\right)}\right)\lambda-1\right)n+\operatorname{H}_2\left(\theta\right)\lambda n+\operatorname{O}\left(\log\left(n\right)\right)}+2^{\operatorname{H}_2\left(\frac{\epsilon}{2}\right)n+\operatorname{O}\left(\log\left(n\right)\right)}}.
    \end{aligned}
\end{equation}
We now set $\theta=\theta^\ast:=1-\operatorname{H}_2^{-1}\left(1-\lambda^{-1}\right)$. For any constant $0<\epsilon<1$, for sufficiently large $k$ (logarithmic in $1/\epsilon$), this gives:
\begin{equation}\label{eq:prob_n_nonempty_lb}
    \begin{aligned}
        \mathbb{P}_{\left(\bm{B},\bm{v}\right)\sim\mathbb{P}_{\mathrm{G}}}\left[N_\theta\geq 1\right]\geq 2^{-\operatorname{H}_2\left(\frac{\epsilon}{2}\right)n+\operatorname{O}\left(\log\left(n\right)\right)}.
    \end{aligned}
\end{equation}

We now boost this lower bound using a trick by Frieze~\cite{frieze1990171}. Define:
\begin{equation}
    \mu\left(\bm{B},\bm{v}\right):=\max_{\bm{z}\in\mathbb{F}_2^n}\left(m-\left\lVert\bm{B}\bm{z}\oplus\bm{v}\right\rVert_1\right)
\end{equation}
for simplicity in what follows. Fix any $0<\delta<1$. First, we claim that there exists some $k_\delta$ such that, for sufficiently large $n$ and any $k\geq k_\delta$, for any $\bm{B}$,
\begin{equation}\label{eq:exp_lb}
    \mathbb{E}_{\bm{v}\sim\mathbb{P}_{\mathrm{par}}}\left[\mu\left(\bm{B},\bm{v}\right)\right]\geq\left(1-\frac{\delta}{2}\right)\theta^\ast\lambda n.
\end{equation}
To see this, assume (looking toward contradiction) that it is not. We then have by the concentration of the maximal function value (Proposition~\ref{prop:concentration}) that there exists some universal constant $C>0$ such that:
\begin{equation}
    \begin{aligned}
        \mathbb{P}_{\left(\bm{B},\bm{v}\right)\sim\mathbb{P}_{\mathrm{G}}}\left[\mu\left(\bm{B},\bm{v}\right)\geq\theta^\ast\lambda n\right]&\leq\mathbb{P}_{\left(\bm{B},\bm{v}\right)\sim\mathbb{P}_{\mathrm{G}}}\left[\mu\left(\bm{B},\bm{v}\right)-\mathbb{E}_{\bm{v}\sim\mathbb{P}_{\mathrm{par}}}\left[\mu\left(\bm{B},\bm{v}\right)\right]\geq\frac{\delta}{2}\theta^\ast\lambda n\right]\\
        &\leq 2\exp\left(-4C\delta^2\theta^{\ast 2}\lambda n\right)\\
        &\leq 2\exp\left(-C\delta^2\lambda n\right).
    \end{aligned}
\end{equation}
However, we showed in Eq.~\eqref{eq:prob_n_nonempty_lb} that there exists some $k_\delta$ depending only on $C$ and $\delta$ (and logarithmic in $1/\delta)$ such that, for all $k\geq k_\delta$,
\begin{equation}
    \mathbb{P}_{\left(\bm{B},\bm{v}\right)\sim\mathbb{P}_{\mathrm{G}}}\left[\mu\left(\bm{B},\bm{v}\right)\geq\theta^\ast\lambda n\right]\geq\exp\left(-0.1C\delta^2 n\right)\geq\exp\left(-0.1C\delta^2\lambda n\right);
\end{equation}
this follows by taking $\epsilon$ sufficiently small in Eq.~\eqref{eq:prob_n_nonempty_lb}. In particular, for sufficiently large $n$, we have a contradiction. Therefore, Eq.~\eqref{eq:exp_lb} holds true for any choice of $0<\delta<1$ and $\bm{B}$ as $n\to\infty$ for sufficiently large $k$. Once again using the concentration of the maximal function value (Proposition~\ref{prop:concentration}), we therefore have that for any $0<\delta<1$ there exists a $k_\delta=\operatorname{O}_{\delta^{-1}}\left(\log\left(\delta^{-1}\right)\right)$ such that for any $k\geq k_\delta$ and sufficiently large $n$:
\begin{equation}
    \mathbb{P}_{\left(\bm{B},\bm{v}\right)\sim\mathbb{P}_{\mathrm{G}}}\left[\mu\left(\bm{B},\bm{v}\right)\geq\left(1-\delta\right)\theta^\ast\lambda n\right]\leq\exp\left(-\operatorname{\Omega}\left(n\right)\right).
\end{equation}
This taken in combination with Eq.~\eqref{eq:upper_bound} proves Theorem~\ref{thm:max_value}.

\section{Proof of Theorems~\ref{thm:chaos_prop} and a \texorpdfstring{$2$}{2}-OGP}\label{sec:proof_of_chaos_prop}

In this Appendix we prove that transposed Gallager \texorpdfstring{\textsc{MAX-$k$-XOR-SAT}}{MAX-k-XOR-SAT} exhibits both the chaos property (Definition~\ref{def:chaos_prop}) and the $R$-OGP (Definition~\ref{def:multi_ogp}), computing the threshold at all $R$ for the former and the threshold for the latter in the special case $R=2$. We conjecture that the true $R$-OGP threshold (i.e., the threshold when $R$ is chosen optimally) is much smaller; this is motivated by the fact that both of the stable algorithms beyond DQI considered in Sec.~\ref{sec:numerics} achieve only a satisfied fraction $1/2$ in the $k\to\infty$ limit, whereas what we compute for the $2$-OGP does not satisfy this property. We leave the computation of the optimal $R$-OGP threshold for future work.

The former result is presented as Theorem~\ref{thm:chaos_prop} of the main text, which we restate here for convenience.
\tgxorsatchaosprop*
Theorem~\ref{thm:chaos_prop} will be proven in the course of proving the existence of an $R$-OGP for the model.
\begin{theorem}[Transposed Gallager \textsc{MAX-$k$-XOR-SAT} exhibits a $2$-OGP]\label{thm:e_ogp}
    Fix any $0\leq\nu_1<\nu_2<1/2$ and $0\leq\sigma\leq 1$. For any $\kappa\in\left[0,1\right]$, fix $\mathbb{P}_2^{\left(\kappa\right)}$ to be a $\kappa$-correlated ensemble (Definition~\ref{def:kappa_corr_ens}). For any function $S:\mathbb{F}_2^{m\times n}\to\binom{\left[n\right]}{\left\lfloor\sigma n\right\rfloor}$ and
    \begin{equation}\label{eq:mu_bound_ogp}
        \mu>\mu^\ast:=1-\operatorname{H}_2^{-1}\left(1-\frac{1}{2\lambda}-\frac{1}{2\lambda}\left(\operatorname{H}_2\left(\nu_2\right)+\sigma\right)-\frac{2\ce}{\ln\left(2\right)}\exp\left(-\nu_1 k\right)\right),
    \end{equation}
    it is the case that:
    \begin{equation}\label{eq:ogp_prob_bound}
        \mathbb{P}_{\left(\bm{B},\bm{v}^{\left(r\right)}\right)_{r=1}^2\sim\mathbb{P}_2^{\left(\kappa\right)}}\left[\mathcal{S}_{\left(\bm{B},\bm{v}^{\left(r\right)}\right)_{r=1}^2}^{\left(\mu,\nu_1,\nu_2,S_{\bm{B}}\right)}\neq\varnothing\right]\leq\exp\left(-\operatorname{\Omega}\left(n\right)\right),
    \end{equation}
    where $\mathcal{S}_{\left(\bm{B},\bm{v}^{\left(r\right)}\right)_{r=1}^2}^{\left(\mu,\nu_1,\nu_2,S_{\bm{B}}\right)}$ is the random set of tuples $\left(\bm{z}^{\left(r\right)}\right)_{r=1}^2\in\left\{-1,1\right\}^{2\times n}$ satisfying:
    \begin{enumerate}
        \item \textbf{$\mu$-SAT fraction:} For all $r\in\left[2\right]$,
        \begin{equation}
            g_{\bm{X}^{\left(r\right)}}\left(\bm{z}^{\left(r\right)}\right)\geq\mu\lambda n.
        \end{equation}
        \item \textbf{$\left(k,S\right)$-minimum Hamming semimetric bound:} Recalling the $\left(k,S\right)$-minimum Hamming semimetric (Definition~\ref{def:k_min_ham_semi}),
        \begin{equation}\label{eq:dist_const}
            d_{k,S_{\bm{B}}}\left(\bm{z}^{\left(1\right)},\bm{z}^{\left(2\right)}\right)\in\left[\nu_1 n,\nu_2 n\right].
        \end{equation}
    \end{enumerate}
    Finally, when the $\bm{v}^{\left(r\right)}$ are drawn independently ($\kappa=1$),
    \begin{equation}\label{eq:chaos_property}
        \mathbb{P}_{\left(\bm{B},\bm{v}^{\left(r\right)}\right)_{r=1}^2\sim\mathbb{P}_2^{\left(1\right)}}\left[\mathcal{S}_{\left(\bm{B},\bm{v}^{\left(r\right)}\right)_{r=1}^2}^{\left(\mu,0,\nu_2,S_{\bm{B}}\right)}\neq\varnothing\right]\leq\exp\left(-\operatorname{\Omega}\left(n\right)\right)
    \end{equation}
    with the same choice of $\mu,\nu_2,S_{\bm{B}}$.
\end{theorem}
\begin{proof}
    Consider the random set as defined in the theorem statement:
    \begin{equation}
        \mathcal{S}_{\left(\bm{B},\bm{v}^{\left(r\right)}\right)_{r=1}^2}^{\left(\mu,\nu_1,\nu_2,S_{\bm{B}}\right)}:=\left\{\left(\bm{z}^{\left(r\right)}\right)_{r=1}^2\in\mathbb{F}_2^{2\times n}:\bigwedge_{r=1}^2\left\lVert\bm{B}\bm{z}^{\left(r\right)}\oplus\bm{v}^{\left(r\right)}\right\rVert_1\leq\left(1-\mu\right)m\wedge\frac{1}{n}d_{k,S_{\bm{B}}}\left(\bm{z}^{\left(1\right)},\bm{z}^{\left(2\right)}\right)\in\left[\nu_1,\nu_2\right]\right\}.
    \end{equation}
    We define the counting variable that is the cardinality of this set:
    \begin{equation}
        N_{\mu,\nu_1,\nu_2,S_{\bm{B}}}:=\left\lvert\mathcal{S}_{\left(\bm{B},\bm{v}^{\left(r\right)}\right)_{r=1}^2}^{\left(\mu,\nu_1,\nu_2,S_{\bm{B}}\right)}\right\rvert.
    \end{equation}
    For simplicity of notation in what follows, we let $\mathcal{Z}\subset\mathbb{F}_2^{2\times n}$ denote the set of $\bm{Z}=\left(\bm{z}^{\left(r\right)}\right)_{r=1}^2$ such that they satisfy Eq.~\eqref{eq:dist_const}.
    
    We consider the two discussed cases separately: when the $\bm{v}^{\left(r\right)}$ are drawn independently and when they are not. We begin with the latter case. The former case will also prove Theorem~\ref{thm:chaos_prop}.
    
    \subsection{Dependent \texorpdfstring{$\bm{v}^{\left(r\right)}$}{v\^r}}
    
    We first define the set of low-weight vectors:
    \begin{equation}
        \mathcal{X}=\left\{\left(\bm{x}^{\left(r\right)}\right)_{r=1}^2\in\mathbb{F}_2^{2\times n}:\bigwedge_{r=1}^2\left\lVert\bm{x}^{\left(r\right)}\right\rVert_1\leq\left(1-\mu\right)m\right\}.
    \end{equation}
    We have by Markov's inequality and the law of total probability that:
    \begin{equation}\label{eq:sec_mom}
        \begin{aligned}
            \mathbb{P}\left[N_{\mu,\nu_1,\nu_2,S_{\bm{B}}}\geq 1\right]&\leq\sum_{\bm{Z}\in\mathcal{Z}}\mathbb{P}\left[\bigwedge_{r=1}^2\left\lVert\bm{B}\bm{z}^{\left(r\right)}\oplus\bm{v}^{\left(r\right)}\right\rVert_1\leq\left(1-\mu\right)m\right]\\
            &=\sum_{\bm{Z}\in\mathcal{Z}}\sum_{\bm{X}\in\mathcal{X}}\mathbb{P}\left[\bigwedge_{r=1}^2\bm{B}\bm{z}^{\left(r\right)}\oplus\bm{v}^{\left(r\right)}=\bm{x}^{\left(r\right)}\right]\\
            &=\sum_{\bm{Z}\in\mathcal{Z}}\sum_{\bm{X}\in\mathcal{X}}\sum_{\bm{y}\in\mathbb{F}_2^m}\mathbb{P}\left[\left.\bigwedge_{r=1}^2\bm{B}\bm{z}^{\left(r\right)}\oplus\bm{v}^{\left(r\right)}=\bm{x}^{\left(r\right)}\right|\bm{v}^{\left(1\right)}=\bm{y}\right]\mathbb{P}\left[\bm{v}^{\left(1\right)}=\bm{y}\right]\\
            &=2^{-m}\sum_{\bm{Z}\in\mathcal{Z}}\sum_{\bm{X}\in\mathcal{X}}\sum_{\bm{y}\in\mathbb{F}_2^m}\mathbb{P}\left[\left.\bigwedge_{r=1}^2\bm{B}\bm{z}^{\left(r\right)}\oplus\bm{v}^{\left(r\right)}=\bm{x}^{\left(r\right)}\right|\bm{v}^{\left(1\right)}=\bm{y}\right].
        \end{aligned}
    \end{equation}
    By the assumed dependence structure, there exists some projector $\bm{\varUpsilon}$ such that:
    \begin{equation}
        \bm{v}^{\left(2\right)}=\bm{v}^{\left(1\right)}\oplus\bm{\varUpsilon}\left(\bm{v}^{\left(1\right)}\oplus\bm{\tilde{v}}^{\left(2\right)}\right),
    \end{equation}
    where $\bm{\tilde{v}}^{\left(2\right)}$ is independent from $\bm{v}^{\left(1\right)}$. We therefore have:
    \begin{equation}
        \begin{aligned}
            \mathbb{P}\left[N_{\mu,\nu_1,\nu_2,S_{\bm{B}}}\geq 1\right]&\leq 2^{-m}\sum_{\bm{Z}\in\mathcal{Z}}\sum_{\bm{X}\in\mathcal{X}}\sum_{\bm{y}\in\mathbb{F}_2^m}\mathbb{P}\left[\left.\bigwedge_{r=1}^2\bm{B}\bm{z}^{\left(r\right)}\oplus\bm{v}^{\left(r\right)}=\bm{x}^{\left(r\right)}\right|\bm{v}^{\left(1\right)}=\bm{y}\right]\\
            &=2^{-m}\sum_{\bm{Z}\in\mathcal{Z}}\sum_{\bm{X}\in\mathcal{X}}\sum_{\bm{y}\in\mathbb{F}_2^m}\mathbb{P}\left[\left\{\bm{B}\bm{z}^{\left(1\right)}\oplus\bm{x}^{\left(1\right)}=\bm{y}\right\}\wedge\bm{B}\bm{z}^{\left(2\right)}\oplus\bm{y}\oplus\bm{\varUpsilon}\left(\bm{y}\oplus\bm{\tilde{v}}^{\left(2\right)}\right)=\bm{x}^{\left(2\right)}\right]\\
            &=2^{-m}\sum_{\bm{Z}\in\mathcal{Z}}\sum_{\bm{X}\in\mathcal{X}}\sum_{\bm{y}\in\mathbb{F}_2^m}\mathbb{P}\left[\left\{\bm{B}\bm{z}^{\left(1\right)}\oplus\bm{x}^{\left(1\right)}=\bm{y}\right\}\wedge\bm{B}\left(\bm{z}^{\left(1\right)}\oplus\bm{z}^{\left(2\right)}\right)\oplus\bm{\varUpsilon}\left(\bm{y}\oplus\bm{\tilde{v}}^{\left(2\right)}\right)=\bm{x}^{\left(1\right)}\oplus\bm{x}^{\left(2\right)}\right]\\
            &=2^{-m}\sum_{\bm{Z}\in\mathcal{Z}}\sum_{\bm{X}\in\mathcal{X}}\sum_{\bm{y}\in\mathbb{F}_2^m}\mathbb{P}\left[\left\{\bm{B}\bm{z}^{\left(1\right)}\oplus\bm{x}^{\left(1\right)}=\bm{y}\right\}\wedge\bm{B}\left(\bm{z}^{\left(1\right)}\oplus\bm{z}^{\left(2\right)}\right)\oplus\bm{\varUpsilon}\bm{\tilde{v}}^{\left(2\right)}=\bm{x}^{\left(1\right)}\oplus\bm{x}^{\left(2\right)}\right]\\
            &=2^{-m}\sum_{\bm{Z}\in\mathcal{Z}}\sum_{\bm{X}\in\mathcal{X}}\mathbb{P}\left[\bm{B}\left(\bm{z}^{\left(1\right)}\oplus\bm{z}^{\left(2\right)}\right)=\bm{x}^{\left(1\right)}\oplus\bm{x}^{\left(2\right)}\oplus\bm{\varUpsilon}\bm{\tilde{v}}^{\left(2\right)}\right],
        \end{aligned}
    \end{equation}
    with the penultimate line following as $\bm{\tilde{v}}^{\left(2\right)}$ and $\bm{y}\oplus\bm{\tilde{v}}^{\left(2\right)}$ are distributed identically, and the final line following from the law of total probability.
    
    We now focus on the probability term. Recall from the definition of the Gallager ensemble that:
    \begin{equation}
        \bm{B}\left(\bm{z}^{\left(1\right)}\oplus\bm{z}^{\left(2\right)}\right)=\bigoplus_{i=1}^k\bm{B}\bm{\varGamma}_i\left(\bm{z}^{\left(1\right)}\oplus\bm{z}^{\left(2\right)}\right),
    \end{equation}
    where the $\bm{B}\bm{\varGamma}_i\left(\bm{z}^{\left(1\right)}\oplus\bm{z}^{\left(2\right)}\right)$ are distributed as independent, uniformly random vectors with Hamming weight constrained to be $k\lambda\left\lVert\bm{z}^{\left(1\right)}\oplus\bm{z}^{\left(2\right)}\right\rVert_1$. We have already calculated a bound on this probability in Appendix~\ref{sec:max_value_gallager_xor_sat}, in Eq.~\eqref{eq:prob_eq_zero_ind}. Writing:
    \begin{equation}
        \bm{u}_i:=\bm{B}\bm{\varGamma}_i\left(\bm{z}^{\left(1\right)}\oplus\bm{z}^{\left(2\right)}\right)
    \end{equation}
    for $1\leq i\leq k$ and:
    \begin{equation}
        \bm{u}_{k+1}:=\bm{\pi}\left(\bm{x}^{\left(1\right)}\oplus\bm{x}^{\left(2\right)}\oplus\bm{\varUpsilon}\bm{\tilde{v}}^{\left(2\right)}\right)
    \end{equation}
    for $\bm{\pi}$ a uniformly random permutation, and defining the shorthand notation:
    \begin{equation}
        w_i=k\lambda\left\lVert\bm{\varGamma}_i\left(\bm{z}^{\left(1\right)}\oplus\bm{z}^{\left(2\right)}\right)\right\rVert_1,
    \end{equation}
    we have (Eq.~\eqref{eq:prob_eq_zero_ind}):
    \begin{equation}
        \mathbb{P}_{\bm{u}\sim\mathcal{U}_{\bm{\varDelta},t,t'}}\left[\bigoplus_{i=1}^{k+1}\bm{u}_i=\bm{0}\right]\leq\frac{1}{\prod_{i=1}^k\binom{m}{w_i}}\mathbb{E}_{x\sim\operatorname{Bin}\left(m,\frac{1}{2}\right)}\left[\left\lvert\prod_{i=1}^k K_{w_i}\left(x\right)\right\rvert\right],
    \end{equation}
    where $K_w$ denotes the $w$th binary Kravchuk polynomial. From here, we will proceed slightly differently from the proof of Theorem~\ref{thm:max_value}. Defining:
    \begin{align}
        x_i&:=\min\left(\frac{w_i}{n},1-\frac{w_i}{n}\right),\\
        p_i&:=\frac{\sum_{j=1}^k\min\left(\frac{w_j}{n},1-\frac{w_j}{n}\right)}{\min\left(\frac{w_i}{n},1-\frac{w_i}{n}\right)},
    \end{align}
    we have via H\"{o}lder's inequality and the Kravchuk moment bound~\cite[Corollary~4]{9398654}:
    \begin{equation}
        \begin{aligned}
            \mathbb{P}_{\bm{u}\sim\mathcal{U}_{\bm{\varDelta},t,t'}}\left[\bigoplus_{i=1}^{k+1}\bm{u}_i=\bm{0}\right]&\leq\frac{1}{\prod_{i=1}^k\binom{m}{w_i}}\mathbb{E}_{x\sim\operatorname{Bin}\left(m,\frac{1}{2}\right)}\prod_{i=1}^k\left[\left\lvert K_{w_i}\left(x\right)\right\rvert^{p_i}\right]^{\frac{1}{p_i}}\\
            &\leq\frac{1}{\sqrt{\prod_{i=1}^k\binom{m}{w_i}}}\exp_2\left(m\sum_{i=1}^k p_i^{-1}\psi\left(p_i,x_i\right)\right),
        \end{aligned}
    \end{equation}
    where $\psi$ is as in Eq.~\eqref{eq:psi_def_mt}. Now, recall that $\psi$ has upper bound (Proposition~\ref{prop:kravchuk_psi_bound}):
    \begin{equation}
        \psi\left(p,x\right)\leq\frac{4}{\ln\left(2\right)}\left(1-2x\right)^{\frac{p-1}{2}}+\frac{p}{2}\operatorname{H}_2\left(x\right)-1,
    \end{equation}
    giving:
    \begin{equation}
        \begin{aligned}
            \mathbb{P}_{\bm{u}\sim\mathcal{U}_{\bm{\varDelta},t,t'}}\left[\bigoplus_{i=1}^{k+1}\bm{u}_i=\bm{0}\right]&\leq\frac{1}{\sqrt{\prod_{i=1}^k\binom{m}{w_i}}}\exp_2\left(m\sum_{i=1}^k p_i^{-1}\left(\frac{4}{\ln\left(2\right)}\left(1-2x_i\right)^{\frac{p_i-1}{2}}+\frac{p_i}{2}\operatorname{H}_2\left(\frac{w_i}{m}\right)-1\right)\right)\\
            &\leq\exp_2\left(-\left(1-\frac{4}{\ln\left(2\right)}\sum_{i=1}^k p_i^{-1}\left(1-2x_i\right)^{\frac{p_i-1}{2}}\right)m+\operatorname{O}\left(\log\left(n\right)\right)\right).
        \end{aligned}
    \end{equation}
    Using the standard upper bound $1-x\leq\exp\left(-x\right)$ for $x\geq 0$ gives:
    \begin{equation}
        \begin{aligned}
            \mathbb{P}_{\bm{u}\sim\mathcal{U}_{\bm{\varDelta},t,t'}}\left[\bigoplus_{i=1}^{k+1}\bm{u}_i=\bm{0}\right]&\leq\exp_2\left(-\left(1-\frac{4}{\ln\left(2\right)}\sum_{i=1}^k p_i^{-1}\exp\left(-\left(p_i-1\right)x_i\right)\right)m+\operatorname{O}\left(\log\left(n\right)\right)\right)\\
            &\leq\exp_2\left(-\left(1-\frac{4\ce}{\ln\left(2\right)}\sum_{i=1}^k p_i^{-1}\exp\left(-p_i x_i\right)\right)m+\operatorname{O}\left(\log\left(n\right)\right)\right).
        \end{aligned}
    \end{equation}
    Finally, substituting the definition of $p_i$ yields:
    \begin{equation}
        \mathbb{P}_{\bm{u}\sim\mathcal{U}_{\bm{\varDelta},t,t'}}\left[\bigoplus_{i=1}^{k+1}\bm{u}_i=\bm{0}\right]\leq\exp_2\left(-\left(1-\frac{4\ce}{\ln\left(2\right)\sum_{j=1}^k x_j}\sum_{i=1}^k x_i\exp\left(-\sum_{j=1}^k x_j\right)\right)m+\operatorname{O}\left(\log\left(n\right)\right)\right).
    \end{equation}
    We now use the $\nu_1$ bound in the definition of $\mathcal{Z}$, which implies:
    \begin{equation}
        \begin{aligned}
            \sum_{j=1}^k x_j&=\frac{k\lambda}{n}\sum_{j=1}^k\min\left(\left\lVert\bm{\varGamma}_i\left(\bm{z}^{\left(1\right)}\oplus\bm{z}^{\left(2\right)}\right)\right\rVert_1,\left\lVert\bm{\varGamma}_i\left(\bm{z}^{\left(1\right)}\oplus\bm{z}^{\left(2\right)}\oplus\bm{1}\right)\right\rVert_1\right)\\
            &\geq\frac{k\lambda}{n}\sum_{j=1}^k\min\left(\left\lVert\bm{\varSigma}_{S_{\bm{B}}}\bm{\varGamma}_i\left(\bm{z}^{\left(1\right)}\oplus\bm{z}^{\left(2\right)}\right)\right\rVert_1,\left\lVert\bm{\varSigma}_{S_{\bm{B}}}\bm{\varGamma}_i\left(\bm{z}^{\left(1\right)}\oplus\bm{z}^{\left(2\right)}\oplus\bm{1}\right)\right\rVert_1\right)\\
            &\geq\nu_1 k.
        \end{aligned}
    \end{equation}
    This gives:
    \begin{equation}
        \begin{aligned}
            \mathbb{P}_{\bm{u}\sim\mathcal{U}_{\bm{\varDelta},t,t'}}\left[\bigoplus_{i=1}^{k+1}\bm{u}_i=\bm{0}\right]&\leq\exp_2\left(-\left(1-\frac{4\ce}{\ln\left(2\right)\sum_{j=1}^k x_j}\sum_{i=1}^k x_i\exp\left(-\nu_1 k\right)\right)m+\operatorname{O}\left(\log\left(n\right)\right)\right)\\
            &=\exp_2\left(-\left(1-\frac{4\ce}{\ln\left(2\right)}\exp\left(-\nu_1 k\right)\right)m+\operatorname{O}\left(\log\left(n\right)\right)\right).
        \end{aligned}
    \end{equation}
    
    Putting everything together,
    \begin{equation}
        \mathbb{P}\left[N_{\mu,\nu_1,\nu_2,S_{\bm{B}}}\geq 1\right]\leq 2^{-m}\sum_{\bm{Z}\in\mathcal{Z}}\sum_{\bm{X}\in\mathcal{X}}\exp_2\left(-\left(1-\frac{4\ce}{\ln\left(2\right)}\exp\left(-\nu_1 k\right)\right)m+\operatorname{O}\left(\log\left(n\right)\right)\right).
    \end{equation}
    Counting:
    \begin{align}
        \left\lvert\mathcal{Z}\right\rvert&\leq 2^n\times 2^{\operatorname{O}\left(\log\left(n\right)\right)}\binom{n}{\nu_2 n}2^{\left\lvert S_{\bm{B}}\right\rvert}=\exp_2\left(n+\operatorname{H}_2\left(\nu_2\right)n+\sigma n+\operatorname{O}\left(\log\left(n\right)\right)\right),\\
        \left\lvert\mathcal{X}\right\rvert&\leq \left(2^{\operatorname{O}\left(\log\left(n\right)\right)}\binom{m}{\left(1-\mu\right)m}\right)^2=\exp_2\left(2\operatorname{H}_2\left(\mu\right)m+\operatorname{O}\left(\log\left(n\right)\right)\right),
    \end{align}
    we finally have:
    \begin{equation}
        \begin{aligned}
            \mathbb{P}\left[N_{\mu,\nu_1,\nu_2,S_{\bm{B}}}\geq 1\right]&\leq\exp_2\left(n+\operatorname{H}_2\left(\nu_2\right)n+\sigma n+2\operatorname{H}_2\left(\mu\right)m-m-\left(1-\frac{4\ce}{\ln\left(2\right)}\exp\left(-\nu_1 k\right)\right)m+\operatorname{O}\left(\log\left(n\right)\right)\right)\\
            &=:\exp_2\left(\varPsi\left(\mu,\nu_1,\nu_2,\sigma\right)n+\operatorname{O}\left(\log\left(n\right)\right)\right),
        \end{aligned}
    \end{equation}
    where in the final line we defined the function:
    \begin{equation}\label{eq:varpsi_def}
        \varPsi\left(\mu,\nu_1,\nu_2,\sigma\right):=1+\operatorname{H}_2\left(\nu_2\right)+\sigma+2\lambda\operatorname{H}_2\left(\mu\right)-2\lambda+\frac{4\ce\lambda}{\ln\left(2\right)}\exp\left(-\nu_1 k\right).
    \end{equation}
    Demonstrating Eq.~\eqref{eq:ogp_prob_bound} for any given $\mu>\mu^\ast$ then boils down to choosing $\nu_1$, $\nu_2$, $\sigma$, and $R$ such that:
    \begin{equation}
        \varPsi\left(\mu,\nu_1,\nu_2,\sigma\right)<0,
    \end{equation}
    that is,
    \begin{equation}
        \operatorname{H}_2\left(\mu\right)<1-\frac{1}{2\lambda}-\frac{1}{2\lambda}\left(\operatorname{H}_2\left(\nu_2\right)+\sigma\right)-\frac{2\ce}{\ln\left(2\right)}\exp\left(-\nu_1 k\right).
    \end{equation}
    This proves Eq.~\eqref{eq:mu_bound_ogp}.
    
    \subsection{Independent \texorpdfstring{$\bm{v}^{\left(r\right)}$}{v\^r}}
    
    We now briefly discuss the case when the $\bm{v}^{\left(r\right)}$ are drawn independently. We generalize to the general case $R\geq 2$ to simultaneously prove Theorem~\ref{thm:chaos_prop}. By Markov's inequality and the properties of the Bernoulli distribution:
    \begin{equation}
        \begin{aligned}
            \mathbb{P}\left[N_{\mu,\nu_1,\nu_2,S_{\bm{B}}}\geq 1\right]&\leq\sum_{\bm{Z}\in\mathcal{Z}}\mathbb{P}\left[\bigwedge_{r=1}^R\left\lVert\bm{B}\bm{z}^{\left(r\right)}\oplus\bm{v}^{\left(r\right)}\right\rVert_1\leq\left(1-\mu\right)m\right]\\
            &=\sum_{\bm{Z}\in\mathcal{Z}}2^{-R\left(1-\operatorname{H}_2\left(\mu\right)\right)m+\operatorname{O}\left(\log\left(n\right)\right)}\\
            &\leq\exp_2\left(n+\left(R-1\right)\left(\operatorname{H}_2\left(\nu_2\right)+\sigma\right)n+R\operatorname{H}_2\left(\mu\right)m-Rm+\operatorname{O}\left(\log\left(n\right)\right)\right)\\
            &=:\exp_2\left(\tilde{\varPsi}\left(\mu,\nu_2,\sigma\right)n+\operatorname{O}\left(\log\left(n\right)\right)\right),
        \end{aligned}
    \end{equation}
    where in the final line we defined the function:
    \begin{equation}
        \tilde{\varPsi}\left(\mu,\nu_2,\sigma;R\right):=1+\left(R-1\right)\left(\operatorname{H}_2\left(\nu_2\right)+\sigma\right)+R\lambda\operatorname{H}_2\left(\mu\right)-R\lambda.
    \end{equation}
    Taking the condition:
    \begin{equation}
        \tilde{\varPsi}\left(\mu,\nu_2,\sigma;R\right)<0
    \end{equation}
    proves Theorem~\ref{thm:chaos_prop}. Furthermore, $\tilde{\varPsi}\left(\mu,\nu_2,\sigma;2\right)\leq\varPsi\left(\mu,\nu_1,\nu_2,\sigma\right)$, where $\varPsi$ is defined in Eq.~\eqref{eq:varpsi_def}. In particular, for any choice of $\mu,\nu_1,\nu_2,\sigma$ where $\varPsi\left(\mu,\nu_1,\nu_2,\sigma\right)<0$, it is the case that:
    \begin{equation}
        \tilde{\varPsi}\left(\mu,\nu_2,\sigma;2\right)<0.
    \end{equation}
    This completes the proof of Theorem~\ref{thm:e_ogp}.
\end{proof}

\section{QAOA for \texorpdfstring{\textsc{MAX-$k$-XOR-SAT}}{MAX-k-XOR-SAT}}\label{sec:qaoa_app_rat}

We here prove the claim made in the main text that depth-$1$ QAOA~\cite{farhi2014quantumapproximateoptimizationalgorithm} achieves an expected satisfied fraction of:
\begin{equation}
    \frac{\mathbb{E}_{\bm{v}\sim\mathbb{P}_{\mathrm{par}}}\left[\left\langle g\right\rangle_{\mathrm{QAOA}}\right]}{m}\geq\frac{1}{2}+\left(1+\operatorname{o}_k\left(1\right)\right)\sqrt{\frac{\ln\left(k\right)}{4\ce k\lambda}}.
\end{equation}
We begin with~\cite[Theorem~1]{marwaha2022boundsapproximating}, which shows that depth-$1$ QAOA achieves an expected satisfied fraction of:
\begin{equation}
    \frac{\mathbb{E}_{\bm{v}\sim\mathbb{P}_{\mathrm{par}}}\left[\left\langle g\right\rangle_{\mathrm{QAOA}}\right]}{m}=\frac{1}{2}-\frac{\ci s}{4}\left(\left(p+\ci q c^{k\lambda}\right)^k-\left(p-\ci q c^{k\lambda}\right)^k\right)
\end{equation}
for \textsc{MAX-$k$-XOR-SAT} with $\bm{v}\sim\mathbb{P}_{\mathrm{par}}$, where:
\begin{align}
    s&=\sin\left(\gamma\right),\\
    c&=\cos\left(\gamma\right),\\
    p&=\cos\left(2\beta\right),\\
    q&=\sin\left(2\beta\right),
\end{align}
for any $\gamma,\beta\in\left[0,2\cpi\right]$ which can be optimized over. Choosing:
\begin{align}
    \gamma&=\sqrt{\frac{\ln\left(\frac{4k}{\cpi^2}\right)}{k\lambda}},\\
    \beta&=\frac{1}{2\sqrt{k}},
\end{align}
we have in the large-$k$ limit that:
\begin{equation}
    \begin{aligned}
        \left(p\pm\ci qc^{k\lambda}\right)^k&=\left(1-\frac{1}{2k}+\operatorname{O}\left(k^{-2}\right)\pm\left(\frac{\ci}{\sqrt{k}}+\operatorname{O}_k\left(k^{-\frac{3}{2}}\right)\right)\left(1-\frac{\ln\left(\frac{4k}{\cpi^2}\right)}{2k\lambda}+\operatorname{\tilde{O}}_k\left(k^{-2}\right)\right)^{k\lambda}\right)^k\\
        &=\left(1-\frac{1}{2k}+\operatorname{O}_k\left(k^{-2}\right)\pm\left(1+\operatorname{\tilde{O}}_k\left(k^{-1}\right)\right)\frac{\ci}{\sqrt{k}}\exp\left(-\frac{1}{2}\ln\left(\frac{4k}{\cpi^2}\right)\right)\right)^k\\
        &=\left(1-\frac{1}{2k}\pm\frac{\cpi\ci}{2k}+\operatorname{\tilde{O}}_k\left(k^{-2}\right)\right)^k\\
        &=\exp\left(-\frac{1}{2}\pm\frac{\cpi\ci}{2}\right)+\operatorname{\tilde{O}}_k\left(k^{-1}\right).
    \end{aligned}
\end{equation}
This gives:
\begin{equation}
    \frac{\mathbb{E}_{\bm{v}\sim\mathbb{P}_{\mathrm{par}}}\left[\left\langle g\right\rangle_{\mathrm{QAOA}}\right]}{m}\geq\frac{1}{2}+\sqrt{\frac{\ln\left(k\right)}{4\ce k\lambda}}+\operatorname{O}_k\left(k^{-\frac{1}{2}}\right).
\end{equation}

\section{Bound on \texorpdfstring{$\psi$}{Psi}}\label{sec:bound_on_psi}

Recall from Appendix~\ref{sec:max_value_gallager_xor_sat} the function $\psi\left(p,x\right)$ governing the asymptotic behavior of moments of Kravchuk polynomials~\cite{9398654}:
\begin{equation}\label{eq:kravchuk_psi_def}
    \psi\left(p,x\right):=p-1+\log_2\left(\left(1-\delta\right)^p+\delta^p\right)-\frac{p}{2}\operatorname{H}_2\left(x\right)-px\log_2\left(1-2\delta\right),
\end{equation}
with $\delta$ implicitly defined via:
\begin{equation}\label{eq:x_delta_def_app}
    x=\left(\frac{1}{2}-\delta\right)\frac{\left(1-\delta\right)^{p-1}-\delta^{p-1}}{\left(1-\delta\right)^p+\delta^p}.
\end{equation}
We here prove an upper bound on this function.
\begin{proposition}\label{prop:kravchuk_psi_bound}
    Let $\psi\left(p,x\right)$ be as in Eq.~\eqref{eq:kravchuk_psi_def}. Then for any $p\geq 3$:
    \begin{equation}
        \psi\left(p,x\right)\leq\frac{4}{\ln\left(2\right)}\left(1-2x\right)^{\frac{p-1}{2}}+\frac{p}{2}\operatorname{H}_2\left(x\right)-1.
    \end{equation}
\end{proposition}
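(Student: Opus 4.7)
My plan is to reparametrize via $u := 1 - 2\delta \in [0,1]$ (so $u = 1$ corresponds to $x = 1/2$ and $u = 0$ to $x = 0$) and to introduce the symmetric functions $f_k(u) := \tfrac{1}{2}[(1+u)^k + (1-u)^k]$ and $h_k(u) := \tfrac{1}{2}[(1+u)^k - (1-u)^k]$. Direct expansion then yields the clean identities
\begin{equation*}
    x = \frac{u\,h_{p-1}(u)}{f_p(u)}, \qquad 1 - 2x = \frac{(1-u^2)\,f_{p-2}(u)}{f_p(u)}, \qquad f_p(u)\,f_{p-2}(u) - h_{p-1}(u)^2 = (1-u^2)^{p-2},
\end{equation*}
the middle following from the algebraic identity $f_p(u) - 2uh_{p-1}(u) = (1-u^2)f_{p-2}(u)$ and the last being a Cassini-type identity. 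Substituting also $\log_2((1-\delta)^p + \delta^p) = 1 - p + \log_2 f_p(u)$ and $\log_2(1-2\delta) = \log_2 u$ into Eq.~\eqref{eq:kravchuk_psi_def} reduces $\psi$ to the clean form $\psi(p,x) = \log_2 f_p(u) - \tfrac{p}{2}\operatorname{H}_2(x) - p x \log_2 u$.

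Setting $G(u) := \tfrac{4}{\ln 2}(1-2x(u))^{(p-1)/2} + \tfrac{p}{2}\operatorname{H}_2(x(u)) - 1 - \psi(p, x(u))$, the claim becomes $G(u) \geq 0$ on $[0, 1]$. The endpoints are immediate: $G(0) = \tfrac{4}{\ln 2} - 1 > 0$ (using $\psi(p,0) = 0$), and $G(1) = 0$ (using $\psi(p, 1/2) = p/2 - 1$), so the bound is tight at $x = 1/2$. The chain-rule simplification $\tfrac{d\psi}{dx} = -\tfrac{p}{2}\operatorname{H}_2'(x) - p\log_2 u$, a consequence of $px = uF'(u)$ for $F := \ln f_p$, helps with the Taylor expansion: expanding in $\epsilon := 1 - u$ near $u = 1$ shows that $\psi(p,x) - (\tfrac{p}{2}\operatorname{H}_2(x) - 1)$ has its leading small-$\epsilon$ correction dominated by $\tfrac{4}{\ln 2}(\epsilon/2)^{(p-1)/2}$, so $G(u) > 0$ in a neighborhood of $u = 1$.

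The main technical challenge is extending this local positivity to $G \geq 0$ uniformly over $[0, 1]$. My plan is to analyze the sign of $G'(u)$: rewrite it in terms of $f_k, h_k$ via the identities above and show, using the Cassini identity $f_p f_{p-2} = h_{p-1}^2 + (1-u^2)^{p-2} \geq 0$ to control relative sizes of these symmetric polynomials, that $G$ has no interior local minimum, so that $\min_{[0,1]} G = G(1) = 0$. If the monotonicity argument does not close directly, the fallback is to split $[0, 1]$ at a $p$-dependent threshold $u_\ast$: for $u \in [u_\ast, 1]$ a second-order Taylor expansion with explicit remainders suffices, while for $u \in [0, u_\ast]$ the quantity $1-2x$ is bounded below and $\psi$ is controlled uniformly by crude bounds on $\log_2 f_p(u)$. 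The hardest part will be the transition regime (large $p$ and $u$ close to but not equal to $1$), where both sides are very small and leading-order cancellations in $\psi(p, x) - \tfrac{p}{2}\operatorname{H}_2(x) + 1$ must be tracked carefully; here the $\epsilon^p$-type corrections from the $\delta^p$ contribution to $(1-\delta)^p + \delta^p$ become nontrivial.
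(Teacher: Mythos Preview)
Your reparametrization $u=1-2\delta$ and the accompanying identities (the formula for $x$, the factorization $f_p-2uh_{p-1}=(1-u^2)f_{p-2}$, the Cassini-type identity, and the clean derivative $d\psi/dx=-\tfrac{p}{2}\operatorname{H}_2'(x)-p\log_2 u$ via $px=uF'(u)$) are all correct, and they give a considerably tidier presentation of $\psi$ than the paper's. Your endpoint checks $G(0)>0$ and $G(1)=0$ are also right. This is a genuinely different setup from the paper's.

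The gap is that you have essentially restated the proposition as ``$G(u)\ge 0$ on $[0,1]$'' and then proposed a plan rather than a proof. Neither branch of your plan is carried out: you do not actually show that $G$ has no interior local minimum, and the fallback split at a ``$p$-dependent threshold $u_\ast$'' is only sketched. The derivative $G'(u)$ involves $(1-2x)^{(p-3)/2}\cdot\tfrac{dx}{du}$, $\operatorname{H}_2'(x)\tfrac{dx}{du}$, and $\tfrac{d}{du}\log_2 f_p(u)$ together, and it is not obvious that the sign analysis or the $p$-uniform Taylor remainder control closes without substantial further work---particularly in the regime you yourself flag as hardest (large $p$, $u\to 1$), where both sides are exponentially small in $p$ and the leading terms cancel.

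The paper avoids any global variational argument entirely. It extracts from the implicit relation the single scalar inequality $\delta/(1-\delta)\le 1-2x$, then algebraically rewrites $\psi$ in the form
\[
\psi(p,x)\le \log_2\!\bigl(1+(1-2x)^p\bigr)+p\log_2\!\Bigl(1+\tfrac{2x}{(1-2x)^{-(p-1)}-1}\Bigr)+\tfrac{p}{2}\operatorname{H}_2(x)-1,
\]
and bounds the first two terms separately by constant multiples of $(1-2x)^{(p-1)/2}$ using only $\ln(1+t)\le t$, $(1-t)^r\le e^{-rt}$, and the elementary bound $\sinh(t)/t\ge\cosh(t/\sqrt{3})$. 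Each step is a pointwise inequality valid for every $x$, so no monotonicity or case-splitting is needed. If you want to complete your route, the paper's term-by-term bounding (translated into your $u,f_k,h_k$ variables) would likely be much easier than the $G'$ sign analysis you propose.
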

\begin{proof}
    First, immediately from Eq.~\eqref{eq:x_delta_def_app} we have the bound:
    \begin{equation}
        x\leq\frac{\frac{1}{2}-\delta}{1-\delta},
    \end{equation}
    and in particular:
    \begin{equation}
        \delta\leq\frac{1-2x}{2\left(1-x\right)}
    \end{equation}
    and
    \begin{equation}\label{eq:delta_ratio_bound}
        \frac{\delta}{1-\delta}\leq 1-2x.
    \end{equation}
    Similarly, using the equality:
    \begin{equation}
        \frac{1-2\delta}{\left(1-\delta\right)^p+\delta^p}=\frac{2x}{\left(1-\delta\right)^{p-1}-\delta^{p-1}}
    \end{equation}
    and the inequality Eq.~\eqref{eq:delta_ratio_bound} we have:
    \begin{equation}
        \begin{aligned}
            1-x&=\frac{\left(1-\delta\right)^p+\delta^p-\left(\frac{1}{2}-\delta\right)\left(1-\delta\right)^{p-1}+\left(\frac{1}{2}-\delta\right)\delta^{p-1}}{\left(1-\delta\right)^p+\delta^p}\\
            &=\frac{\left(1-\delta\right)^{p-1}+\delta^{p-1}}{2\left(\left(1-\delta\right)^p+\delta^p\right)}\\
            &=\frac{1}{2\left(1-\delta\right)}\left(\frac{\left(1-\delta\right)^p+\left(1-\delta\right)\delta^{p-1}}{\left(1-\delta\right)^p+\delta^p}\right)\\
            &=\frac{1}{2\left(1-\delta\right)}\left(1+\frac{\delta^{p-1}\left(1-2\delta\right)}{\left(1-\delta\right)^p+\delta^p}\right)\\
            &=\frac{1}{2\left(1-\delta\right)}\left(1+\frac{2x}{\frac{\left(1-\delta\right)^{p-1}}{\delta^{p-1}}-1}\right)\\
            &\leq\frac{1}{2\left(1-\delta\right)}\left(1+\frac{2x}{\left(1-2x\right)^{-\left(p-1\right)}-1}\right).
        \end{aligned}
    \end{equation}
    This gives a general bound on $\psi$:
    \begin{equation}
        \begin{aligned}
            \psi\left(p,x\right)&\leq p-1+\log_2\left(\left(1-\delta\right)^p+\delta^p\right)-\frac{p}{2}\operatorname{H}_2\left(x\right)-px\log_2\left(\frac{x}{1-x}\right)\\
            &=\log_2\left(\left(1-\delta\right)^p+\delta^p\right)+p\log_2\left(2\left(1-x\right)\right)+\frac{p}{2}\operatorname{H}_2\left(x\right)-1\\
            &=p\log_2\left(1-\delta\right)+\log_2\left(1+\frac{\delta^p}{\left(1-\delta\right)^p}\right)+p\log_2\left(2\left(1-x\right)\right)+\frac{p}{2}\operatorname{H}_2\left(x\right)-1\\
            &\leq\log_2\left(1+\left(1-2x\right)^p\right)+p\log_2\left(1+\frac{2x}{\left(1-2x\right)^{-\left(p-1\right)}-1}\right)+\frac{p}{2}\operatorname{H}_2\left(x\right)-1.
        \end{aligned}
    \end{equation}

    We now bound the first two terms to be take more convenient forms. We use in what follows the upper bound for $x\geq 0$~\cite[Theorem~2.4]{sandor2017two}:\footnote{We use the usual identification of $\lim_{x\to 0}\frac{\sinh\left(x\right)}{x}=1$.}
    \begin{equation}
        \frac{\sinh\left(x\right)}{x}\geq\cosh\left(\frac{x}{\sqrt{3}}\right),
    \end{equation}
    along with the standard upper bounds (for $r\geq 0$):
    \begin{align}
        \ln\left(1+x\right)&\leq x,\\
        \left(1-x\right)^r&\leq\exp\left(-rx\right).
    \end{align}
    For the first term, we have the upper bound:
    \begin{equation}
        \log_2\left(1+\left(1-2x\right)^p\right)\leq\frac{\left(1-2x\right)^p}{\ln\left(2\right)}.
    \end{equation}
    For the second term, we calculate:
    \begin{equation}
        \begin{aligned}
            \log_2\left(1+\frac{2x}{\left(1-2x\right)^{-\left(p-1\right)}-1}\right)&\leq\frac{2x\left(1-2x\right)^{p-1}}{\ln\left(2\right)\left(1-\left(1-2x\right)^{p-1}\right)}\\
            &\leq\frac{2x\left(1-2x\right)^{p-1}}{\ln\left(2\right)\left(1-\exp\left(-2\left(p-1\right)x\right)\right)}\\
            &=\frac{x\exp\left(\left(p-1\right)x\right)\left(1-2x\right)^{p-1}}{\ln\left(2\right)\sinh\left(\left(p-1\right)x\right)}\\
            &\leq\frac{\exp\left(\left(p-1\right)x\right)\left(1-2x\right)^{p-1}}{\ln\left(2\right)\left(p-1\right)\cosh\left(\frac{x}{\sqrt{3}}\right)}\\
            &=\frac{\exp\left(\left(p-1\right)x\right)\left(1-2x\right)^{\frac{1}{2}\left(p-1\right)-\frac{1}{2\sqrt{3}}}\left(1-2x\right)^{\frac{1}{2}\left(p-1\right)+\frac{1}{2\sqrt{3}}}}{\ln\left(2\right)\left(p-1\right)\cosh\left(\frac{x}{\sqrt{3}}\right)}\\
            &\leq\frac{\exp\left(\frac{x}{\sqrt{3}}\right)\left(1-2x\right)^{\frac{1}{2}\left(p-1\right)+\frac{1}{2\sqrt{3}}}}{\ln\left(2\right)\left(p-1\right)\cosh\left(\frac{x}{\sqrt{3}}\right)}\\
            &\leq\frac{2}{\ln\left(2\right)\left(p-1\right)}\left(1-2x\right)^{\frac{1}{2}\left(p-1\right)+\frac{1}{2\sqrt{3}}}\\
            &\leq\frac{2}{\ln\left(2\right)\left(p-1\right)}\left(1-2x\right)^{\frac{p-1}{2}}.
        \end{aligned}
    \end{equation}
    Noting $\left(1-2x\right)^p\leq\left(1-2x\right)^{\frac{p-1}{2}}$ as $p\geq 3$ allows us to combine the two terms into:
    \begin{equation}
        \frac{\left(1-2x\right)^p}{\ln\left(2\right)}+\frac{2}{\ln\left(2\right)\left(p-1\right)}\left(1-2x\right)^{\frac{p-1}{2}}\leq\frac{\left(3p-1\right)}{\ln\left(2\right)\left(p-1\right)}\left(1-2x\right)^{\frac{p-1}{2}}.
    \end{equation}
    Furthermore, as $p\geq 3$ implies that:
    \begin{equation}
        \frac{3p-1}{p-1}\leq 4,
    \end{equation}
    we can further simplify this as:
    \begin{equation}
        \frac{\left(3p-1\right)}{\ln\left(2\right)\left(p-1\right)}\left(1-2x\right)^{\frac{p-1}{2}}\leq\frac{4}{\ln\left(2\right)}\left(1-2x\right)^{\frac{p-1}{2}}.
    \end{equation}
    This completes the proof.
\end{proof}

\section{Local Restrictability of Common Code Ensembles}\label{sec:loc_rest_common_code_ens}

The local restrictability property is not unique to the Gallager ensemble; rather, we argue that it is a relatively generic property of LDPC code ensembles.
We here provide some additional examples of restrictable LDPC ensembles.
In each case, we fix a constant $\lambda^{-1} \in (0, 1)$ and implicitly assume that $\lambda^{-1} m \in \mathbb{Z}$.  
The constant $r = 1 - \lambda^{-1} \in (0, 1)$ is known as the design rate of an ensemble.
When we refer to an ensemble of codes, we mean a collection of distributions $\mathcal{D}_{n, \lambda}$, i.e., one distribution for each $n$ and $\lambda$. 
\begin{definition}[Bernoulli ensemble] \label{def:Bernoulli_code}
    Fix $\lambda^{-1} \in (0, 1)$ a constant.
    The Bernoulli ensemble with probability $p$ is the distribution defined by the process in which each entry of the parity check matrix $\boldsymbol{H} \in \mathbb{F}_2^{\lambda^{-1} m \times m}$ is i.i.d.\ $\operatorname{Bern}(p)$. 
\end{definition}

The Bernoulli ensemble does not give uniform check weights. That is, the weight of two distinct checks may differ slightly. If we insist that all checks must have the same weight, we must instead use the right-regular ensemble, defined as follows.
\begin{definition}[Right-regular ensemble] \label{def:right_regular_code}
    Fix $\lambda^{-1} \in (0, 1)$ a constant. 
    The right-regular ensemble with check sparsity $d$ is defined by a parity check matrix $\boldsymbol{H} \in \mathbb{F}_2^{\lambda^{-1} m \times m}$ whose rows are independently generated. 
    For each row, we randomly choose $d$ indices (without replacement) on whose corresponding bit to place a $1$. 
    The bits on the remaining $m-d$ indices are set to $0$.
\end{definition}

Both of these ensembles are manifestly self-similar: since each check is generated independently from all others, restriction to the first $t$ checks is equivalent to stopping the generation process after the first $t$ steps.
Hence, to show restrictability, it suffices to show goodness for a generic parameter regime.
In the Bernoulli ensemble case, the probability parameter $p$ cannot be $O(1/n)$, as otherwise any column of the check has probability $(1 - p)^m = \Omega(1)$ of being 0, resulting in a distance-1 code.
Nevertheless, when $p = \Omega(\frac{\log m}{m})$, the Bernoulli ensemble is good.

\begin{lemma}[Binomial parity] \label{lemma:binomial_parity}
    Let $X \sim \operatorname{Bin}(n, p)$. Then $\mathbb{P}[X \equiv 0 \pmod{2}] = \frac{1}{2} + \frac{1}{2} (1 - 2p)^n$.
\end{lemma}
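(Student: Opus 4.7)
The plan is to prove the identity by computing the expectation of $(-1)^X$ in two different ways. On one hand, $(-1)^X$ takes value $+1$ when $X$ is even and $-1$ when $X$ is odd, so $\mathbb{E}[(-1)^X] = \mathbb{P}[X \equiv 0 \pmod 2] - \mathbb{P}[X \equiv 1 \pmod 2] = 2\mathbb{P}[X \equiv 0 \pmod 2] - 1$, which can be solved for the desired probability.

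On the other hand, I evaluate $\mathbb{E}[(-1)^X]$ by exploiting the structure of the binomial distribution. The cleanest route is to write $X = \sum_{i=1}^n X_i$ where the $X_i$ are i.i.d.\ $\operatorname{Bern}(p)$, so that $(-1)^X = \prod_{i=1}^n (-1)^{X_i}$. By independence and the direct computation $\mathbb{E}[(-1)^{X_i}] = (1-p) \cdot 1 + p \cdot (-1) = 1 - 2p$, I obtain $\mathbb{E}[(-1)^X] = (1-2p)^n$. Equivalently, this is the probability generating function of $X$ evaluated at $-1$, namely $((1-p) + p(-1))^n = (1-2p)^n$, via the binomial theorem applied to $\sum_{k=0}^n \binom{n}{k} p^k (1-p)^{n-k} (-1)^k$.

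Combining the two expressions gives $2\mathbb{P}[X \equiv 0 \pmod 2] - 1 = (1-2p)^n$, from which the stated formula $\mathbb{P}[X \equiv 0 \pmod 2] = \tfrac{1}{2} + \tfrac{1}{2}(1-2p)^n$ follows immediately. There is no real obstacle here; the result is a one-line generating-function calculation, and the only ``choice'' is which of the two equivalent derivations (binomial theorem versus product of independent $\pm 1$ random variables) to present. I would use the independence version since it is slightly more transparent and foreshadows the $\mathbb{F}_2$-linear structure that presumably motivates invoking the lemma in the subsequent distance analysis of the Bernoulli ensemble.
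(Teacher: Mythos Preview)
Your proof is correct and essentially identical to the paper's: both compute $\mathbb{P}[X\text{ even}]-\mathbb{P}[X\text{ odd}]=(1-2p)^n$ (the paper via the binomial theorem, you equivalently via independence of the Bernoulli summands) and then solve for $\mathbb{P}[X\text{ even}]$ using that the two probabilities sum to $1$.
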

\begin{proof}
By direct computation with the binomial theorem, one can check that 
\begin{align}
    \mathbb{P}[X \equiv 0 \text{ (mod 2)}] - \mathbb{P}[X \equiv 1 \text{ (mod 2)}] = ((1-p) - p)^n .
\end{align}
At the same time, $\mathbb{P}[X \equiv 0 \text{ (mod 2)}] + \mathbb{P}[X \equiv 1\text{ (mod 2)}] = 1$. Adding the two expressions and simplified proves the claim.
\end{proof}

\begin{theorem}[Bernoulli ensemble is good] \label{thm:Bernoulli_ensemble_good}
Let $\lambda^{-1} = \frac{n}{m}$.
If $\operatorname{H}_2(\delta) < \lambda^{-1} < C$, then Bernoulli ensemble with probability parameter $C \frac{\ln m}{m}$ is $(\delta m, 1/n^{\Omega(1)})$-good.
\end{theorem}

\begin{proof}
Fix an error $\boldsymbol{e}$ of weight $w$.
Since each parity check is drawn independently and each check is a sum of $w$ independent Bernoulli bits with probability $p$, the probability that $\boldsymbol{e}$ has 0 syndrome is, by Lemma~\ref{lemma:binomial_parity}, \begin{align}
    \Pr[\boldsymbol{He} = 0] = \left( \frac{1 + (1-2p)^w}{2} \right)^{n} =: q_w^n .
\end{align}
We will bound the probability that $\boldsymbol{e}$ is a codeword by a union bound, i.e. \begin{align}
    \Pr[\exists \boldsymbol{e} \,:\, |\boldsymbol{e}| \leq \delta m \,\cap\, \boldsymbol{He} = 0] \leq \sum_{w=1}^{\delta m} \binom{m}{w} q_w^n .
\end{align}
We bound this sum in three regimes.
First, we consider the case when $w$ is sufficiently small.
Then \begin{align}
    q_w & \leq \frac{1 + e^{-2pw}}{2} = e^{-pw + O((pw)^2)} .
\end{align}
Choose $\epsilon$ such that $C(1 - \epsilon) > \lambda^{-1}$.
Then by the above there exists a constant $\eta$ such that for $pw \leq \eta$, \begin{align}
    \frac{1 + e^{2pw}}{2} \leq e^{-(1-\epsilon)pw}
\end{align}
and thus $q_w^n \leq \exp(-(1 - \epsilon) \lambda^{-1} C \ln(m) w) = m^{-cw}$ where $c > 1$ is a constant.
Thus, \begin{align}
    \sum_{w=1}^{\eta/p} \binom{m}{w} q_w^n & \leq \sum_{w=1}^{\eta/p} \left( \frac{em}{w} \right)^{w} m^{-cw} \leq \sum_{w=1}^{\eta/p} \left( e m^{(1-c)} \right)^w \\
    & \leq e m^{(1-c)} \sum_{w=0}^{\infty} (e m^{(1-c)})^w \leq (2 + \operatorname{o}_m(1)) e m^{(1-c)} = m^{-\Omega(1)} ,
\end{align}
where we have used the fact that $r := e m^{(1-c)} < 1/2$ for sufficiently large $m$ and thus the geometric series $\frac{1}{1-r} < 2$.
Next, we consider the regime in which $\eta \leq pw \leq K$ for some constant $K$ (independent of $m$).
In this case, $q_w \leq \frac{1 - e^{-2\eta}}{2} =: \beta_\eta < 1$ uniformly.
At the same time, \begin{align}
    \log \sum_{w \leq K/p} \binom{m}{w} & \leq \log \left( \frac{K}{p} \binom{m}{M/p} \right) \leq \frac{K}{p} \log \frac{emp}{K} + \log \frac{K}{p} = \frac{K}{p} \log \frac{e C \ln m}{K}  + \log \frac{K}{p} \\
    & = \frac{K}{p} (1 + \operatorname{o}_m(1)) \log \log m = \operatorname{O}_m \left( \frac{\log \log m}{\log m} m \right) = \operatorname{o}_m(m) .
\end{align}
Thus, this intermediate regime decays exponentially: \begin{align}
    \sum_{w : pw \in [\eta, K]} \binom{m}{w} q_w^n & = \exp_2\left(-m \left(\lambda^{-1} \log\frac{1}{\beta_\eta} - \operatorname{o}_m(1) \right) \right) = \exp(-\Theta_m(m)) ,
\end{align}
and, importantly, this holds regardless of the choice of $K$ so long as it is a constant independent of $m$.
Finally, in the large regime, we use the bound $\sum_{w \leq \delta m} \binom{m}{w} \leq \exp_2(-m \operatorname{H}_2(\delta))$~\cite{richardson2008modern}.
Thus, \begin{align}
    \sum_{w = Kp}^{\delta m} \binom{m}{w} q_w^n & \leq  \exp_2 \left( -m \left( \operatorname{H}_2(\delta) - \lambda^{-1} \log \frac{1}{\beta_K} \right) \right) .
\end{align}
This is $\exp(-\Theta_m(m))$ so long as $\operatorname{H}_2(\delta) < \lambda^{-1} \log 1/\beta_K$.
Note that as $K \to \infty$, $\log 1/\beta_K \to 1$, and we assume that $\operatorname{H}_2(\delta) < \lambda^{-1}$.
Hence, we can always choose a constant $K$ sufficiently large such that $\operatorname{H}_2(\delta) < \lambda^{-1} \log 1/\beta_K$, and in doing so, we have \begin{align}
    \Pr[\exists \boldsymbol{e} \,:\, |\boldsymbol{e}| \leq \delta m \,\cap\, \boldsymbol{He} = 0] \leq m^{-\Omega_m(1)} + \exp(-\Theta_m(m)) + \exp(-\Theta_m(m)) = m^{-\Omega_m(1)} .
\end{align}
\end{proof}

Since the sparsities are not fixed in the Bernoulli ensemble, we also bound them probabilistically. 
The log-local parameter $p = C \frac{\log m}{m}$ yields check sparsities and bit sparsities of size $\sim \log m$ with high probability.

\begin{lemma}[Bernoulli ensemble weight bounds] \label{lemma:Bernoulli_code_weight_bound}
    Let $\lambda^{-1} \in (0, 1)$ be a constant. Let $\boldsymbol{H} \in \mathbb{F}_2^{\lambda^{-1}m \times m}$ be sampled $H_{ij} \stackrel{\text{iid}}{\sim} \operatorname{Bern}(p)$ with $p = C \frac{\ln m}{m}$ and $\lambda^{-1}C \geq 4$. Then the check sparsity $W_r := \max_{i} \sum_{j} H_{ij}$ and bit sparsity $W_c = \max_j \sum_{i} H_{ij}$ (sums are as integers, \textit{not} mod 2) are at most $(1 + 2/\sqrt{C}) C \ln m$ and $(1 + 2/\sqrt{\lambda^{-1} C}) C \lambda^{-1} \ln m$, respectively, with probability $1 - 1/n^{\Omega(1)}$, where $n = \lambda^{-1} m$.
\end{lemma}
\begin{proof}
We apply a Chernoff bound for Bernoulli random variables, which states that for $X \sim \operatorname{Bin}(m, p)$, and $\delta \in [0, 1]$, $\mathbb{P}[X - mp \geq \delta  m p] \leq \exp\left(- \frac{\delta^2 mp}{3} \right)$.
Let $w^{(r)}_i = \sum_{j} H_{ij}$ be the weight of the $i$th row and $w^{(c)}_j = \sum_{i} H_{ij}$ be the weight of the $j$th column. Then $W_r = \max_i w^{(r)}_i$ and $W_c = \max_j w^{(c)}_j$. By Chernoff, \begin{align}
    \mathbb{P}[w^{(r)}_i - C \ln m \geq \delta C \ln m] \leq \exp\left(- \frac{\delta^2 C \ln m}{3} \right) . 
\end{align}
The event $\{W_r - C \ln m \geq \delta C \ln m\}$ is the same event as $\{ \exists i \,:\, w^{(r)}_i - C \ln m \geq \delta C \ln m \}$. By a union bound, \begin{align}
    \mathbb{P}[\exists i \,:\, w^{(r)}_i - C \ln m \geq \delta C \ln m] & \leq m \exp\left(- \frac{\delta^2 C \ln m}{3} \right) \leq \exp \left(- \left( \frac{\delta^2 C \ln m}{3} - \log m \right) \right) \\
    & = \exp \left(- \left( \frac{\delta^2 C}{3} - 1 \right) \ln m \right) .
\end{align}
This will be $\exp{-\Omega(\ln m)}$ if $\frac{\delta^2 C}{3} > 1$, i.e. if $\delta > \sqrt{3/C}$. So, $\delta = 2/\sqrt{C}$ suffices, and when $C \geq 4$, $\delta \leq 1$ as needed.
The proof proceeds analogously for the column weight, except that we now sum of $\lambda^{-1}m$ bits instead, so $mp \mapsto \lambda^{-1}m p = \lambda^{-1}C \ln m$.
\end{proof}

We next demonstrate that the right-regular ensemble is similarly good.
Our strategy is to utilize anti-correlation and the result of Theorem~\ref{thm:Bernoulli_ensemble_good} rather than to prove the result from scratch.

\begin{theorem}[Right-regular ensemble is good] \label{thm:right_regular_ensemble_good}
    Let $\lambda^{-1} = \frac{n}{m}$ and let $d = C \ln m$.
    If $\operatorname{H}_2(\delta) < \lambda^{-1} < C$, then the right-regular ensemble with check sparsity $d$ is $(\delta m, 1/n^{\Omega(1)})$-good.
\end{theorem}
\begin{proof}
We again split our union bounded probability into various regimes.
Let $p := d/m$ and and choose $\epsilon$ sufficiently small so that $(1 - \epsilon) C > \lambda^{-1}$.
Let $S = \operatorname{supp}(\bm{e})$ and $T$ be the support of the first check.
Then $\bm{e}$ is not detected by the first check if and only if $S \cap T \equiv 0 \pmod{2}$; denote this probability by $q_w$.
Then $q_w = 1 - \Pr[S \cap T \equiv 1 \pmod{2]} \leq 1 - \Pr[S \cap T = 1]$, and \begin{align}
    \Pr[S \cap T = 1] & = \frac{\binom{w}{1} \binom{m-w}{d-1}}{\binom{m}{w}} = \frac{wd}{m} \frac{\binom{m-w}{d-1}}{\binom{m-1}{d-1}} .
\end{align}
The binomial coefficient ratio is the probability that when we sample $d-1$ balls from a set consisting of $w - 1$ white balls and $m-w$ black balls, that all the balls sampled are black.
By a union bound, this probability is at least $1 - \frac{w-1}{m-1}(d-1) \geq 1 - 2pw$.
Suppose now that $pw \leq \eta$, where $\eta$ is a constant.
For $\eta$ sufficiently small, $pw (1 - 2pw) \geq (1 - \e) \eta$.
Thus, \begin{align}
    q_w \leq 1 - pw (1 - 2pw) \leq 1 - (1 - \epsilon) pw \leq \exp_2(- (1 - \e)pw) .
\end{align}
This is the same bound we obtained on the corresponding $q_w$ in the Bernoulli case, and thus we can apply the analysis from Theorem~\ref{thm:Bernoulli_ensemble_good} for this regime.
For the remaining intermediate and large regimes, note that if we had instead sampled $T$ with replacement rather than without replacement, we would obtain precisely the Bernoulli model.
This model differs from our present model only when there is at least one collision, which occurs with probability at most $\binom{d}{2} \frac{1}{m} = \operatorname{o}_m(1)$.
Hence, $q_w = \frac{1 - (1-2p)^w}{2} + \operatorname{o}_m(1)$.
We can then directly apply the intermediate and large regimes from Theorem~\ref{thm:Bernoulli_ensemble_good} to complete the proof.
\end{proof}

\section{DQI With a BP Decoder is Obstructed at the Branching OGP Threshold}\label{sec:dqi_bp_b_ogp}

We here show that the branching OGP~\cite{10.1002/cpa.22222}---a generalization of the multi-OGP---obstructs DQI with an inverse Lipschitz decoder with respect to a slightly different ensemble than the Gallager ensemble we consider in the main text. Here, we consider the \emph{Poisson ensemble} of \textsc{MAX-$k$-XOR-SAT}, where each constraint has i.i.d.\ multiplicity $\operatorname{Pois}\left(m/\binom{n}{k}\right)$. This ensemble is notable as it is known it exhibits a branching OGP~\cite{jones_et_al_full}.

Our strategy will be to show that stable quantum algorithms with respect to this ensemble of problem instances are \emph{overlap concentrated}, as it is known that the branching OGP obstructs overlap concentrated algorithms~\cite{10.1002/cpa.22222,jones_et_al_full}. We restate these known results here for convenience; the bound is expressed in terms of the \emph{$k$th algorithmic Parisi constant} $\operatorname{\normalfont\textsc{P}}_k^{\mathrm{ALG}}$, sometimes also called the \emph{algorithmic threshold of the $k$-spin model}. We refer the reader to \cite{el2021optimization} and \cite[Eq.~(2.7)]{alaoui2020algorithmicthresholdsmeanfield} for a definition.
\begin{definition}[Overlap concentrated algorithms~{\cite[Definition~2.1]{10.1002/cpa.22222}},~{\cite[Definition~6.2]{jones_et_al_full}}]
    A randomized algorithm $\mathcal{A}:\mathcal{X}\times\varOmega\to\mathbb{F}_2^n$ with associated probability space $\left(\varOmega,\mathbb{P}_\varOmega\right)$ is said to be $\left(\delta,\nu\right)$-overlap concentrated with respect to a joint distribution $\left(\bm{X},\bm{Y}\right)\sim\mathbb{P}_2^{\left(\kappa\right)}$ of correlated problem instances parameterized by $\kappa$ if, for all $\kappa$,
    \begin{equation}
        \mathbb{P}_{\omega\sim\mathbb{P}_\varOmega}\left[\mathbb{P}_{\left(\bm{X},\bm{Y}\right)\sim\mathbb{P}_2^{\left(\kappa\right)}}\left[\left\lvert d_{\mathrm{H}}\left(\mathcal{A}\left(\bm{X}\right),\mathcal{A}\left(\bm{Y}\right)\right)-\mathbb{E}_{\left(\bm{X},\bm{Y}\right)\sim\mathbb{P}_2^\kappa}\left[d_{\mathrm{H}}\left(\mathcal{A}\left(\bm{X}\right),\mathcal{A}\left(\bm{Y}\right)\right)\right\rvert\geq\frac{\delta}{2}n\right]\right]>\nu\right]\leq\operatorname{o}\left(1\right).
    \end{equation}
\end{definition}
\begin{theorem}[Overlap concentrated algorithms are obstructed in Poisson \textsc{MAX-$k$-XOR-SAT}~{\cite[Corollary~6.10]{jones_et_al_full}}]\label{thm:ov_conc_ob_b_ogp}
    Let $k$ be even and fix $\epsilon>0$. All $\left(\delta,\nu\right)$-overlap concentrated algorithms $\mathcal{A}$ with $\nu\leq\operatorname{o}\left(1\right)$ are such that, for sufficiently large $\lambda$, with probability at least $1-\operatorname{o}\left(1\right)$,
    \begin{equation}
        \frac{g_{\bm{X}}\left(\mathcal{A}\left(\bm{X},\omega\right)\right)}{m}\leq\mu_{\mathrm{b-OGP}}+\frac{\epsilon}{\sqrt{\lambda}}:=\frac{1}{2}+\frac{1}{2\sqrt{\lambda}}\operatorname{\normalfont\textsc{P}}_k^{\mathrm{ALG}}+\frac{\epsilon}{\sqrt{\lambda}}.
    \end{equation}
\end{theorem}
Notably, $\mu_{\mathrm{b-OGP}}$ is precisely the value achieved by the classical algorithm approximate message passing for the i.i.d.\ ensemble of \textsc{MAX-$k$-XOR-SAT}~\cite{cheairi2024algorithmicuniversalitylowdegreepolynomials}. We conjecture approximate message passing achieves this same value for the Poisson ensemble of instances as well.

We now prove that stable quantum algorithms are overlap concentrated. First, we define a version of stable quantum algorithms relative to this new ensemble. Note that instances of this ensemble are labeled by the multiplicities of the constraints $\bm{m}$ as well as the parities of the constraints $\bm{v}$.
\begin{definition}[Quantum stability relative to the Poisson ensemble]\label{def:gen_stable_qas}
    Let $\bm{\mathcal{A}}:\mathbb{F}_2^D\times\varOmega\to\mathcal{S}_n^{\mathrm{m}}$ be a quantum algorithm with associated probability space $\left(\varOmega,\mathbb{P}_\varOmega\right)$.

    $\bm{\mathcal{A}}$ is said to be \emph{$\left(f,L,p_{\mathrm{st}}\right)$-stable} relative to the Poisson ensemble of \textsc{MAX-$k$-XOR-SAT} if for all pairs of inputs:
    \begin{equation}
        \mathbb{P}_{\omega\sim\mathbb{P}_\varOmega}\left[\left\lVert\bm{\mathcal{A}}\left(\bm{m},\bm{v},\omega\right)-\bm{\mathcal{A}}\left(\bm{m'},\bm{v'},\omega\right)\right\rVert_{W_2}\leq f+L\left(\left\lVert\bm{m}-\bm{m'}\right\rVert_1+\left\lVert\bm{v}-\bm{v'}\right\rVert_1\right)\right]\geq 1-p_{\mathrm{st}}.
    \end{equation}
\end{definition}
It is easy to adapt the proof of Theorem~\ref{thm:dqi_is_lipschitz} from the main text to also prove DQI with an inverse Lipschitz decoder is stable relative to the Poisson ensemble of \textsc{MAX-$k$-XOR-SAT} instances. As DQI is ill-defined for constraint satisfaction problems where each constraint is included more than once, we assume DQI is implemented for this ensemble by ignoring any repeated constraints.
\begin{theorem}[DQI with a Lipschitz decoder is Lipschitz]\label{thm:dqi_is_poisson_lipschitz}
    Assume $\operatorname{DQI}_\ell$ is implemented using an $L$-inverse Lipschitz decoder. Then $\operatorname{DQI}_\ell$ is $\left(0,L+k,0\right)$-stable relative to the Poisson ensemble of \textsc{MAX-$k$-XOR-SAT}.
\end{theorem}
\begin{proof}
    Fix $\bm{v}$ and $\bm{v'}$, and let $\bm{d}:=\bm{v}\oplus\bm{v'}$. We then have:
    \begin{equation}
        \bm{\rho}_2\left(\bm{B},\bm{v'}\right)=\bigotimes_{i=1}^m\bm{Z}_i^{d_i}\bm{\rho}_2\left(\bm{B},\bm{v}\right)\bigotimes_{i=1}^m\bm{Z}_i^{d_i}.
    \end{equation}
    Just as in Proposition~\ref{prop:commuting_steps}, we hope now to commute $\bigotimes_{i=1}^m\bm{Z}_i^{d_i}$ through the decoding step. Let $S_{\bm{d}}\subseteq\left[m\right]$ be the support of $\bm{d}$, and let $R_{\bm{d}}\subseteq\left[n\right]$ be the support of $\mathcal{D}^{-1}\left(\bm{d}\right)$, which by the $L$-inverse Lipschitz assumption of the decoder is of cardinality at most $\left\lvert R_{\bm{d}}\right\rvert\leq L\left\lVert\bm{d}\right\rVert_1$. Commuting $\bigotimes_{i=1}^m\bm{Z}_i^{d_i}$ through the decoding operator $\bm{O}_{\bm{B}}$ implementing $\mathcal{D}$, we therefore have:
    \begin{equation}
        \bm{O}_{\bm{B}}\bigotimes_{i=1}^m\bm{Z}_i^{d_i}=\bm{V}_{R_{\bm{d}}}\bm{O}_{\bm{B}},
    \end{equation}
    for some operator $\bm{V}_{R_{\bm{d}}}$ with support only on $R_{\bm{d}}$. By construction,
    \begin{align}
        \operatorname{DQI}_\ell\left(\bm{B},\bm{v}\right)&=\bm{H}^{\otimes n}\bm{\rho}_3\left(\bm{B},\bm{v}\right)\bm{H}^{\otimes n},\\
        \operatorname{DQI}_\ell\left(\bm{B},\bm{v'}\right)&=\bm{H}^{\otimes n}\bm{V}_{R_{\bm{d}}}\bm{\rho}_3\left(\bm{B},\bm{v}\right)\bm{V}_{R_{\bm{d}}}^\dagger\bm{H}^{\otimes n}.
    \end{align}
    Now, by the locality of $\bm{V}_{R_{\bm{d}}}$ and the invariance of the Wasserstein metric under $1$-local unitary transformations,
    \begin{equation}
        \begin{aligned}
            \left\lVert\operatorname{DQI}_\ell\left(\bm{B},\bm{v}\right)-\operatorname{DQI}_\ell\left(\bm{B},\bm{v'}\right)\right\rVert_{W_2}&=\left\lVert\bm{\rho}_3\left(\bm{B},\bm{v}\right)-\bm{\rho}_3\left(\bm{B'},\bm{v'}\right)\right\rVert_{W_2}\\
            &\leq\left\lVert\bm{\rho}_3\left(\bm{B},\bm{v}\right)-\bm{\rho}_3\left(\bm{B},\bm{v'}\right)\right\rVert_{W_2}+k\left\lVert\bm{m}-\bm{m'}\right\rVert_1\\
            &=\left\lVert\bm{\rho}_3\left(\bm{B},\bm{v}\right)-\bm{V}_{R_{\bm{d}}}\bm{\rho}_3\left(\bm{B},\bm{v}\right)\bm{V}_{R_{\bm{d}}}^\dagger\right\rVert_{W_2}+k\left\lVert\bm{m}-\bm{m'}\right\rVert_1\\
            &\leq L\left\lVert\bm{d}\right\rVert_1+k\left\lVert\bm{m}-\bm{m'}\right\rVert_1.
        \end{aligned}
    \end{equation}
    Recalling $\bm{d}=\bm{v}\oplus\bm{v'}$ yields the final result.
\end{proof}

We now prove that quantum algorithms satisfying this definition of stability are \emph{overlap concentrated}.
\begin{proposition}[Stable quantum algorithms are overlap concentrated]\label{prop:stable_quant_algs_are_overlap_conc}
    Let $\bm{\mathcal{A}}\left(\bm{X},\omega\right)$ be an $\left(0,L,p_{\mathrm{st}}\right)$-stable quantum algorithm (Definition~\ref{def:gen_stable_qas}), and let $\mathcal{C}\left(\bm{X},\omega,\nu\right)$ the associated randomized classical algorithm induced by measuring $\bm{\mathcal{A}}\left(\bm{X},\omega\right)$ in the computational basis, where $\left(\varUpsilon,\mathbb{P}_\varUpsilon\right)$ with $\upsilon\sim\mathbb{P}_\varUpsilon$ is the probability space governing the randomness over the computational basis measurements. For any choice of $\beta>0$ and $\delta>0$, $\mathcal{C}$ is $\left(\delta,\nu\right)$-overlap concentrated, where:
    \begin{equation}
        \nu=p_{\mathrm{st}}+\frac{1}{\beta^2}+2\exp\left(-\frac{\delta^2 n}{4\left(4\lambda\beta^2 L^2+\beta L\delta/3\right)}\right).
    \end{equation}
\end{proposition}
\begin{proof}
    In what follows, we use $\bm{X}=\left(\bm{m},\bm{v}\right)$ and $\bm{Y}=\left(\bm{m'},\bm{v'}\right)$ as shorthand for two problem instances. As the quantum Wasserstein distance upper bounds the classical Wasserstein distance on classical states (Proposition~\ref{prop:quant_wass_prod_states}), by the contractivity of the Wasserstein metric under local channels (Proposition~\ref{prop:contractivity_w2}),
    \begin{equation}
        \sqrt{\mathbb{E}_{\upsilon\sim\mathbb{P}_\varUpsilon}\left[d_{\mathrm{H}}\left(\mathcal{C}\left(\bm{X},\omega,\nu\right),\mathcal{C}\left(\bm{Y},\omega,\nu\right)\right)^2\right]}\leq\left\lVert\mathcal{C}\left(\bm{X},\omega,\nu\right)-\mathcal{C}\left(\bm{Y},\omega,\nu\right)\right\rVert_{W_2}\leq\left\lVert\bm{\mathcal{A}}\left(\bm{X},\omega\right)-\bm{\mathcal{A}}\left(\bm{Y},\omega\right)\right\rVert_{W_2}.
    \end{equation}
    Condition on $\omega$ such that the probability-$1-p_{\mathrm{st}}$ event:
    \begin{equation}
        \left\lVert\bm{\mathcal{A}}\left(\bm{X},\omega\right)-\bm{\mathcal{A}}\left(\bm{Y},\omega\right)\right\rVert_{W_2}\leq L\left\lVert\bm{X}-\bm{Y}\right\rVert_1.
    \end{equation}
    occurs. We therefore have from Markov's inequality:
    \begin{equation}
        \begin{aligned}
            \mathbb{P}_{\upsilon\sim\mathbb{P}_\varUpsilon}\left[d_{\mathrm{H}}\left(\mathcal{C}\left(\bm{X},\omega,\nu\right),\mathcal{C}\left(\bm{Y},\omega,\nu\right)\right)>\beta L\left\lVert\bm{X}-\bm{Y}\right\rVert_1\right]&\leq\frac{\mathbb{E}_{\upsilon\sim\mathbb{P}_\varUpsilon}\left[d_{\mathrm{H}}\left(\mathcal{C}\left(\bm{X},\omega,\nu\right),\mathcal{C}\left(\bm{Y},\omega,\nu\right)\right)^2\right]}{\beta^2 L^2\left\lVert\bm{X}-\bm{Y}\right\rVert_1^2}\\
            &\leq\frac{1}{\beta^2}.
        \end{aligned}
    \end{equation}
    Now, condition on $\upsilon$ such that this event occurs. Conditioned on these two events, then, $\mathcal{C}\left(\bm{X},\omega,\nu\right)$ is a Lipschitz function of $\bm{X}$. The result then follows from McDiarmid's inequality applied to Lipschitz functions of i.i.d.\ Poisson random variables (see, e.g.,~\cite[Corollary~A.2]{jones_et_al_full}).
\end{proof}

Therefore, DQI with an inverse Lipschitz decoder is obstructed by the branching OGP for the Poisson ensemble of \textsc{MAX-$k$-XOR-SAT} instances.
\dqibppoisson*
\begin{proof}
    The result immediately follows from taking $\beta=n^{1/4}$ in Proposition~\ref{prop:stable_quant_algs_are_overlap_conc} and combining it with the stability of DQI with an inverse Lipschitz decoder (Theorem~\ref{thm:dqi_is_poisson_lipschitz}).
\end{proof}

\end{document}